\newcounter{gadgetCounter}
\newcommand{\gadgetSubfloat}[2][]{\subfloat[#1]{#2}\refstepcounter{gadgetCounter}}
\newcounter{basicGadgetCounter}
\newcommand{\basicGadgetSubfloat}[2][]{\subfloat[#1]{#2}\refstepcounter{basicGadgetCounter}}
\newcommand{\holant}[2]{\operatorname{Holant}(#1\hspace{0.5mm}|\hspace{0.5mm}#2)}
\newcommand{\Holant}{\operatorname{Holant}}
\newcommand{\GR}{(\mathcal{G}\hspace{0.5mm}|\hspace{0.5mm}\mathcal{R})}
\def\borderColor{blue!60}
\def\arrowType{stealth}
\def\scale{0.6}
\def\nodeDist{1.4cm}
\def\startPos{0.8}
\def\endPos{0.2}
\tikzstyle{internal} = [draw,fill,shape=circle]
\tikzstyle{external} = [shape=circle]
\newcommand{\shrinkMatrixTwoRows}[1]{\mbox{\scriptsize$#1$}}
\newcommand{\shrinkMatrixFourRows}[1]{\mbox{\scriptsize$#1$}}
\title{Gadgets and Anti-Gadgets Leading to a Complexity Dichotomy}
\author{
 Jin-Yi Cai\\
 \footnotesize University of Wisconsin-Madison\\
 \footnotesize \texttt{jyc@cs.wisc.edu}
 \and
 Michael Kowalczyk\\
 \footnotesize Northern Michigan University\\
 \footnotesize \texttt{mkowalcz@nmu.edu}
 \and
 Tyson Williams\\
 \footnotesize University of Wisconsin-Madison\\
 \footnotesize \texttt{tdw@cs.wisc.edu}
}
\date{}
\begin{document}
\maketitle
\setcounter{page}{0}
\thispagestyle{empty}


\begin{abstract}
We introduce an idea called anti-gadgets in complexity reductions. These combinatorial gadgets have the effect of erasing the presence of some other graph fragment, as if we had managed to include a negative copy of a graph gadget. We use this idea to prove a complexity dichotomy theorem for the partition function $Z(G)$ on 3-regular directed graphs $G$, where each edge is given a complex-valued binary function $f: \{0,1\}^2 \rightarrow \mathbb{C}$. We show that \[Z(G) = \sum_{\sigma: V(G) \to \{0,1\}} \prod_{(u,v) \in E(G)} f(\sigma(u), \sigma(v)),\] is either computable in polynomial time or \#P-hard, depending explicitly on $f$.

To state the dichotomy theorem more explicitly, we show that the partition function $Z(G)$ on 3-regular directed graphs $G$ is computable in polynomial time when $f$ belongs to one of four classes, which can be described as (1) degenerate, (2) generalized disequality, (3) generalized equality, and (4) affine after a holographic transformation. In all other cases it is \#P-hard. Here class (4), after a holographic transformation, can also be described as an exponential quadratic polynomial of the form $i^{Q(x,y)}$, where $i = \sqrt{-1}$ and the cross term $xy$ in the quadratic polynomial $Q(x,y)$ has an even coefficient. If the input graph $G$ is planar, then an additional class of functions becomes computable in polynomial time, and everything else remains \#P-hard. This additional class is precisely those which can be computed by holographic algorithms with matchgates, making use of the Fisher-Kasteleyn-Temperley algorithm via Pfaffians.

There is a long history in the study of ``Exactly Solved Models'' in statistical physics. In the language of complexity theory, physicists' notion of an ``Exactly Solvable'' system corresponds to a system with a polynomial time computable partition function. A central question is to identify which ``systems'' can be solved ``exactly'' and which ``systems'' are ``difficult''. While in physics, there is no rigorous definition of being ``difficult'', complexity theory supplies the proper notion---\#P-hardness.

The main innovation in this paper is the idea of an anti-gadget. It is analogous to the pairing of a particle and its anti-particle in physics. Coupled with the idea of anti-gadgets, we also introduce a general way of proving \#P-hardness by two types of gadgets called recursive gadgets and projector gadgets. We prove a Group Lemma which spells out a general condition for the technique to succeed. This Group Lemma states that as long as the group generated by the transition matrices of the constructed gadgets is infinite, then one can interpolate all unary functions---a key step in the proof of \#P-hardness. Interpolation is carried out by forming a Vandermonde system and proving that it is of full rank. The anti-gadget concept makes the transition to group theory very natural and seamless.

Not only is the idea of anti-gadgets useful in proving a new complexity dichotomy theorem in counting complexity, we also show that anti-gadgets provide a simple explanation for some miraculous cancellations that were observed in previous results. Furthermore, anti-gadgets can also guide the search for gadget sets more by design than by chance.
\end{abstract}
\nocite{Kas61, Kas67}

\pagebreak

\section{Introduction}
Reduction, the method of transforming one problem to another, and thereby proving the hardness of a problem for an entire complexity class, is arguably the most successful tool in complexity theory to date. When expressed in terms of graph problems, a typical reduction from problem $\Pi_1$ to problem $\Pi_2$ is carried out by designing a \emph{gadget}---a graph fragment with some desirable properties. The reduction starts from an instance graph $G_1$ for $\Pi_1$ and introduces one or more copies of the gadget to obtain an instance graph $G_2$ (or possibly multiple instance graphs) for $\Pi_2$.

The graph $G_2$ may contain a polynomial number of copies of the gadget. \emph{But can it include some negative copies of a gadget?} Of course not; the notion of a \emph{negative} graph fragment seems meaningless. However, in this paper we introduce an idea in reduction theory that has \emph{the effect of} introducing \emph{negative copies} of a gadget in a reduction. More precisely, we show that our new construction idea, when expressed in algebraic terms, has the same effect as \emph{erasing} the presence of some graph fragment. It is \emph{as if} we managed to include a negative copy of a certain gadget. We call this an \emph{anti-gadget}. It is analogous to the pairing of a particle and its anti-particle in physics. We demonstrate the elegance and usefulness of anti-gadgets by proving a new complexity dichotomy theorem in counting complexity where anti-gadgets play a decisive role. Furthermore, we show that anti-gadgets provide a simple explanation for some miraculous cancellations that were observed in previous results~\cite{CK10, CK11}. We also observe how anti-gadgets can guide the search for such gadget sets more by design than by chance.

The new dichotomy theorem that we prove using anti-gadgets can be stated in terms of spin systems on 3-regular graphs with vertices taking values in $\{0,1\}$ and an arbitrary complex-valued edge function $f(\cdot, \cdot)$ that is not necessarily symmetric. Define the \emph{partition function} on $G = (V, E)$ as $Z(G) = \sum_{\sigma: V(G) \to \{0,1\}} \prod_{(u,v) \in E(G)} f(\sigma(u), \sigma(v))$. Depending on the nature of the edge function $f$, we show that the problem $Z(\cdot)$ is either tractable in $\P$ or $\SHARPP$-hard. More precisely, the problem is $\SHARPP$-hard unless the edge function is (\textit{i}) degenerate, (\textit{ii}) generalized equality, (\textit{iii}) generalized disequality, or is (\textit{iv}) affine after a holographic transformation. For these four classes of functions, the problem is computable in polynomial time. Furthermore, if the input is restricted to planar graphs, then the class of tractable problems is augmented by those which are solvable by holographic algorithms with matchgates---all other problems remain $\SHARPP$-hard. Thus, holographic algorithms with matchgates are a \emph{universal} methodology for this class of counting problems over directed 3-regular graphs, which are $\SHARPP$-hard in general, but become tractable on planar graphs.

The main innovation in this paper is the idea of an anti-gadget. In terms of concrete theorems proved, this paper can be viewed as extending previous dichotomy theorems for the complexity of the spin system for {\em symmetric} edge functions~\cite{CLX08, CLX09b, KC10, CK10, CK11} to {\em asymmetric} edge functions. The new dichotomy theorem holds over 3-regular graphs, for {\em any} (not necessarily symmetric) complex-valued edge function. In physics, the 0-1 vertex assignments are called spins, and the edge function values $f(\sigma(u), \sigma(v))$ correspond to local interactions between particles. There is a long history in the statistical physics community in the study of ``Exactly Solved Models''~\cite{Bax82, TF61}. In the language of modern complexity theory, physicists' notion of an ``Exactly Solvable'' system corresponds to a system with polynomial time computable partition function. A central question is to identify which ``systems'' can be solved ``exactly'' and which ``systems'' are ``difficult''. While in physics, there is no rigorous definition of being ``difficult'', complexity theory supplies the proper notion---$\SHARPP$-hardness.

The class of problems we study in this paper has a close connection with holant problems~\cite{XZZ07, Val08, CLX09a, CLX10, CHL10, MikeThesis, CL11, GLV11, CLX11}. We use holographic algorithms~\cite{Val08, CL11} to prove both tractability and $\SHARPP$-hardness. In general, holant problems are a natural class of counting problems which can encode all counting Constraint Satisfaction Problems (\#CSP)~\cite{CKS01} and graph homomorphisms. Dichotomy theorems for graph homomorphisms~\cite{Lov67, BG05, CC10, DGP07, DG00, GGJT10, CCL09, HN90} and \#CSP~\cite{Bul06, Bul08, BD07, BG05, CHL10, CCL11, DR10, DGJ09, GHLX11, CK11} have been a very active research area. Compared to \#CSP and graph homomorphisms, the main difficulty here is bounded degree, which makes hardness proofs more challenging, and for a good reason---there are indeed more tractable cases.

\section{Notation and Background}
The partition function on directed graphs is a special case of Holant problems defined as follows. A \emph{signature grid} $\Omega = (G, \mathcal{F}, \pi)$ consists of a labeled undirected graph $G = (V,E)$ where $\pi$ labels each vertex $v \in V$ with a function $f_v \in \mathcal{F}$. The inputs of $f_v$ are identified with the incident edges $E(v)$ at $v$. For any edge assignment $\xi: E \rightarrow \{0,1\}$, $f_v(\xi \mid_{E(v)})$ is the evaluation, and the counting problem is to compute $\Holant_{\Omega} = \sum_{\xi : E \to \{0,1\}} \prod_{v \in V} f_v(\xi \mid_{E(v)})$.

Given any directed 3-regular graph $G = (V, E)$, its edge-vertex incidence graph $G'$ has vertex set $V(G') = V \union E$ and edge set $E(G') = \{(v, e) \mid \text{$v$ is incident to $e$ in $G$}\}$. The graph $G'$ is bipartite and $(2,3)$-regular. If we label each $v \in V \subset V(G')$ with the \textsc{Equality} function $=_3$ of arity 3 and each $e \in E \subset V(G')$ with the original edge function $f$ from $G$, then the Holant value on $G'$ is exactly the partition function $Z(G)$. Essentially $=_3$ forces all incident edges in $G'$ at a vertex $v \in V \subset V(G')$ to take the same value, which reduces to vertex assignments on $V$, as in $Z(G)$. We frequently take this bipartite perspective of $Z(G)$ as holant problems in order to use holographic transformations, which is more convenient on bipartite graphs.

A function $f: \{0,1\}^k \rightarrow \C$ can be denoted by $(f_0, f_1, \ldots, f_{2^k-1})$, where $f_i$ is the value of $f$ on the $i$th lexicographical bit string of length $k$. They are also called \emph{signatures}. A signature $f$ of arity $k$ is degenerate if $f$ is a tensor product of unary signatures: $f = (a_1, b_1) \tensor \cdots \tensor (a_k, b_k)$. For $(2,3)$-regular bipartite graphs $(U, V, E)$, if every $u \in U$ is labeled $f$ and every $v \in V$ is labeled $r$, then we also use $\holant{f}{r}$ to denote the holant problem. Our main result is a dichotomy theorem for $\holant{f}{{=}_3}$, for an arbitrary binary function $f = (w,x,y,z)$, where $w,x,y,z \in \C$. It has the same complexity as $\holant{c f}{{=}_3}$ for any nonzero $c \in \C$, hence we often normalize a signature by a nonzero scalar. More generally, if $\mathcal{G}$ and $\mathcal{R}$ are finite sets of signatures and the vertices of~$U$ (resp.~$V$) are labeled by signatures from $\mathcal{G}$ (resp.~$\mathcal{R}$), then we also use $\holant{\mathcal{G}}{\mathcal{R}}$ to denote the bipartite holant problem. Signatures in $\mathcal{G}$ are called \emph{generators} and signatures in $\mathcal{R}$ are called \emph{recognizers}.

Signatures from $\mathcal{G}$ and $\mathcal{R}$ are available at each vertex of the appropriate part of an input graph. Instead of a single vertex, we can use graph fragments to generalize this notion. A $\GR$-gate $\Gamma$ is a triple $(H, \mathcal{G}, \mathcal{R})$, where $H = (U,V,E,D)$ is a bipartite graph with some dangling edges $D$. Other than these dangling edges, a $\GR$-gate is the same as a signature grid. The purpose of dangling edges is to provide input and output edges. In $H = (U,V,E,D)$, each node in $U$ (resp.~$V$) is assigned a function in $\mathcal{G}$ (resp.~$\mathcal{R}$), $E$ are the regular edges, and $D$ are the dangling edges. The $\GR$-gate $\Gamma$ defines a function: $\Gamma(y_1, y_2, \ldots, y_q) = \sum_{(x_1, x_2, \ldots, x_p) \in \{0,1\}^p} H(x_1, x_2, \ldots, x_p, y_1, y_2, \ldots, y_q)$, where $p=|E|$, $q=|D|$, $(y_1, y_2, \ldots, y_q) \in \{0,1\}^q$ denotes an assignment on the dangling edges, and $H(x_1, x_2, \ldots, x_p, y_1, y_2, \ldots, y_q)$ denotes the product of evaluations at every vertex of $H$. We also call this function the signature of the $\GR$-gate $\Gamma$. A $\GR$-gate can be used in a signature grid as if it is just a single node with the same signature. Signature grids on bipartite $(2,3)$-regular graphs can be identified with directed 3-regular graphs, where we merge two incident edges at every vertex $w$ of degree 2, and label the new edge by the arity 2 signature $f_w$. The edge is oriented from $u$ to $v$ if $f_w = f_w(u, v)$. Figure~\ref{fig:gadget:bipartiteVsDirected} gives an example of a $\GR$-gate both as a bipartite (2,3)-regular graph and as an equivalent directed 3-regular graph.

\begin{figure}[t]
 \def\vsWidth{6cm}
 \def\vsScale{0.45}
 \def\vsNodeDist{1cm}
 \def\vsStartPos{0.9}
 \def\vsEndPos{0.1}
 \centering
 \captionsetup[subfigure]{width=\vsWidth}
 \subfloat[As a bipartite (2,3)-regular graph]{
  \makebox[\vsWidth][c]{
  \begin{tikzpicture}[scale=\vsScale,transform shape,>=\arrowType,node distance=\vsNodeDist,semithick]
   \node[external] (0)              {};
   \node[internal] (2) [right of=0] {};
   \node[internal] (4) [right of=2] {};
   \node[external] (6) [right of=4] {};
   \node[internal] (8) [below of=4] {};
   \node[internal] (5) [below of=8] {};
   \node[external] (7) [right of=5] {};
   \node[internal] (3) [left  of=5] {};
   \node[external] (1) [left  of=3] {};
   \path (0) edge [-] node[pos=\vsStartPos] (e1) {} (2)
         (1) edge [-] node[pos=\vsStartPos] (e2) {} (3)
         (2) edge [-]                               (4)
         (3) edge [-]                               (5)
         (4) edge [-] node[pos=\vsEndPos]   (e3) {} (6)
             edge [-]                               (8)
         (5) edge [-] node[pos=\vsEndPos]   (e4) {} (7)
             edge [-]                               (8);
   \begin{pgfonlayer}{background}
    \node[draw=\borderColor,thick,rounded corners,fit = (2) (3) (4) (5) (8) (e1) (e2) (e3) (e4)] {};
   \end{pgfonlayer}
  \end{tikzpicture}}}
 \quad
 \subfloat[As a directed 3-regular graph]{
  \makebox[\vsWidth][c]{
  \begin{tikzpicture}[scale=\vsScale,transform shape,>=\arrowType,node distance=\vsNodeDist,semithick]
   \node[external] (0)              {};
   \node[external] (2) [right of=0] {};
   \node[internal] (4) [right of=2] {};
   \node[external] (6) [right of=4] {};
   \node[external] (8) [below of=4] {};
   \node[internal] (5) [below of=8] {};
   \node[external] (7) [right of=5] {};
   \node[external] (3) [left  of=5] {};
   \node[external] (1) [left  of=3] {};
   \path (0) edge [->] node[pos=\vsStartPos] (e1) {} (4)
         (1) edge [->] node[pos=\vsStartPos] (e2) {} (5)
         (4) edge [->]                               (5)
             edge [-]  node[pos=\vsEndPos]   (e3) {} (6)
         (5) edge [-]  node[pos=\vsEndPos]   (e4) {} (7);
   \begin{pgfonlayer}{background}
    \node[draw=\borderColor,thick,rounded corners,fit = (2) (3) (4) (5) (8) (e1) (e2) (e3) (e4)] {};
   \end{pgfonlayer}
  \end{tikzpicture}}}
 \caption{The two representations of a $\GR$-gate}
 \label{fig:gadget:bipartiteVsDirected}
\end{figure}
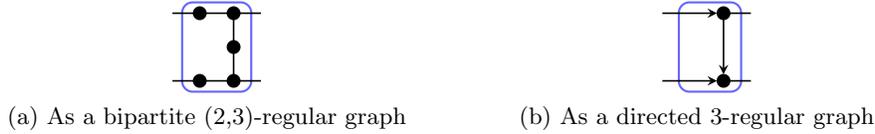

We designate dangling edges as either \emph{leading} edges or \emph{trailing} edges. Each $\GR$-gate is pictured with leading edges protruding to the left and any trailing edges to the right. Suppose a $\GR$-gate has $m$ leading edges and $n$ trailing edges. Then the signature of the $\GR$-gate can be organized as a $2^m$-by-$2^n$ \emph{transition matrix} $M$, where the row (resp.~column) is indexed by a $\{0,1\}$-assignment to the leading (resp.~trailing) edges. When pictured, if $e_1, \ldots, e_n$ are $n$ dangling edges in top-down order, then $b_1 \ldots b_n \in \{0,1\}^n$ is the index for the assignment where $e_i$ is assigned $b_i$. We denote the transition matrix of Gadget $i$ as $M_i$ unless otherwise noted.

The constructions in this paper are primarily based upon two kinds of $\GR$-gates, which we call \emph{recursive gadgets} and \emph{projector gadgets}. An \emph{arity-$d$ recursive $\GR$-gadget} is a $\GR$-gate with $d$ leading edges and $d$ trailing edges. A $\GR$-gate is a \emph{projector $\GR$-gadget from arity $n$ to $m$} if it has $m$ leading edges and $n$ trailing edges. Internally, for both recursive and projector gadgets, we require that all leading edges connect to a degree 2 vertex (equivalently a directed edge), while all trailing edges connect to a degree 3 vertex. These gadget types are defined in this way to maintain the bipartite structure of the signature grid when we merge trailing edges of one gadget with leading edges of another (see Figure~\ref{fig:gadget:interpolationExamples}).

\begin{figure}[t]
 \def\exWidth{3.2cm}
 \def\exScale{0.45}
 \def\exNodeDist{1cm}
 \centering
 \subfloat[Projector gadget]{
  \makebox[\exWidth][c]{
  \begin{tikzpicture}[scale=\exScale,transform shape,>=\arrowType,node distance=\nodeDist,semithick]
   \node[external]  (0)              {};
   \node[internal]  (1) [right of=0] {};
   \node[internal]  (2) [above right of=1] {};
   \node[internal]  (3) [below right of=1] {};
   \node[external]  (4) [right of=2] {};
   \node[external]  (5) [right of=3] {};
   \node[external]  (6) [above of=4] {};
   \node[external]  (7) [below of=5] {};
   \node[external]  (8) [above of=2,yshift=-5] {};
   \node[external]  (9) [below of=3,yshift=5]  {};
   \path (0) edge [->] node[pos=\startPos] (e1) {} (1)
         (1) edge [->]                             (2)
             edge [->]                             (3)
         (2) edge [-]  node[pos=\endPos]   (e2) {} (4)
             edge [-]  node[pos=\endPos]   (e3) {} (6)
         (3) edge [-]  node[pos=\endPos]   (e4) {} (5)
             edge [-]  node[pos=\endPos]   (e5) {} (7);
   \begin{pgfonlayer}{background}
    \node[draw=\borderColor,thick,rounded corners,fit = (1) (2) (3) (8) (9) (e1) (e2) (e3) (e4) (e5)] {};
   \end{pgfonlayer}
  \end{tikzpicture}}}
 \subfloat[Recursive gadget]{
  \makebox[\exWidth][c]{
  \begin{tikzpicture}[scale=\exScale,transform shape,>=\arrowType,node distance=\nodeDist,semithick]
   \node[external]  (0)              {};
   \node[external]  (1) [below of=0] {};
   \node[internal]  (2) [right of=1] {};
   \node[internal]  (3) [right of=0] {};
   \node[external]  (4) [right of=2] {};
   \node[external]  (5) [right of=3] {};
   \node[external] (n1) [above of=3,yshift=-35] {};
   \node[external] (n2) [below of=2,yshift=35]  {};
   \path (0) edge [->] node[pos=\startPos] (e1) {} (3)
         (1) edge [->] node[pos=\startPos] (e2) {} (2)
         (2) edge [<-]                             (3)
             edge [-]  node[pos=\endPos]   (e3) {} (4)
         (3) edge [-]  node[pos=\endPos]   (e4) {} (5);
   \begin{pgfonlayer}{background}
    \node[draw=\borderColor,thick,rounded corners,fit = (n1) (n2) (e1) (e2) (e3) (e4)] {};
   \end{pgfonlayer}
  \end{tikzpicture}}}
 \subfloat[Planar embedding of interpolation construction]{
  \begin{tikzpicture}[scale=\exScale,transform shape,>=\arrowType,node distance=\nodeDist,semithick]
   \node[external]   (0)               {};
   \node[internal]   (1) [right of =0] {};
   \node[internal]   (2) [above right of=1] {};
   \node[internal]   (3) [below right of=1] {};
   \node[external]   (5) [right of =2] {};
   \node[external]   (4) [above of =5] {};
   \node[external]   (6) [right of =3] {};
   \node[external]   (7) [below of =6] {};
   \node[internal]   (8) [right of =6] {};
   \node[internal]   (9) [right of =7] {};
   \node[internal]  (10) [right of =8] {};
   \node[internal]  (11) [right of =9] {};
   \node[external]  (12) [right of=10] {};
   \node[external]  (13) [right of=11] {};
   \node[external]  (14) [right of=12] {};
   \node[external]  (15) [right of=13] {};
   \node[internal]  (16) [right of=14] {};
   \node[internal]  (17) [right of=15] {};
   \path let
          \p1 = (12),
          \p2 = (15)
         in
          node[external] (n1) at (\x1 / 2 + \x2 / 2, \y1 / 2 + \y2 / 2) {\Huge \dots};
   \node[external]  (18) [right of=16] {};
   \node[external]  (19) [right of=17] {};
   \node[external]  (20) [right of=18] {};
   \node[external]  (21) [right of=19] {};
   \node[external]  (23) [right of= 5] {};
   \node[external]  (24) [right of=23] {};
   \node[external]  (25) [right of=24] {};
   \node[external]  (26) [right of=25] {};
   \node[external]  (27) [right of=26] {};
   \node[external]  (28) [right of=27] {};
   \node[external]  (29) [right of=28] {};
   \node[external]  (30) [above of=29] {};
   \path let
          \p1 = (18.west),
          \p2 = (21)
         in
          node[external] (n2) at (\x1 / 2 + \x2 / 2, \y1 / 2 + \y2 / 2) {\Huge \dots};
   \path  (0) edge [->] node[pos=\startPos] (e1) {}  (1)
          (1) edge [->]                              (2)
              edge [->]                              (3)
          (2) edge [<-]                              (4.center)
              edge [<-]                             (29)
          (3) edge [->]                              (8)
              edge [-]                               (7.center)
   (4.center) edge [-]                              (30)
   (7.center) edge [->]                              (9)
          (8) edge [->]                              (9)
              edge [->]                             (10)
          (9) edge [->]                             (11)
         (10) edge [->]                             (11)
              edge [-]                              (12)
         (11) edge [-]                              (13)
         (14) edge [->]                             (16)
         (15) edge [->]                             (17)
         (16) edge [->]                             (17)
              edge [-]                              (18)
         (17) edge [-]                              (19)
         (20) edge [-, out=0, in=0]                 (29)
              edge [-, out=180, in=0, looseness=0]  (20)
         (21) edge [-, out=180, in=0, looseness=0]  (21)
              edge [-, out=0, in=0] coordinate (c1) (30)
         (29) edge [-, out=0, in=180, looseness=0]  (29)
         (30) edge [-, out=0, in=180, looseness=0]  (30);
   \begin{pgfonlayer}{background}
    \node[draw=\borderColor,thick,rounded corners,fit = (1) (21) (30) (e1) (c1)] {};
   \end{pgfonlayer}
  \end{tikzpicture} \label{gadget:example:interpolationConstruction}}
 \caption{Arity 4 to 1 projector and recursive gadgets and the construction for interpolation}
 \label{fig:gadget:interpolationExamples}
\end{figure}
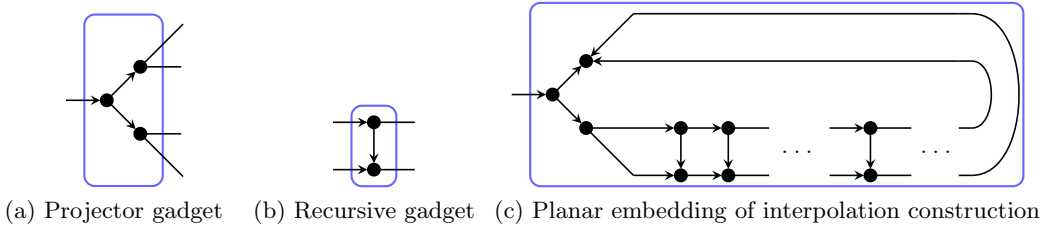

\section{Gadgets and Anti-Gadgets} \label{sec:gadgetsAndAnti-Gadgets}
In this section, we start with a gentle primer to the association between a combinatorial gadget and its signature written as a transition matrix. We show that one can typically express the transition matrix starting from a few of the most basic gadget components and their matrices as atomic building blocks, after applying some well defined operations. We then introduce anti-gadgets and explain why they are so effective.

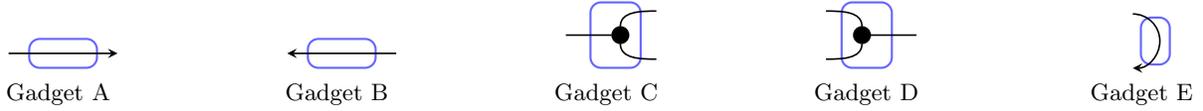
\begin{figure}[t]
 \centering
 \captionsetup[subfigure]{labelformat=empty}
 \basicGadgetSubfloat[Gadget \ref{gadget:simple:eRight}]{
  \begin{tikzpicture}[scale=\scale,transform shape,>=\arrowType,node distance=\nodeDist,semithick]
   \node[external] (0)              {};
   \node[external] (1) [right of=0] {};
   \node[external] (2) [right of=1] {};
   \path (0) edge [->] node[pos=0.3] (e1) {} node[pos=0.7] (e2) {} (2);
   \begin{pgfonlayer}{background}
    \node[draw=\borderColor,thick,rounded corners,fit = (1) (e1) (e2)] {};
   \end{pgfonlayer}
  \end{tikzpicture}} \label{gadget:simple:eRight}
 \hfill
 \basicGadgetSubfloat[Gadget \ref{gadget:simple:eLeft}]{
  \begin{tikzpicture}[scale=\scale,transform shape,>=\arrowType,node distance=\nodeDist,semithick]
   \node[external] (0)              {};
   \node[external] (1) [right of=0] {};
   \node[external] (2) [right of=1] {};
   \path (0) edge [<-] node[pos=0.3] (e1) {} node[pos=0.7] (e2) {} (2);
   \begin{pgfonlayer}{background}
    \node[draw=\borderColor,thick,rounded corners,fit = (1) (e1) (e2)] {};
   \end{pgfonlayer}
  \end{tikzpicture}} \label{gadget:simple:eLeft}
 \hfill
 \basicGadgetSubfloat[Gadget \ref{gadget:simple:projector}]{
  \begin{tikzpicture}[scale=\scale,transform shape,>=\arrowType,node distance=\nodeDist,semithick]
   \node[external] (0)              {};
   \node[internal] (1) [right of=0] {};
   \node[external] (2) [above right of=1,yshift=-13] {};
   \node[external] (3) [below right of=1,yshift=13]  {};
   \path (0) edge [-]                  node[pos=\startPos] (e1) {} (1)
         (1) edge [-, out=90, in=180]  node[pos=0.4]       (e2) {} (2)
             edge [-, out=-90, in=180] node[pos=0.4]       (e3) {} (3);
   \begin{pgfonlayer}{background}
    \node[draw=\borderColor,thick,rounded corners,fit = (1) (e1) (e2) (e3)] {};
   \end{pgfonlayer}
  \end{tikzpicture}} \label{gadget:simple:projector}
 \hfill
 \basicGadgetSubfloat[Gadget \ref{gadget:simple:expander}]{
  \begin{tikzpicture}[scale=\scale,transform shape,>=\arrowType,node distance=\nodeDist,semithick]
   \node[external] (0)             {};
   \node[internal] (1) [left of=0] {};
   \node[external] (2) [above left of=1,yshift=-13] {};
   \node[external] (3) [below left of=1,yshift=13]  {};
   \path (0) edge [-]                node[pos=\startPos] (e1) {} (1)
         (1) edge [-, out=90,  in=0] node[pos=0.4]       (e2) {} (2)
             edge [-, out=-90, in=0] node[pos=0.4]       (e3) {} (3);
   \begin{pgfonlayer}{background}
    \node[draw=\borderColor,thick,rounded corners,fit = (1) (e1) (e2) (e3)] {};
   \end{pgfonlayer}
  \end{tikzpicture}} \label{gadget:simple:expander}
 \hfill
 \def\exWidth{1.8cm}
 \captionsetup[subfigure]{width=\exWidth}
 \basicGadgetSubfloat[Gadget \ref{gadget:simple:linearStarter}]{
  \makebox[\exWidth][c]{
  \begin{tikzpicture}[scale=\scale,transform shape,>=\arrowType,node distance=\nodeDist,semithick]
   \node[external] (0) at (0.5cm,-0.4cm) {};
   \node[external] (1) at (0.5cm,-0.8cm) {};
  \draw[>=stealth,->] (0,0) arc (90:-90:0.6cm);
   \begin{pgfonlayer}{background}
    \node[draw=\borderColor,thick,rounded corners,fit = (0) (1)] {};
   \end{pgfonlayer}
  \end{tikzpicture}}} \label{gadget:simple:linearStarter}
 \caption{Five basic gadget components}
 \label{gadget:simple}
\end{figure}

We start with five basic gadget components as depicted in Figure~\ref{gadget:simple}. Their signature matrices are $\ref{gadget:simple:eRight} = \shrinkMatrixTwoRows{\begin{bmatrix} w & x\\ y & z \end{bmatrix}}$, $\ref{gadget:simple:eLeft} = \shrinkMatrixTwoRows{\begin{bmatrix} w & y\\ x & z \end{bmatrix}}$, $\ref{gadget:simple:projector} = \shrinkMatrixTwoRows{\begin{bmatrix} 1 & 0 & 0 & 0\\ 0 & 0 & 0 & 1 \end{bmatrix}}$, $\ref{gadget:simple:expander} = \shrinkMatrixFourRows{\begin{bmatrix} 1 & 0\\ 0 & 0\\ 0 & 0\\ 0 & 1 \end{bmatrix}}$, and $\ref{gadget:simple:linearStarter} = \shrinkMatrixFourRows{\begin{bmatrix} w\\ x\\ y\\ z \end{bmatrix}}$.

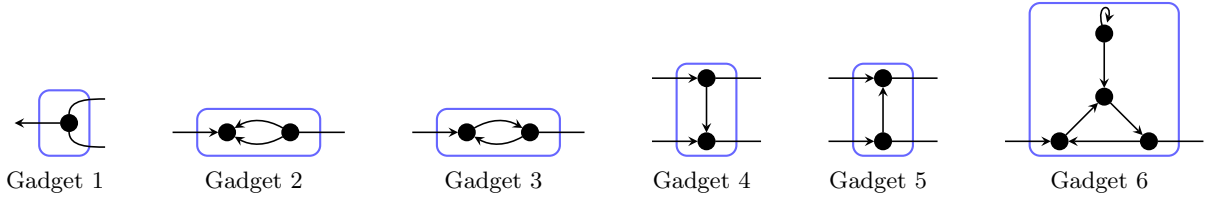
\begin{figure}[t]
 \centering
 \captionsetup[subfigure]{labelformat=empty}
 \gadgetSubfloat[Gadget \ref{gadget:main:finish:0:0}]{
  \begin{tikzpicture}[scale=\scale,transform shape,>=\arrowType,node distance=\nodeDist,semithick]
   \node[external] (0)              {};
   \node[internal] (1) [right of=0] {};
   \node[external] (2) [above right of=1,yshift=-13] {};
   \node[external] (3) [below right of=1,yshift=13]  {};
   \path (0) edge [<-]                 node[pos=\startPos] (e1) {} (1)
         (1) edge [-, out=90, in=180]  node[pos=0.4]       (e2) {} (2)
             edge [-, out=-90, in=180] node[pos=0.4]       (e3) {} (3);
   \begin{pgfonlayer}{background}
    \node[draw=\borderColor,thick,rounded corners,fit = (1) (e1) (e2) (e3)] {};
   \end{pgfonlayer}
  \end{tikzpicture}} \label{gadget:main:finish:0:0}
 \hfill
 \gadgetSubfloat[Gadget \ref{gadget:main:unary:0:111}]{
  \begin{tikzpicture}[scale=\scale,transform shape,>=\arrowType,node distance=\nodeDist,semithick]
   \node[external] (0)              {};
   \node[internal] (2) [right of=0] {};
   \node[internal] (1) [right of=2] {};
   \node[external] (3) [right of=1] {};
   \path (0) edge [->]             node[pos=\startPos] (e1) {} (2)
         (1) edge [->, bend right] node[midway]        (e2) {} (2)
             edge [->, bend left]  node[midway]        (e3) {} (2)
             edge [-]              node[pos=\endPos]   (e4) {} (3);
   \begin{pgfonlayer}{background}
    \node[draw=\borderColor,thick,rounded corners,fit = (1) (2) (e1) (e2) (e3) (e4)] {};
   \end{pgfonlayer}
  \end{tikzpicture}} \label{gadget:main:unary:0:111}
 \hfill
 \gadgetSubfloat[Gadget \ref{gadget:main:unary:0:110}]{
  \begin{tikzpicture}[scale=\scale,transform shape,>=\arrowType,node distance=\nodeDist,semithick]
   \node[external] (0)              {};
   \node[internal] (2) [right of=0] {};
   \node[internal] (1) [right of=2] {};
   \node[external] (3) [right of=1] {};
   \path (0) edge [->]             node[pos=\startPos] (e1) {} (2)
         (1) edge [<-, bend right] node[midway]        (e2) {} (2)
             edge [->, bend left]  node[midway]        (e3) {} (2)
             edge [-]              node[pos=\endPos]   (e4) {} (3);
   \begin{pgfonlayer}{background}
    \node[draw=\borderColor,thick,rounded corners,fit = (1) (2) (e1) (e2) (e3) (e4)] {};
   \end{pgfonlayer}
  \end{tikzpicture}} \label{gadget:main:unary:0:110}
 \hfill
 \gadgetSubfloat[Gadget \ref{gadget:main:binary:0:110}]{
  \begin{tikzpicture}[scale=\scale,transform shape,>=\arrowType,node distance=\nodeDist,semithick]
   \node[external] (0)              {};
   \node[external] (1) [below of=0] {};
   \node[internal] (2) [right of=1] {};
   \node[internal] (3) [right of=0] {};
   \node[external] (4) [right of=2] {};
   \node[external] (5) [right of=3] {};
   \path (0) edge [->] node[pos=\startPos] (e1) {} (3)
         (1) edge [->] node[pos=\startPos] (e2) {} (2)
         (2) edge [<-]                             (3)
             edge [-]  node[pos=\endPos]   (e3) {} (4)
         (3) edge [-]  node[pos=\endPos]   (e4) {} (5);
   \begin{pgfonlayer}{background}
    \node[draw=\borderColor,thick,rounded corners,fit = (2) (3) (e1) (e2) (e3) (e4)] {};
   \end{pgfonlayer}
  \end{tikzpicture}} \label{gadget:main:binary:0:110}
 \hfill
 \gadgetSubfloat[Gadget \ref{gadget:main:binary:0:111}]{
  \begin{tikzpicture}[scale=\scale,transform shape,>=\arrowType,node distance=\nodeDist,semithick]
   \node[external] (0)              {};
   \node[external] (1) [below of=0] {};
   \node[internal] (2) [right of=1] {};
   \node[internal] (3) [right of=0] {};
   \node[external] (4) [right of=2] {};
   \node[external] (5) [right of=3] {};
   \path (0) edge [->] node[pos=\startPos] (e1) {} (3)
         (1) edge [->] node[pos=\startPos] (e2) {} (2)
         (2) edge [->]                             (3)
             edge [-]  node[pos=\endPos]   (e3) {} (4)
         (3) edge [-]  node[pos=\endPos]   (e4) {} (5);
   \begin{pgfonlayer}{background}
    \node[draw=\borderColor,thick,rounded corners,fit = (2) (3) (e1) (e2) (e3) (e4)] {};
   \end{pgfonlayer}
  \end{tikzpicture}} \label{gadget:main:binary:0:111}
 \hfill
 \gadgetSubfloat[Gadget \ref{gadget:main:unary:4:101010}]{
  \begin{tikzpicture}[scale=\scale,transform shape,>=\arrowType,node distance=\nodeDist,semithick]
   \node[external] (0)                    {};
   \node[internal] (4) [right of=0]       {};
   \node[internal] (3) [above right of=4] {};
   \node[internal] (1) [below right of=3] {};
   \node[internal] (2) [above of=3]       {};
   \node[external] (5) [right of=1]       {};
   \path (0) edge [->]             node[pos=\startPos] (e1) {} (4)
         (1) edge [<-]                                         (3)
             edge [->]                                         (4)
         (2) edge [<-, loop above] coordinate          (c1)    (2)
             edge [->]                                         (3)
         (3) edge [<-]                                         (4)
         (1) edge [-]              node[pos=\endPos]   (e2) {} (5);
   \begin{pgfonlayer}{background}
    \node[draw=\borderColor,thick,rounded corners,fit = (1) (2) (3) (4) (e1) (e2) (c1)] {};
   \end{pgfonlayer}
  \end{tikzpicture}} \label{gadget:main:unary:4:101010}
 \caption{Recursive and projector gadgets}
\end{figure}


The first operation is matrix product, which corresponds to sequentially connecting two gadgets together. For example, Gadget~\ref{gadget:main:finish:0:0} is a simple composition of Gadget~\ref{gadget:simple:eLeft} and Gadget~\ref{gadget:simple:projector}, and thus its transition matrix is the matrix product $\ref{gadget:simple:eLeft} \ref{gadget:simple:projector} = \shrinkMatrixTwoRows{\begin{bmatrix} w & 0 & 0 & y\\ x & 0 & 0 & z \end{bmatrix}}$ (see Figure~\ref{gadget:decomposition:finish:0:0}). The second operation is tensor product, which corresponds to putting two gadgets in parallel (two disconnected parts). The transition matrix of Gadget~\ref{gadget:main:unary:0:111} is $\ref{gadget:simple:eRight} \ref{gadget:simple:projector} \ref{gadget:simple:eLeft}^{\tensor 2} \ref{gadget:simple:expander} = \shrinkMatrixTwoRows{\begin{bmatrix} w^3 + x^3 & w y^2 + x z^2\\ w^2 y + x^2 z & y^3 + z^3 \end{bmatrix}}$, where $\ref{gadget:simple:eLeft}^{\tensor 2}$ corresponds to the parallel part of the gadget and is clearly visible in Figure~\ref{gadget:decomposition:unary:0:111}. Similarly, Gadget~\ref{gadget:main:unary:0:110} has signature matrix $\ref{gadget:simple:eRight} \ref{gadget:simple:projector} (\ref{gadget:simple:eRight} \tensor \ref{gadget:simple:eLeft}) \ref{gadget:simple:expander}$. Note that the order of the tensor product is to make the top leading edge for the row (respectively, the top trailing edge for the column) the most significant bit. The transition matrices of Gadgets~\ref{gadget:main:binary:0:110} and~\ref{gadget:main:binary:0:111} are respectively $\shrinkMatrixTwoRows{\begin{bmatrix} w & x\\ y & z \end{bmatrix}^{\tensor 2}} \diag(w,x,y,z)$ and $\shrinkMatrixTwoRows{\begin{bmatrix} w & x\\ y & z \end{bmatrix}^{\tensor 2}} \diag(w,y,x,z)$ and can be mechanically derived by our gadgetry calculus as $\ref{gadget:simple:eRight}^{\tensor 2} (\ref{gadget:simple:projector} \tensor I_2) (I_2 \tensor \ref{gadget:simple:eRight} \tensor I_2) (I_2 \tensor \ref{gadget:simple:expander})$ and $\ref{gadget:simple:eRight}^{\tensor 2} (\ref{gadget:simple:projector} \tensor I_2) (I_2 \tensor \ref{gadget:simple:eLeft} \tensor I_2) (I_2 \tensor \ref{gadget:simple:expander})$. The composition of Gadget~\ref{gadget:main:binary:0:110} is illustrated in Figure~\ref{gadget:decomposition:binary:0:110}. Gadget $\ref{gadget:simple:linearStarter}$ is used to create a self-loop, as in Gadget~\ref{gadget:main:unary:4:101010}, which has transition matrix $\ref{gadget:simple:eRight} \ref{gadget:simple:projector} (\ref{gadget:simple:eRight} \tensor I_2) (\ref{gadget:simple:projector} \tensor I_2) (\ref{gadget:simple:eLeft} \ref{gadget:simple:projector} \ref{gadget:simple:linearStarter} \tensor I_4) (\ref{gadget:simple:eRight} \tensor \ref{gadget:simple:eLeft}) \ref{gadget:simple:expander}$. A composition is given in Figure~\ref{fig:gadget:decomposition:unary:4:101010} of the appendix.

\begin{figure}[b]
 \def\decompWidth{4cm}
 \def\decompScale{0.45}
 \def\dottedNodeDist{0.8cm}
 \centering
 \captionsetup[subfigure]{width=\decompWidth}
 \subfloat[Composition of Gadget \ref{gadget:main:finish:0:0}]{
  \makebox[\decompWidth][c]{
  \begin{tikzpicture}[scale=\decompScale,transform shape,>=\arrowType,node distance=\nodeDist,semithick]
   \node[external] (0)              {};
   \node[external] (1) [right of=0] {};
   \node[external] (2) [right of=1] {};
   \node[external] (3) [right of=2,node distance=\dottedNodeDist] {};
   \node[internal] (4) [right of=3] {};
   \node[external] (5) [above right of=4,yshift=-13] {};
   \node[external] (6) [below right of=4,yshift=13]  {};
   \path (0) edge [<-]                    node[pos=0.3] (e1) {} node[pos=0.7] (e2) {} (2)
         (2) edge [-, dotted, very thick]                                             (3)
         (3) edge [-]                     node[pos=\startPos] (e3) {}                 (4)
         (4) edge [-, out=90, in=180]     node[pos=0.4]       (e4) {}                 (5)
             edge [-, out=-90, in=180]    node[pos=0.4]       (e5) {}                 (6);
   \begin{pgfonlayer}{background}
    \node[draw=\borderColor,thick,rounded corners,fit = (1) (e1) (e2)] {};
    \node[draw=\borderColor,thick,rounded corners,fit = (4) (e3) (e4) (e5)] {};
   \end{pgfonlayer}
  \end{tikzpicture} \label{gadget:decomposition:finish:0:0}}}
 \hfill
 \subfloat[Composition of Gadget \ref{gadget:main:unary:0:111}]{
  \begin{tikzpicture}[scale=\decompScale,transform shape,>=\arrowType,node distance=\nodeDist,semithick]
   \node[external]  (0)               {};
   \node[external]  (1) [right of= 0] {};
   \node[external]  (2) [right of= 1] {};
   \node[external]  (3) [right of= 2,node distance=\dottedNodeDist] {};
   \node[internal]  (4) [right of= 3] {};
   \node[external]  (5) [above right of=4,yshift=-13] {};
   \node[external]  (6) [below right of=4,yshift=13]  {};
   \node[external]  (7) [right of= 5,node distance=\dottedNodeDist] {};
   \node[external]  (8) [right of= 6,node distance=\dottedNodeDist] {};
   \node[external]  (9) [right of= 7] {};
   \node[external] (10) [right of= 8] {};
   \node[external] (11) [right of= 9] {};
   \node[external] (12) [right of=10] {};
   \node[external] (13) [right of=11,node distance=\dottedNodeDist] {};
   \node[external] (14) [right of=12,node distance=\dottedNodeDist] {};
   \node[internal] (15) [below right of=13,yshift=13] {};
   \node[external] (16) [right of=15] {};
   \path (0) edge [->]                    node[pos=0.3] (e1) {} node[pos=0.7] (e2) {}  (2)
         (2) edge [-, dotted, very thick]                                              (3)
         (3) edge [-]                     node[pos=\startPos] (e3) {}                  (4)
         (4) edge [-, out=90, in=180]     node[pos=0.4]       (e4) {}                  (5)
             edge [-, out=-90, in=180]    node[pos=0.4]       (e5) {}                  (6)
         (5) edge [-, dotted, very thick]                                              (7)
         (6) edge [-, dotted, very thick]                                              (8)
        (11) edge [->]                    node[pos=0.3] (e6) {} node[pos=0.7] (e7) {}  (7)
        (12) edge [->]                    node[pos=0.3] (e8) {} node[pos=0.7] (e9) {}  (8)
        (11) edge [-, dotted, very thick]                                             (13)
        (12) edge [-, dotted, very thick]                                             (14)
        (15) edge [-, out=90,  in=0]      node[pos=0.4] (e10) {}                      (13)
             edge [-, out=-90, in=0]      node[pos=0.4] (e11) {}                      (14)
             edge [-]                     node[pos=\endPos] (e12) {}                  (16);
   \begin{pgfonlayer}{background}
    \node[draw=\borderColor,thick,rounded corners,fit = (1) (e1) (e2)] {};
    \node[draw=\borderColor,thick,rounded corners,fit = (4) (e3) (e4) (e5)] {};
    \node[draw=\borderColor,thick,rounded corners,fit = (9) (10) (e6) (e7) (e8) (e9)] {};
    \node[draw=\borderColor,thick,rounded corners,fit = (15) (e10) (e11) (e12)] {};
   \end{pgfonlayer}
  \end{tikzpicture} \label{gadget:decomposition:unary:0:111}}
 \hfill
 \subfloat[Composition of Gadget \ref{gadget:main:binary:0:110}]{
  \begin{tikzpicture}[scale=\decompScale,transform shape,>=\arrowType,node distance=\nodeDist,semithick]
   \node[external]  (0)               {};
   \node[external]  (2) [right of= 0] {};
   \node[external]  (4) [right of= 2] {};
   \node[external]  (6) [right of= 4,node distance=\dottedNodeDist] {};
   \node[internal]  (8) [right of= 6] {};
   \node[external]  (9) [above right of=8,yshift=-13] {};
   \node[external] (10) [below right of=8,yshift=13]  {};
   \node[external] (12) [right of= 9,node distance=\dottedNodeDist] {};
   \node[external] (13) [right of=10,node distance=\dottedNodeDist] {};
   \node[external] (15) [right of=12] {};
   \node[external] (16) [right of=13] {};
   \node[external] (18) [right of=15] {};
   \node[external] (19) [right of=16] {};
   \node[external] (21) [right of=18,node distance=\dottedNodeDist] {};
   \node[external] (22) [right of=19,node distance=\dottedNodeDist] {};
   \node[internal] (24) [below right of=22,yshift=13] {};
   \node[external] (23) [below left  of=24,yshift=13] {};
   \node[external] (20) [left  of=23,node distance=\dottedNodeDist] {};
   \node[external] (17) [left  of=20] {};
   \node[external] (14) [left  of=17] {};
   \node[external] (11) [left  of=14,node distance=\dottedNodeDist] {};
   \node[external] (26) [right of=24] {};
   \path let
          \p1 = (6),
          \p2 = (11)
         in
          node[external] (7) at (\x1,\y2) {};
   \node[external]  (5) [left  of= 7,node distance=\dottedNodeDist] {};
   \node[external]  (3) [left  of= 5] {};
   \node[external]  (1) [left  of= 3] {};
   \path let
          \p1 = (26),
          \p2 = (21)
         in
          node[external] (25) at (\x1,\y2) {};
   \path (0) edge [->]                    node[pos=0.3] (e1) {} node[pos=0.7] (e2) {}    (4)
         (1) edge [->]                    node[pos=0.3] (e3) {} node[pos=0.7] (e4) {}    (5)
         (4) edge [-, dotted, very thick]                                                (6)
         (5) edge [-, dotted, very thick]                                                (7)
         (6) edge [-]                     node[pos=\startPos] (e5) {}                    (8)
         (7) edge [-]                     node[midway]        (e6) {}                   (11)
         (8) edge [-, out=90, in=180]     node[pos=0.4]       (e7) {}                    (9)
             edge [-, out=-90, in=180]    node[pos=0.4]       (e8) {}                   (10)
         (9) edge [-, dotted, very thick]                                               (12)
        (10) edge [-, dotted, very thick]                                               (13)
        (11) edge [-, dotted, very thick]                                               (14)
        (12) edge [-]                     node[pos=0.3]  (e9) {} node[pos=0.7] (e10) {} (18)
        (13) edge [->]                    node[pos=0.3] (e11) {} node[pos=0.7] (e12) {} (19)
        (14) edge [-]                     node[pos=0.3] (e13) {} node[pos=0.7] (e14) {} (20)
        (18) edge [-, dotted, very thick]                                               (21)
        (19) edge [-, dotted, very thick]                                               (22)
        (20) edge [-, dotted, very thick]                                               (23)
        (24) edge [-, out=90, in=0]       node[pos=0.4]     (e13) {}                    (22)
             edge [-, out=-90, in=0]      node[pos=0.4]     (e14) {}                    (23)
        (21) edge [-]                     node[midway]      (e15) {}                    (25)
        (24) edge [-]                     node[pos=\endPos] (e16) {}                    (26);
   \begin{pgfonlayer}{background}
    \node[draw=\borderColor,thick,rounded corners,fit = (2) (3) (e1) (e2) (e3) (e4)] {};
    \node[draw=\borderColor,thick,rounded corners,fit = (8) (e5) (e6) (e7) (e8)] {};
    \node[draw=\borderColor,thick,rounded corners,fit = (15) (16) (17) (e9) (e10) (e11) (e12)] {};
    \node[draw=\borderColor,thick,rounded corners,fit = (24) (e13) (e14) (e15) (e16)] {};
   \end{pgfonlayer}
  \end{tikzpicture} \label{gadget:decomposition:binary:0:110}}
 \caption{Gadget compositions using the basic gadget components in Figure \ref{gadget:simple}}
\end{figure}
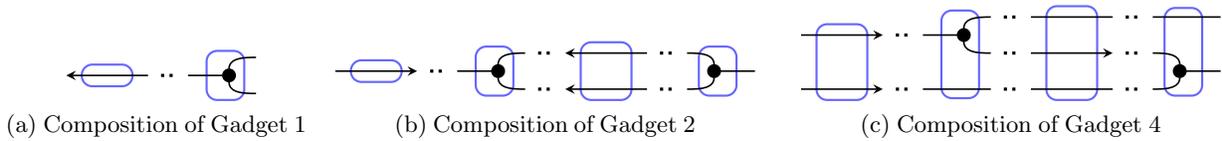

Now we introduce a powerful new technique called \emph{anti-gadgets}.

\begin{definition}
 Let $G$ be a recursive gadget with transition matrix $M$. Then a recursive gadget $G'$ is called an \emph{anti-gadget} of $G$ if the transition matrix of $G'$ is $\lambda M^{-1}$, for some $\lambda \in \C - \{0\}$.
\end{definition}

A crucial ingredient in our proof of $\SHARPP$-hardness is to produce an arbitrarily large set of pairwise linearly independent signatures. These signatures are used to form a Vandermonde system of full rank. One common way to produce an arbitrarily large set of signatures is to compose copies of a recursive gadget. Let $M$ be the transition matrix of some recursive gadget $G$. As discussed above, composing $k$ copies of $G$ produces a gadget with transition matrix $M^k$. If $M$ has infinite order (up to a scalar), then we have an arbitrarily large set of pairwise linearly independent signatures. Now suppose that $M$ has finite order (up to a scalar), that is, for some positive integer $k$, $M^k = \lambda I$, a nonzero multiple of the identity matrix. Then composing only $k-1$ copies of $G$ results in a gadget with a transition matrix that is the \emph{inverse} of $G$'s transition matrix (up to a scalar). This \emph{is} an anti-gadget of $G$.

If an anti-gadget of $G$ is composed with another gadget containing similar structure to that of $G$, then cancellations ensue and the composition yields a transition matrix that can be quite easy to analyze. E.g., Gadgets~\ref{gadget:main:binary:0:110} and~\ref{gadget:main:binary:0:111} only differ by the orientation of the vertical edge. When composing an anti-gadget of Gadget~\ref{gadget:main:binary:0:110} with Gadget~\ref{gadget:main:binary:0:111}, the contribution of the two leading edges cancel and we get $M_{\ref{gadget:main:binary:0:110}}^{-1} M_{\ref{gadget:main:binary:0:111}} = \diag(w,x,y,z)^{-1} \left(\shrinkMatrixTwoRows{\begin{bmatrix} w & x\\ y & z \end{bmatrix}^{\tensor 2}}\right)^{-1} \shrinkMatrixTwoRows{\begin{bmatrix} w & x\\ y & z \end{bmatrix}^{\tensor 2}} \diag(w,y,x,z) = \diag(1, y / x, x / y, 1)$. The resulting transition matrix has infinite order unless $x / y$ is a root of unity. This situation is analyzed formally in Lemma~\ref{lem:infOrder:xOverYNormNotOne}.

Another use of the anti-gadget technique can be applied with Gadgets~\ref{gadget:main:unary:0:111} and~\ref{gadget:main:unary:0:110}. Once again, the contribution of the leading edge cancels when composing an anti-gadget of Gadget~\ref{gadget:main:unary:0:110} with Gadget~\ref{gadget:main:unary:0:111}. The resulting matrix is a bit more complicated this time. However, when this pair of gadgets is analyzed formally in Lemma~\ref{lem:infOrder:x=0,wyz!=0}, the assumptions are $x = 0 \myand w y z \ne 0$. In that case, $M_{\ref{gadget:main:unary:0:110}}^{-1} M_{\ref{gadget:main:unary:0:111}} = \shrinkMatrixTwoRows{\begin{bmatrix} 1 & y^2 / w^2\\ 0 & 1 \end{bmatrix}}$. This matrix clearly has infinite order (up to a scalar).

\section{Interpolation Techniques}
The method of polynomial interpolation has been pioneered by Valiant~\cite{Val79} and further developed by many others~\cite{DG00, Vad01, BG05, BD07, CLX08}. In this section, we give a new unified technique to interpolate all unary signatures. This is our main technical step to prove $\SHARPP$-hardness. Our method produces an infinite set of pairwise linearly independent vectors at any fixed dimension, and then projects to a lower dimension while retaining pairwise linear independence of a nontrivial fraction.

In previous work, ``finisher gadgets''~\cite{KC10, CK10, CK11} were used to handle the \emph{symmetric} case, mapping \emph{symmetric} arity 2 signatures to arity 1 signatures. In the present work, we introduce \emph{projective gadget sets}. These gadget sets are completely general, in the sense that they can be used to map \emph{any} set of pairwise linearly independent signatures (symmetric \emph{or} asymmetric) to any lower arity, while preserving pairwise linear independence for an inverse polynomial fraction. This permits much more freedom in gadget constructions, and this power is used crucially in the proof of our dichotomy theorem. We remark that this advance is not just a simple matter of searching for the right gadgets. One must find the abstract criteria for success that simultaneously can be satisfied by gadgets that exist in practice. These developments, together with the anti-gadget concept, come together in the Group Lemma, which provides a straightforward criterion for proving $\SHARPP$-hardness of certain holant problems.

\begin{definition}
 A set of matrices $\mathcal{M}$ forms a \emph{projective set from arity $n$ to $m$} if for any matrix $N \in \C^{2^n \times 2}$ with rank 2, there exists a matrix $M \in \C^{2^m \times 2^n}$ in $\mathcal{M}$ such that $M N$ has rank 2.
\end{definition}

We also call a set of gadgets \emph{projective} from arity $n$ to $m$ if the set of its signature matrices is projective from arity $n$ to $m$. A gadget set that is projective from arity 2 to 1 can be used to transform a pair of $\GR$-gates with linearly independent binary signatures to a pair of $\GR$-gates with linearly independent unary signatures. Projector gadgets in such a set have 2 trailing edges and 1 leading edge, but can also be viewed as operating on signatures of higher arity, with the identity transformation being performed on the other edges not connected to the projector gadget. This way of connecting the projector gadget to an existing $\GR$-gate \emph{automatically} gives us projective gadget sets for higher arities. But first, a quick lemma to assist with the proof.

\begin{lemma} \label{lem:finish:helperForDownByOneWire}
 Let $v_0$ and $v_1$ be nonzero column vectors, not necessarily the same length. Then the block matrix $\shrinkMatrixTwoRows{\begin{bmatrix} av_0 & bv_0 \\ cv_1 & dv_1 \end{bmatrix}}$ has rank 2 if and only if $\shrinkMatrixTwoRows{\begin{bmatrix} a & b \\ c & d \end{bmatrix}}$ is invertible.
\end{lemma}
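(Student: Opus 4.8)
The plan is to reduce everything to the rank of a $2 \times 2$ matrix via the factorization
\[
  \begin{bmatrix} a v_0 & b v_0 \\ c v_1 & d v_1 \end{bmatrix}
  = \begin{bmatrix} v_0 & 0 \\ 0 & v_1 \end{bmatrix}
    \begin{bmatrix} a & b \\ c & d \end{bmatrix}.
\]
First I would observe that the left factor $\begin{bmatrix} v_0 & 0 \\ 0 & v_1 \end{bmatrix}$ has exactly two columns, and since $v_0 \ne 0$, $v_1 \ne 0$, and these two columns are supported on disjoint sets of coordinates, they are linearly independent; hence the left factor has rank $2$, i.e.\ it is injective as a linear map out of $\C^2$. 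An injective linear map applied to the image of $\begin{bmatrix} a & b \\ c & d \end{bmatrix}$ preserves dimension, so $\rank\!\left(\begin{bmatrix} a v_0 & b v_0 \\ c v_1 & d v_1 \end{bmatrix}\right) = \rank\!\left(\begin{bmatrix} a & b \\ c & d \end{bmatrix}\right)$. Since the right factor is $2 \times 2$, having rank $2$ is the same as being invertible, and this settles both implications at once.

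A more hands-on variant, if one prefers to avoid naming the factorization, is to look directly at linear dependence of the two columns of the block matrix. A combination $\alpha \begin{bmatrix} a v_0 \\ c v_1 \end{bmatrix} + \beta \begin{bmatrix} b v_0 \\ d v_1 \end{bmatrix}$ equals $\begin{bmatrix} (\alpha a + \beta b) v_0 \\ (\alpha c + \beta d) v_1 \end{bmatrix}$, which vanishes if and only if $\alpha a + \beta b = 0$ and $\alpha c + \beta d = 0$ (here is where $v_0 \ne 0$ and $v_1 \ne 0$ are used), i.e.\ if and only if $(\alpha, \beta)^{T}$ lies in the kernel of $\begin{bmatrix} a & b \\ c & d \end{bmatrix}$. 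Thus the block matrix fails to have rank $2$ exactly when that $2\times 2$ matrix is singular.

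There is essentially no obstacle here: the only fact being used is that scaling a fixed nonzero vector can only annihilate it with the scalar $0$, together with the bookkeeping point that the block structure keeps the $v_0$-part and the $v_1$-part from interfering, so a dependence among the long columns forces the same dependence among the scalar pairs $(a,c)$ and $(b,d)$. For the paper I would present the factorization version, since it is the shortest and fits the matrix-product language already used for transition matrices.
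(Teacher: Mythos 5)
Your factorization argument is exactly the paper's proof: the paper writes the block matrix as $\begin{bmatrix} v_0 & 0 \\ 0 & v_1 \end{bmatrix}\begin{bmatrix} a & b \\ c & d \end{bmatrix}$, notes the left factor has rank 2, and concludes. Your added observations (injectivity preserving rank, and the elementary column-dependence variant) are correct but not new in substance, so the proposal matches the paper's proof.
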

\begin{proof}
 We write $\shrinkMatrixTwoRows{\begin{bmatrix} av_0 & bv_0 \\ cv_1 & dv_1 \end{bmatrix}} = \shrinkMatrixTwoRows{\begin{bmatrix} v_0 & 0 \\ 0 & v_1 \end{bmatrix} \begin{bmatrix} a & b \\ c & d \end{bmatrix}}$, and $\shrinkMatrixTwoRows{\begin{bmatrix} v_0 & 0 \\ 0 & v_1 \end{bmatrix}}$ has rank 2. Thus $\shrinkMatrixTwoRows{\begin{bmatrix} av_0 & bv_0 \\ cv_1 & dv_1 \end{bmatrix}}$ has rank 2 if and only if $\shrinkMatrixTwoRows{\begin{bmatrix} a & b \\ c & d \end{bmatrix}}$ is invertible.
\end{proof}

\begin{lemma} \label{lem:finish:downByOneWire}
 Let $\mathcal{P}$ be a set of $\GR$-gadgets that is projective from arity $2$ to $1$. Then for all integers $k \ge 2$, $\mathcal{P}$ acts as a projective $\GR$-gadget set from arity $k$ to $k-1$.
\end{lemma}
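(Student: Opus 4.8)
The plan is to show that given any $N \in \mathbb{C}^{2^k \times 2}$ of rank $2$, some projector gadget from $\mathcal{P}$, connected to the first two trailing wires (say) of a rank-$2$ signature of arity $k$ and acting as the identity on the remaining $k-2$ wires, produces a matrix in $\mathbb{C}^{2^{k-1} \times 2}$ that still has rank $2$. First I would set up the block structure: index the $2^k$ rows of $N$ by pairs $(b, \mathbf{c})$ where $b \in \{0,1\}$ indexes the wire that the projector will consume from arity $2$ down to arity $1$ (together with a companion wire) and $\mathbf{c} \in \{0,1\}^{k-2}$ indexes the untouched wires. Actually, more carefully: a projector gadget $P \in \mathcal{P}$ has a $2 \times 4$ transition matrix; when we attach it to two of the $k$ trailing edges of the upstream gate and leave the other $k-2$ alone, the resulting operator is $P \otimes I_{2^{k-2}}$ (after an appropriate permutation of tensor factors to bring the two relevant wires to the front). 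So I need to show: there exists $P \in \mathcal{P}$ with $(P \otimes I_{2^{k-2}}) N$ of rank $2$.

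The key step is to reduce the rank-$2$ condition on the arity-$k$ matrix $N$ to a rank-$2$ condition on an arity-$2$ matrix, so that the hypothesis that $\mathcal{P}$ is projective from arity $2$ to $1$ can be applied directly. Here is where I would invoke Lemma~\ref{lem:finish:helperForDownByOneWire}. Write $N$ in block form as $N = \begin{bmatrix} N_0 \\ N_1 \\ N_2 \\ N_3 \end{bmatrix}$, where each $N_j \in \mathbb{C}^{2^{k-2} \times 2}$ corresponds to fixing the two front wires to the $j$-th pattern in $\{0,1\}^2$. Since $N$ has rank $2$, its two columns are linearly independent, so there is some coordinate — i.e., some choice $\mathbf{c}_0 \in \{0,1\}^{k-2}$ of the untouched wires and some front pattern — realizing linear independence; more usefully, by a pigeonhole/linear-algebra argument, I can extract from the rows of $N$ a $4 \times 2$ submatrix $N'$ (picking, for each front pattern $j \in \{0,1,2,3\}$, the single row of $N_j$ indexed by one fixed $\mathbf{c}_0$) — wait, that only gives rank $\le 2$ but maybe rank $1$. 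Instead the right move: since $N$ has rank $2$, there exist two coordinates $\alpha = (j_1, \mathbf{c}^{(1)})$ and $\beta = (j_2, \mathbf{c}^{(2)})$ whose corresponding rows of $N$ are linearly independent. If $j_1$ and $j_2$ agree in their ``second front wire'' bit, I can write the relevant sub-block in the form $\begin{bmatrix} a v_0 & b v_0 \\ c v_1 & d v_1 \end{bmatrix}$ only after grouping; the cleanest approach is: restrict attention to the value of the companion wire (the one the projector keeps paired but does not output) — actually the projector maps $4 \to 2$, consuming both front wires and outputting one. Let me restructure: the projector has $2$ trailing edges consumed, $1$ leading edge out. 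So $P \otimes I$ maps the $2^k = 4 \cdot 2^{k-2}$ rows to $2 \cdot 2^{k-2}$ rows. Group $N$'s rows as a $4 \times (2 \cdot 2^{k-2})$ array... the honest statement is that $(P \otimes I_{2^{k-2}}) N$ has rank $2$ iff $P \tilde N$ has rank $2$ where $\tilde N \in \mathbb{C}^{4 \times 2}$ is obtained by choosing, via Lemma~\ref{lem:finish:helperForDownByOneWire}, a $\mathbf{c}_0$ for which the ``lifted'' $2 \times 2$ coefficient matrix is invertible; one then applies projectivity from arity $2$ to $1$ to $\tilde N$.

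The main obstacle I anticipate is precisely making this reduction rigorous: showing that $(P \otimes I_{2^{k-2}}) N$ has rank $2$ can be certified by exhibiting an appropriate $4 \times 2$ ``slice'' of $N$ on which $P$ acts as a genuine $2 \times 4$ projector with rank-$2$ output, and conversely that if $N$ has rank $2$ then such a slice with the needed invertibility property (the hypothesis of Lemma~\ref{lem:finish:helperForDownByOneWire}) must exist. Concretely: $N$ rank $2$ means columns independent; pick any two rows witnessing a rank-$2$ minor; these rows live in blocks $N_{j_1}, N_{j_2}$ and at wire-patterns $\mathbf{c}^{(1)}, \mathbf{c}^{(2)}$ — and then I use Lemma~\ref{lem:finish:helperForDownByOneWire} (with $v_0, v_1$ the indicator vectors of $\mathbf{c}^{(1)}, \mathbf{c}^{(2)}$, padded suitably, and $a,b,c,d$ the entries of those two rows) to see that the corresponding $4 \times 2$ matrix fed to $P$ indeed has rank $2$, hence lies in the domain of the projective-from-$2$-to-$1$ property, hence some $P \in \mathcal{P}$ maps it to rank $2$; then reversing the identification, $(P \otimes I_{2^{k-2}}) N$ contains that rank-$2$ output as a sub-block and so has rank $2$ itself. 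I would expect the bulk of the write-up to be bookkeeping of tensor-factor orderings and index conventions rather than any deep idea, with Lemma~\ref{lem:finish:helperForDownByOneWire} doing the real work of converting between arity $k$ and arity $2$.
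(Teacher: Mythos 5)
You correctly identify Lemma~\ref{lem:finish:helperForDownByOneWire} as the engine, and the rough plan---reduce the arity-$k$ rank-$2$ question to a $4\times 2$ matrix on which arity-$2$-to-$1$ projectivity can be invoked---is the right direction. But the proposal as written has genuine gaps, the first of which is structural. You index the blocks $N_0,\dots,N_3$ by the pattern on the two wires the projector consumes, with each $N_j\in\C^{2^{k-2}\times 2}$ ranging over the untouched wires. With that indexing, $P\otimes I_{2^{k-2}}$ does \emph{not} act block-diagonally; it mixes all four $N_j$. The useful decomposition is the transpose: index $4\times 2$ blocks by the $2^{k-2}$ assignments to the \emph{untouched} wires, $N=\begin{bmatrix}B_1\\ \vdots\\ B_{2^{k-2}}\end{bmatrix}$ with $B_i\in\C^{4\times 2}$, so that the projector acts slice-by-slice as $\begin{bmatrix}FB_1\\ \vdots\\ FB_{2^{k-2}}\end{bmatrix}$. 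This is what makes the rank accounting local.

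More seriously, the claimed ``iff'' with a single slice $\tilde N$ is false, and what is missing is the case split that makes the argument close. With the correct block decomposition, if some $B_i$ already has rank $2$, one application of projectivity finishes---your sketch never treats this case. If instead every $B_i$ has rank $\le 1$, then since $N$ has rank $2$ there are two blocks $B_i,B_j$, each of rank exactly $1$, jointly of rank $2$; write $B_i=[av_0\ bv_0]$ and $B_j=[cv_1\ dv_1]$ with nonzero $v_0,v_1\in\C^4$. Lemma~\ref{lem:finish:helperForDownByOneWire} then gives that $\begin{bmatrix}a&b\\c&d\end{bmatrix}$ is invertible. Crucially, $v_0,v_1$ are the vectors in the rank-$1$ factorizations of $B_i,B_j$---generic vectors in $\C^4$---\emph{not} ``indicator vectors of $\mathbf{c}^{(1)},\mathbf{c}^{(2)}$, padded suitably'' as you write; that identification is an error, not bookkeeping. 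One then chooses $F\in\mathcal{P}$ so that $F[v_0\ v_1]$ (or $F[v_0\ \tilde v]$ if $v_0,v_1$ are linearly dependent) is invertible, and applies Lemma~\ref{lem:finish:helperForDownByOneWire} a second time, in the other direction, to the stacked image $\begin{bmatrix}FB_i\\FB_j\end{bmatrix}$. Without the case split, the correct block structure, and the correct identification of $v_0,v_1$, the proof does not go through.
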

\begin{proof}
 We are given that for any $N \in \C^{4 \times 2}$ with rank 2, there exists an $F \in \mathcal{P}$ such that $F \in \C^{2 \times 4}$ and $F N$ is invertible. We want to show that for any integer $k \ge 2$ and any rank 2 matrix $B \in \C^{2^k \times 2}$, there exists an $F \in \mathcal{P}$ such that $(I \tensor F) B$ has rank 2, where $I$ is the $2^{k - 2}$-by-$2^{k - 2}$ identity matrix.

 For any $F \in \mathcal{P}$, the matrix $I \tensor F$ can be viewed as being composed of 2-by-4 blocks, with $F$ appearing along the main diagonal and 2-by-4 zero-matrices elsewhere. We similarly view $B$ as being composed of 4-by-2 blocks $B = \shrinkMatrixFourRows{\begin{bmatrix} B_1 \\ B_2 \\ \vdots \\ B_{2^{k - 2}} \end{bmatrix}}$. Then $(I \tensor F) B = \shrinkMatrixFourRows{\begin{bmatrix} F B_1 \\ F B_2 \\ \vdots \\ F B_{2^{k - 2}} \end{bmatrix}} \in \C^{2^{k - 1} \times 2}$. If some $B_i$ has rank 2, then there is an $F \in \mathcal{P}$ such that $F B_i$ is invertible and $(I \tensor F) B$ has rank 2, as desired.

 Now assume otherwise, so each $B_i$ has rank at most 1. Since $B$ has rank 2, there exists a 2-by-2 invertible submatrix $D$ of $B$, for which the rows of $D$ appear in $B_i$ and $B_j$, for some $i < j$. It follows that $B_i$ and $B_j$ both have rank exactly 1. Hence for some nonzero vectors $v_0, v_1 \in \C^4$ and some $a, b, c, d \in \C$, we can write $B_i = [av_0\ bv_0]$ and $B_j = [cv_1\ dv_1]$. By Lemma~\ref{lem:finish:helperForDownByOneWire}, $\shrinkMatrixTwoRows{\begin{bmatrix} a & b \\ c & d \end{bmatrix}}$ is invertible, as $\shrinkMatrixTwoRows{\begin{bmatrix} B_i \\ B_j \end{bmatrix}}$ has rank 2. If $v_0$ and $v_1$ are linearly independent, then choose $F \in \mathcal{P}$ such that $F[v_0\ v_1]$ is invertible; otherwise let $\tilde{v} \in \C^4$ be such that $v_0$ and $\tilde{v}$ are linearly independent, and choose $F \in \mathcal{P}$ such that $F[v_0\ \tilde{v}]$ is invertible. In either case (ignoring $\tilde{v}$ in the second case), we define $[v_0'\ v_1'] = F[v_0\ v_1]$, where $v_0'$ and $v_1'$ are nonzero. Then by Lemma~\ref{lem:finish:helperForDownByOneWire}, the matrix $\shrinkMatrixTwoRows{\begin{bmatrix} av_0' & bv_0' \\ cv_1' & dv_1' \end{bmatrix}} = \shrinkMatrixTwoRows{\begin{bmatrix} F B_i \\ F B_j \end{bmatrix}}$ has rank 2, and since this appears as a submatrix of $(I \tensor F) B$, we are done.
\end{proof}

\begin{corollary} \label{cor:finish:downByAllWires}
 Let $\mathcal{P}$ be a finite set of $\GR$-gadgets that is projective from arity $2$ to $1$. Then for any integer $k \ge 2$, $\mathcal{P}$ induces a finite projective $\GR$-gadget set from arity $k$ to 1.
\end{corollary}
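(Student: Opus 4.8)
The plan is to obtain the arity-$k$ to $1$ gadget set by chaining Lemma~\ref{lem:finish:downByOneWire} a total of $k-1$ times, removing one wire per stage. For each $j$ with $2 \le j \le k$, Lemma~\ref{lem:finish:downByOneWire} furnishes a finite set $\mathcal{P}_j$ of $\GR$-gadgets that is projective from arity $j$ to $j-1$ (concretely, the gadgets $I_{2^{j-2}} \tensor F$ for $F \in \mathcal{P}$, so $|\mathcal{P}_j| = |\mathcal{P}|$ and each transition matrix lies in $\C^{2^{j-1} \times 2^j}$). I would then define the induced set $\mathcal{Q}$ to consist of every $\GR$-gadget obtained by composing one member of each of $\mathcal{P}_k, \mathcal{P}_{k-1}, \dots, \mathcal{P}_2$ in order of decreasing $j$: attach the leading edges of the chosen element of $\mathcal{P}_j$ to the trailing edges of the chosen element of $\mathcal{P}_{j-1}$. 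Such a composition is again a legitimate $\GR$-gadget from arity $k$ to $1$ (it has $k$ trailing edges and one leading edge), since the projector-gadget conventions---trailing edges on degree-$3$ vertices, leading edges on degree-$2$ vertices---make leading-to-trailing merging preserve the bipartite $(2,3)$-structure of the signature grid. Its transition matrix is the product $M_2 M_3 \cdots M_k$, where $M_j \in \C^{2^{j-1} \times 2^j}$ is the transition matrix of the chosen element of $\mathcal{P}_j$. Because there are $k-1$ stages, each with only $|\mathcal{P}|$ choices, $\mathcal{Q}$ is finite with $|\mathcal{Q}| \le |\mathcal{P}|^{k-1}$.

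To see $\mathcal{Q}$ is projective from arity $k$ to $1$, I would fix an arbitrary $N \in \C^{2^k \times 2}$ of rank $2$ and run the stages in order $j = k, k-1, \dots, 2$. Lemma~\ref{lem:finish:downByOneWire} at stage $k$ gives a transition matrix $M_k$ of some element of $\mathcal{P}_k$ with $M_k N \in \C^{2^{k-1} \times 2}$ of rank $2$; the lemma at stage $k-1$ then gives $M_{k-1}$ with $M_{k-1} M_k N$ of rank $2$; continuing, after $k-1$ steps one reaches $M_2 M_3 \cdots M_k N \in \C^{2 \times 2}$ of rank $2$, i.e.\ invertible. Since $M_2 M_3 \cdots M_k$ is by construction the transition matrix of an element of $\mathcal{Q}$, this is exactly the condition in the definition of a projective set from arity $k$ to $1$.

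The only thing to watch is that the chain never degenerates: at each stage the matrix fed forward is a rank-$2$ element of $\C^{2^{j-1} \times 2}$, which is precisely the form Lemma~\ref{lem:finish:downByOneWire} is guaranteed to handle at stage $j-1$, so the induction goes through all $k-1$ stages with no loss. I do not expect a genuine obstacle here---the real work was already done in Lemma~\ref{lem:finish:downByOneWire}, and the finiteness of $\mathcal{Q}$ is immediate. (When $k = 2$ there is a single stage and the statement reduces to Lemma~\ref{lem:finish:downByOneWire} itself.)
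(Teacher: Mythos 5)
Your proposal is correct and matches the paper's (implicit) reasoning: the corollary is stated immediately after Lemma~\ref{lem:finish:downByOneWire} with no separate proof, precisely because it is the straightforward iteration you describe --- apply the lemma to go from arity $j$ to $j-1$ for $j = k, k-1, \dots, 2$, compose the gadgets, and note that the resulting set has at most $|\mathcal{P}|^{k-1}$ members. You have simply spelled out the details (dimension bookkeeping, the $I_{2^{j-2}} \tensor F$ form, and the finiteness count) that the authors left to the reader.
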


Now we show that a finite projective $\GR$-gadget set from arity $k$ to $1$ preserves pairwise linear independence for an inverse polynomial fraction of signatures. The essence of the next lemma is an exchange in the order of quantifiers.

\begin{lemma} \label{lem:finish:projectingEnough}
 Suppose $\{v_i\}_{i \ge 0}$ is a sequence of pairwise linearly independent column vectors in $\C^{2^k}$ and let $\mathcal{F} \subseteq \C^{2 \times 2^k}$ be a finite set of $f$ matrices that is projective from arity $k$ to 1. Then for every $n$, there exists some $F \in \mathcal{F}$ and some $S \subseteq \{F v_i \st 0 \le i \le n^f\}$ such that $|S| \ge n$ and the vectors in $S$ are pairwise linearly independent.
\end{lemma}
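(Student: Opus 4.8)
The plan is to argue by contradiction, carrying out the ``exchange in the order of quantifiers'' mentioned above as follows. Suppose the conclusion fails: for \emph{every} $F \in \mathcal{F}$, every pairwise linearly independent subset of $\{F v_i \st 0 \le i \le n^f\}$ has size at most $n-1$. Fix such an $F$. A pairwise linearly independent family of vectors in $\C^2$ uses at most one vector from each one-dimensional subspace and, once it has two or more members, cannot contain the zero vector; so the supposition says exactly that $F v_0, \dots, F v_{n^f}$ span at most $n-1$ distinct lines through the origin. First I would encode this as a partition $\mathcal{P}_F$ of the index set $\{0,1,\dots,n^f\}$: put all indices $i$ with $F v_i = 0$ into one block (if any exist), and for each line $\ell$ put all $i$ with $F v_i$ spanning $\ell$ into a block. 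Then $\mathcal{P}_F$ has at most $n$ blocks, and whenever $i,j$ lie in a common block of $\mathcal{P}_F$ the matrix $F[v_i\ v_j]$ is singular (the columns $F v_i, F v_j$ are both zero, or both span a common line).

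The heart of the argument is to pass to the common refinement $\mathcal{Q}$ of the $f = |\mathcal{F}|$ partitions $\mathcal{P}_F$, $F \in \mathcal{F}$. Every block of $\mathcal{Q}$ equals a nonempty intersection of one block from each $\mathcal{P}_F$, and this correspondence is injective (each point of a block of $\mathcal{Q}$ determines, and is contained in, the same $f$-tuple of $\mathcal{P}_F$-blocks), so $|\mathcal{Q}| \le \prod_{F \in \mathcal{F}} |\mathcal{P}_F| \le n^f$. Since $\{0,1,\dots,n^f\}$ has $n^f + 1 > n^f$ elements, the pigeonhole principle gives a block $Q \in \mathcal{Q}$ containing two distinct indices $i \ne j$; and because $Q$ lies inside a single block of every $\mathcal{P}_F$, we conclude that $F[v_i\ v_j]$ is singular for \emph{every} $F \in \mathcal{F}$.

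This is the contradiction: since $i \ne j$ the vectors $v_i, v_j$ are linearly independent, so $[v_i\ v_j] \in \C^{2^k \times 2}$ has rank $2$, and the hypothesis that $\mathcal{F}$ is projective from arity $k$ to $1$ then supplies some $F \in \mathcal{F}$ with $F[v_i\ v_j]$ of rank $2$. Hence the supposition is untenable, so some $F \in \mathcal{F}$ admits a pairwise linearly independent subset $S \subseteq \{F v_i \st 0 \le i \le n^f\}$ with $|S| \ge n$, which is the claim. (The degenerate case $n \le 1$ is immediate, and is in fact also covered by the argument above.)

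The step I expect to need the most care is the handling of the zero vector. The relation ``$F v_i$ and $F v_j$ are linearly dependent'' is \emph{not} transitive---it fails exactly when the middle vector is $0$---so one cannot simply take equivalence classes of an index relation and must instead build $\mathcal{P}_F$ by hand. Isolating the indices annihilated by $F$ into their own block both restores a genuine partition and accounts for the bound ``at most $n$ blocks'' (rather than ``at most $n-1$''); this is precisely why the statement allows $n^f + 1$ indices, so that the pigeonhole against the at most $n^f$ blocks of $\mathcal{Q}$ still succeeds.
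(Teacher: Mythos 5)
Your proof is correct and is essentially the same argument as the paper's: you partition indices by the subspace spanned by $F v_i$ (the paper phrases this as a ``coloring''), note that the failure hypothesis bounds each $F$'s partition by $n$ blocks including a possible zero block, and then apply pigeonhole over the $f$-tuples of colors (your common refinement $\mathcal{Q}$) to find two indices whose projections are dependent under every $F$, contradicting projectivity. You simply spell out the refinement step and the zero-vector subtlety more explicitly than the paper does, but the underlying idea is identical.
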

\begin{proof}
 Let $j > i \ge 0$ be integers and let $N = [v_i\ v_j] \in \C^{2^k \times 2}$. Since $v_i$ and $v_j$ are linearly independent, $\rank(N) = 2$. By assumption, there exists an $F \in \mathcal{F}$ such that $F N \in \C^{2 \times 2}$ is invertible, so we conclude that $F v_i$ and $F v_j$ are linearly independent.

 Each $F \in \mathcal{F}$ defines a coloring of the set $K = \{0, 1, \ldots, n^f\}$ as follows: color $i \in K$ with the linear subspace spanned by $F v_i$. Assume for a contradiction that for each $F \in \mathcal{F}$, there is not $n$ pairwise linearly independent vectors among $\{F v_i \st i \in K\}$. Then, including possibly the 0-dimensional subspace $\{0\}$, there can be at most $n$ distinct colors assigned by each $F \in \mathcal{F}$. By the pigeonhole principle, some $i$ and $j$ with $0 \le i < j \le n^f$ must receive the same color for all $F \in \mathcal{F}$. This is a contradiction with the previous paragraph, so we are done.
\end{proof}

The next lemma says that under suitable conditions, we can construct all unary signatures $(X,Y)$. The method will be interpolation at a higher dimensional iteration in a circular fashion and finishing with an appropriate \emph{projector} gadget.

\begin{lemma}[Group Lemma] \label{lem:interpolation}
 Let $\mathcal{P}$ be a finite set of projective $\GR$-gadgets from arity 2 to~1, and let $\mathcal{S}$ be a finite set of recursive $\GR$-gadgets of arity $d \ge 1$ with nonsingular transition matrices. Let $H$ be the group generated by the transition matrices of gadgets in $\mathcal{S}$, modulo scalar matrices $\lambda I$, for $\lambda \in \C - \{0\}$. If $H$ has infinite order, then any unary generator can be simulated: \[\text{For any $X, Y \in \C$, }\holant{\mathcal{G} \union \{(X, Y)\}}{\mathcal{R}} \le_T^{\P} \holant{\mathcal{G}}{\mathcal{R}}.\]
\end{lemma}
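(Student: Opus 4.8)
The plan is to carry out classical polynomial interpolation in the style of Valiant, reducing everything to producing a polynomial-size supply of pairwise linearly independent unary signatures. Given a signature grid $\Omega$ for $\holant{\mathcal G \union \{(X,Y)\}}{\mathcal R}$ with $n$ vertices labelled $(X,Y)$, multilinearity in these unary signatures gives $\Holant_{\Omega} = \sum_{j=0}^{n} X^{\,n-j} Y^{\,j}\,\hat R_j$, where $\hat R_j$ collects the contribution of all configurations that assign exactly $j$ of the $n$ designated edges the value $1$ and depends only on $\Omega$ with these signatures removed. Hence it suffices to realize, using only $\GR$-gates (equivalently, only signatures from $\mathcal G$ and $\mathcal R$), polynomially many in $n$ pairwise linearly independent unary signatures $(a_s,b_s)$: substituting $(a_s,b_s)$ for every $(X,Y)$ produces an instance $\Omega_s$ of $\holant{\mathcal G}{\mathcal R}$, the oracle returns $\Holant_{\Omega_s} = \sum_j a_s^{\,n-j} b_s^{\,j}\,\hat R_j$, and among any $n+3$ pairwise linearly independent unary signatures all but at most two have both coordinates nonzero; for those, scaling row $s$ by $a_s^{-n}$ turns the system into one with rows $\bigl((b_s/a_s)^j\bigr)_{j=0}^{n}$, a Vandermonde matrix in the distinct ratios $b_s/a_s$, of full column rank $n+1$. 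Solving recovers the $\hat R_j$, after which $\Holant_{\Omega} = \sum_j X^{\,n-j}Y^{\,j}\hat R_j$ is output.

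The whole burden is thus to manufacture enough pairwise linearly independent unary signatures, and this is where $\mathcal S$, anti-gadgets, and the infiniteness hypothesis enter. First I would produce a single composed recursive gadget $\Gamma$ of arity $d$ whose transition matrix $M$ has infinite order modulo scalars. If some gadget of $\mathcal S$ already has this property, take it; otherwise each $M_i$ satisfies $M_i^{k_i} = \lambda_i I$ with $\lambda_i \ne 0$ and $k_i \ge 2$, so composing $k_i - 1$ copies of the $i$-th gadget yields a gadget with transition matrix $\lambda_i M_i^{-1}$, i.e.\ the anti-gadget of the $i$-th gadget is realizable. Since $H$ is generated by the $M_i$ together with their inverses, and each inverse is now available up to a scalar as a composed recursive gadget, every element of $H$ is realized, up to a scalar, by a composed recursive gadget of arity $d$. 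As $H$ is finitely generated and infinite and $\mathrm{PGL}_{2^d}(\C)$ is a linear group (it acts faithfully by conjugation on $2^d \times 2^d$ matrices), $H$ cannot be a torsion group, by Schur's theorem that a finitely generated torsion subgroup of a general linear group over a field of characteristic $0$ is finite; so $H$ has an element of infinite order, and the corresponding composed recursive gadget $\Gamma$ has transition matrix $M$ with $M^k$ not a scalar matrix for every $k \ge 1$.

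Composing $t$ copies of $\Gamma$ gives transition matrix $M^t$, and $\{M^t : t \ge 0\}$ is pairwise linearly independent since $M^a = c M^b$ with $a>b$ would force $M^{a-b}$ to be scalar. Capping off one side of the chain $\Gamma^t$ with a fixed starter gate (the role of the left-hand piece in the interpolation construction of Figure~\ref{gadget:example:interpolationConstruction}) exposes a signature vector $v_t = M^t a \in \C^{2^d}$; choosing the starter so that its signature $a$ lies outside the countably many proper subspaces obtained by pulling back, via powers of $M$, the eigenspaces of the various powers of $M$ — possible because a complex vector space is not a countable union of proper subspaces — makes $\{v_t\}_{t \ge 0}$ pairwise linearly independent. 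Now Corollary~\ref{cor:finish:downByAllWires} turns $\mathcal P$ into a finite projective $\GR$-gadget set from arity $d$ to $1$, and applying Lemma~\ref{lem:finish:projectingEnough} with parameter $n+3$ produces one projector in this set and an index set of size $\ge n+3$ inside $\{0,1,\dots,(n+3)^{f}\}$ (with $f$ the constant size of that projective set) on which the projected unary signatures are pairwise linearly independent. These supply the $(a_s,b_s)$ needed above; every gadget involved has size polynomial in $n$ and only polynomially many oracle calls are made, so the reduction runs in polynomial time.

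The main obstacle is the step in the second paragraph, from ``$H$ is infinite'' to ``a single realizable recursive gadget of infinite order modulo scalars exists.'' Its two ingredients are (i) the group-theoretic input that a finitely generated infinite linear group is not torsion, hence has an element of infinite order, and (ii) — the genuinely new point, and the reason anti-gadgets are essential rather than merely convenient — that inverses of transition matrices are physically realizable: a recursive gadget of finite order modulo scalars has a realizable anti-gadget, so the entire abstract group $H$, not just the monoid generated by $\mathcal S$, becomes available as concrete composed gadgets. A lesser technical point is arranging the starter gate's signature to be generic enough to avoid directional collapse under iteration of $M$; everything downstream of that is routine Vandermonde linear algebra.
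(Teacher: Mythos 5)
Your interpolation framework (multilinearity, Vandermonde system in the ratios $b_s/a_s$) and your observation that anti-gadgets make every element of $H$ physically realizable as a composed recursive gadget match the paper. Your detour through Schur's theorem to extract a single element of infinite order is correct but heavier than necessary: the paper's appendix proof does not need any single infinite-order element, it simply performs a breadth-first traversal of the Cayley graph of the monoid generated by $\mathcal S$ (equal to $H$ when all generators have finite order, by the same anti-gadget observation) and harvests arbitrarily many pairwise-distinct, hence pairwise linearly independent, transition matrices.

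The genuine gap is in how you pass from a sequence of pairwise linearly independent $2^d\times 2^d$ matrices to a sequence of pairwise linearly independent vectors that the projective machinery can eat. You cap one side of $\Gamma^t$ with a ``starter gate'' of signature $a$ to get $v_t = M^t a \in \C^{2^d}$, and then argue that a generic $a$ avoids the (countably many) eigenspaces of the powers of $M$. The genericity claim is true, but irrelevant: $a$ is not a free parameter, it must be the signature of a $\GR$-gate built from $\mathcal G$ and $\mathcal R$. Nothing in the hypotheses guarantees that any realizable starter avoids those eigenspaces; for instance, every available starter could lie in an invariant line of $M$, in which case every $v_t$ is a scalar multiple of $v_0$ and the construction collapses. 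You also mislocate the ``left-hand piece'' of Figure~\ref{gadget:example:interpolationConstruction}: it is the projector gadget, not a starter, and in the paper's construction it absorbs \emph{all} dangling edges of the chain, not one side.

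The paper sidesteps the whole issue by never capping. It attaches one copy of a nondegenerate generator $g \in \mathcal G$ (of some arity $a\ge 2$) to each of the $d$ trailing edges of the chain, which multiplies the signature by an invertible $M_g^{\tensor d}$ and simultaneously converts the trailing dangling edges to the correct side of the bipartition; then it \emph{vectorizes} the resulting $2^d\times 2^{d(a-1)}$ signature matrix into a column vector in $\C^{2^{da}}$ and applies a projective set from arity $da$ to $1$ (Corollary~\ref{cor:finish:downByAllWires} plus Lemma~\ref{lem:finish:projectingEnough}). Pairwise linear independence of matrices is exactly pairwise linear independence of their vectorizations, so no genericity of a starter is ever needed. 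The one extra thing the paper must (and does) prove to make this work is that $\mathcal G$ actually contains a nondegenerate signature $g$: if it did not, every recursive gadget's transition matrix would factor as a rank-$1$ outer product, contradicting the nonsingularity assumption on $\mathcal S$. Your proof is missing both the vectorization step and this existence argument, and the starter-gate substitute does not close the gap.
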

\begin{proof}
 We prove a weaker version by making the additional assumption that $\mathcal{G}$ contains a nondegenerate binary signature $g$, and $H$ contains some element of infinite order. The proof of the stronger version stated here is given in the appendix.

 Two matrices are unequal modulo scalar matrices $\lambda I$ if and only if they are linearly independent. If any member of $\mathcal{S}$, as a group element in $H$, has infinite order, then its powers supply an infinite set of pairwise linearly independent signatures. Otherwise they all have finite order, and the group $H$ is identical to the monoid generated by $\mathcal{S}$, i.e., every $h \in H$ is a product over $\mathcal{S}$ with non-negative powers. Such products give a composition of gadgets in $\mathcal{S}$, which is a recursive gadget. Let $h \in H$ have infinite order. Then the powers of $h$ supply an infinite set of pairwise linearly independent signatures.

 Before we can use a projector gadget set to project these signatures $\{h^i\}_{i \ge 0}$, we make a small modification to the gadget of $h^i$, for each $i$: connect a degree 2 vertex labeled with $g$ to every trailing edge. This ensures that the bipartite structure of the graph is preserved when applying projector gadgets. Let $M$ be the 2-by-2 matrix of $g$. As there are $d$ trailing edges, we apply $d$ copies of $g$, which corresponds to multiplication by the matrix $M^{\tensor d}$. Since $M$ is invertible, pairwise linear independence of the signatures is preserved. Now rewrite the $2^d$-by-$2^d$ matrix form of the signature $h^i M^{\tensor d}$ as a column vector $v_i \in \C^{2^{2 d}}$, indexed by $c_d \cdots c_1 b_1 \cdots b_d \in \{0,1\}^{2 d}$, where $b_1 \cdots b_d$ and $c_1 \cdots c_d$ are the row and column indices. Now we can attach projector gadgets to project each $v_i$ down to arity $1$ (see Figure~\ref{gadget:example:interpolationConstruction}).

 To show $\holant{\mathcal{G} \union \{(X, Y)\}}{\mathcal{R}} \le_T^{\P} \holant{\mathcal{G}}{\mathcal{R}}$, suppose we are given as input a bipartite signature grid $\Omega$ for $\holant{\mathcal{G} \union \{(X, Y)\}}{\mathcal{R}}$, with underlying graph $G = (V, E)$. Let $Q \subseteq V$ be the set of vertices labeled with generator $(X, Y)$, and let $n = |Q|$. By Corollary~\ref{cor:finish:downByAllWires}, there exists a finite projective set containing $f$ gadgets from arity $d$ to $1$, so by Lemma~\ref{lem:finish:projectingEnough} there is some projector gadget $F$ in this set such that at least $n + 2$ of the first $(n+2)^f + 1$ vectors of the form $F v_t$ are pairwise linearly independent. It is straightforward to efficiently find such a set; denote it by $S = \{(X_0, Y_0), (X_1, Y_1), \ldots, (X_{n+1}, Y_{n+1})\}$ and let $G_0, G_1, \ldots, G_{n+1}$ be the corresponding gadgets. At most one $Y_t$ can be zero, so without loss of generality assume $Y_t \ne 0$ for $0 \le t \le n$. If we replace every element of $Q$ with a copy of $G_t$, we obtain an instance of $\holant{\mathcal{G}}{\mathcal{R}}$ (note that the correct bipartite structure is preserved), and we denote this new signature grid by $\Omega_t$. Although $\Holant_{\Omega_t}$ is a sum of exponentially many terms, each nonzero term has the form $b X_t^i Y_t^{n-i}$ for some $i$ and for some $b \in \C$ that does not depend on $X_t$ or $Y_t$. Then for some $c_0, c_1, \ldots, c_n \in \C$, the sum can be rewritten as \[\Holant_{\Omega_t} = \sum_{0 \le i \le n} c_i X_t^i Y_t^{n-i}.\] Since each signature grid $\Omega_t$ is an instance of $\holant{\mathcal{G}}{\mathcal{R}}$, $\Holant_{\Omega_t}$ can be solved exactly using the oracle. Carrying out this process for every $t$ where $0 \le t \le n$, we arrive at a linear system where the $c_i$ values are the unknowns.
 \begin{eqnarray*}
  \begin{bmatrix}
   Y_{0}^{-n} \cdot \Holant_{\Omega_{0}} \\
   Y_{1}^{-n} \cdot \Holant_{\Omega_{1}} \\
   \vdots \\
   Y_{n}^{-n} \cdot \Holant_{\Omega_{n}}
  \end{bmatrix}
  &=&
  \begin{bmatrix}
   X_{0}^0 Y_{0}^0 & X_{0}^1 Y_{0}^{-1} & \cdots & X_{0}^n Y_{0}^{-n} \\
   X_{1}^0 Y_{1}^0 & X_{1}^1 Y_{1}^{-1} & \cdots & X_{1}^n Y_{1}^{-n} \\
   \vdots & \vdots & \ddots & \vdots \\
   X_{n}^0 Y_{n}^0 & X_{n}^1 Y_{n}^{-1} & \cdots & X_{n}^n Y_{n}^{-n} \\
  \end{bmatrix}
  \begin{bmatrix}
   c_0 \\
   c_1 \\
   \vdots \\
   c_n
  \end{bmatrix}
 \end{eqnarray*}
 The matrix above has entry $(X_r / Y_r)^c$ at row $r$ and column $c$. Due to pairwise linear independence of $(X_r, Y_r)$, $X_r / Y_r$ is pairwise distinct for $0 \le r \le n$. Hence this is a Vandermonde system of full rank, and we can solve it for the $c_i$ values. With these values in hand, we can calculate $\Holant_\Omega = \sum_{0 \le i \le n} c_i X^i Y^{n-i}$ directly, completing the reduction.
\end{proof}

Here is how we realize a projective set of gadgets from arity 2 to 1.
\begin{lemma} \label{lem:finish:sufficientCondition}
 Let $\Phi_i \in \C^{2 \times 4}$ for $1 \le i \le 7$ be matrices with the following properties: $\ker(\Phi_i) = \myspan{u, u_i}$ for $1 \le i \le 3$, $\ker(\Phi_{i + 3}) = \myspan{v, v_i}$ for $1 \le i \le 3$, $\ker(\Phi_7) = \myspan{s, t}$, and $\dim \{u, u_1, u_2, u_3\} = \dim \{v, v_1, v_2, v_3\} = \dim \{u, v, s, t\} = 4$. Then $\{\Phi_i \st 1 \le i \le 7\}$ is projective from arity 2 to 1.
\end{lemma}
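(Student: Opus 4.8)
The plan is to translate the statement into a purely linear-algebraic fact about subspaces of $\C^4$ and then dispatch it by a short dimension count. Recall that a matrix $N \in \C^{4 \times 2}$ of rank $2$ has a $2$-dimensional column space $W = \operatorname{col}(N)$, and since $N$ is injective, for each $i$ the product $\Phi_i N \in \C^{2 \times 2}$ is invertible (equivalently, has rank $2$) if and only if $W \cap \ker(\Phi_i) = \{0\}$: indeed $N$ carries $\ker(\Phi_i N)$ isomorphically onto $W \cap \ker(\Phi_i)$. Writing $K_i := \ker(\Phi_i)$, which is $2$-dimensional for each $i$ by the hypotheses, it therefore suffices to prove that no $2$-dimensional subspace $W \subseteq \C^4$ meets all seven of $K_1, \ldots, K_7$ nontrivially. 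I would argue by contradiction, assuming such a $W$ exists.

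The first step is to show $u \in W$. Suppose not. Then the quotient map $q \colon \C^4 \to \C^4 / \C u \cong \C^3$ is injective on $W$, so $q(W)$ is a $2$-dimensional subspace of $\C^3$. For each $i \in \{1,2,3\}$ we have $W \cap K_i \ne \{0\}$ by hypothesis; this intersection cannot be all of $K_i$ (that would force $u \in W$), so it is a line $L_i$, and $L_i \ne \C u$ since $u \notin W$. Hence $q(L_i)$ is a nonzero line contained in $q(K_i) = \C\, q(u_i)$, so $q(L_i) = \C\, q(u_i) \subseteq q(W)$; that is, $q(u_i) \in q(W)$ for $i = 1,2,3$. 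But $\{u, u_1, u_2, u_3\}$ is a basis of $\C^4$, so $q(u_1), q(u_2), q(u_3)$ form a basis of $\C^3$, and three linearly independent vectors cannot lie in the $2$-dimensional space $q(W)$ --- a contradiction. Thus $u \in W$. The identical argument applied to $K_4, K_5, K_6$ and the basis $\{v, v_1, v_2, v_3\}$ gives $v \in W$.

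Since $W$ is $2$-dimensional and contains both $u$ and $v$, which are linearly independent (being part of the basis $\{u,v,s,t\}$), we conclude $W = \myspan{u, v}$. But then $W \cap K_7 = \myspan{u,v} \cap \myspan{s,t} = \{0\}$, because $\{u, v, s, t\}$ is a basis of $\C^4$, contradicting that $W$ meets $K_7$ nontrivially. This contradiction shows no such $W$ exists, which completes the argument.

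The only genuinely delicate point I anticipate is the bookkeeping in the second step: one must verify that $W \cap K_i$ is \emph{exactly} a line (not $2$-dimensional) precisely because $u \notin W$, and that its image under $q$ is the full line $\C\, q(u_i)$ rather than $\{0\}$ (which uses that $u$ and $u_i$ are independent, so $q(u_i) \ne 0$). Once the reformulation ``$\Phi_i N$ has rank $2$ iff $W \cap K_i = \{0\}$'' is in place, everything else is a routine dimension count, so I do not expect a substantive obstacle beyond this.
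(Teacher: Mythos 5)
Your proof is correct, and its mathematical content coincides with the paper's: both rest on the observation that $\Phi_i N$ is invertible precisely when $\operatorname{col}(N) \cap \ker(\Phi_i) = \{0\}$, followed by a dimension count exploiting the independence of $\{u,u_1,u_2,u_3\}$, of $\{v,v_1,v_2,v_3\}$, and of $\{u,v,s,t\}$. The only cosmetic differences are that you argue by contradiction via a quotient-space pigeonhole, whereas the paper argues directly by extending $\{w_1,w_2,u\}$ (a basis of $\operatorname{col}(N)$ plus $u$) to a basis of $\C^4$ using some $u_i$; these are contrapositive formulations of the same step, and the paper's phrase ``$\ker(N)$'' is evidently shorthand for the column space of the injective map $N$, exactly as you interpreted.
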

\begin{proof}
 Let $N \in \C^{4 \times 2}$ be a rank two matrix with $\ker(N) = \myspan{w_1, w_2}$. If $\ker(N) = \myspan{u, v}$, then $\Phi_7 N$ has rank 2. Otherwise, either $\{w_1, w_2, u\}$ or $\{w_1, w_2, v\}$ is linearly independent. Say $\{w_1, w_2, u\}$ is linearly independent. Then $\{w_1, w_2, u\}$ can be further augmented by some $u_i$ for $1 \le i \le 3$ to form a basis, in which case $\Phi_i N$ has rank 2. The other case is similar.
\end{proof}

Verifying that a specific set of gadgets forms a projective set from arity 2 to 1 only requires a straightforward linear algebra computation. The proof of the following lemma is in the appendix. Note that the exceptional cases are either symmetric signatures (for which a dichotomy exists~\cite{KC10}) or largely correspond to tractable cases.
\begin{lemma} \label{lem:finish:generalPurposeBinary}
 There exists a finite projective $\GR$-gadget set from arity 2 to 1 unless \\\centerline{$x = y \myor w z = x y \myor (w, z) = (0, 0) \myor (x, y) = (0, 0) \myor (w^3 = -z^3 \myand x = -y)$}
\end{lemma}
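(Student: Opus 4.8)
The plan is to exhibit an explicit family of $\GR$-gadgets, compute their transition matrices as functions of $(w,x,y,z)$, and then invoke Lemma~\ref{lem:finish:sufficientCondition}. That lemma reduces the task to a purely linear-algebraic condition: we need seven recognizer gadgets with $2$-by-$4$ transition matrices $\Phi_1,\dots,\Phi_7$ (projectors from arity $2$ to $1$, so one leading and two trailing edges), whose kernels satisfy the span and dimension conditions $\ker(\Phi_i)=\myspan{u,u_i}$ for $i=1,2,3$, $\ker(\Phi_{i+3})=\myspan{v,v_i}$ for $i=1,2,3$, $\ker(\Phi_7)=\myspan{s,t}$, and $\dim\{u,u_1,u_2,u_3\}=\dim\{v,v_1,v_2,v_3\}=\dim\{u,v,s,t\}=4$. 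First I would build a small library of candidate projector gadgets by composing the basic components of Figure~\ref{gadget:simple} (the edge matrices $\ref{gadget:simple:eRight}$ and $\ref{gadget:simple:eLeft}$, the projector $\ref{gadget:simple:projector}$, the expander $\ref{gadget:simple:expander}$, and the self-loop starter $\ref{gadget:simple:linearStarter}$) in the manner already illustrated for Gadgets~\ref{gadget:main:finish:0:0}--\ref{gadget:main:binary:0:111}; each such composition yields a transition matrix that is a word in these atomic matrices, hence an explicit polynomial matrix in $w,x,y,z$. Gadget~\ref{gadget:main:finish:0:0}, with matrix $\left[\begin{smallmatrix} w & 0 & 0 & y\\ x & 0 & 0 & z\end{smallmatrix}\right]$, is already of the right shape, and variants obtained by flipping edge orientations or inserting extra edges give further candidates.

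The core computation is then: for each candidate projector gadget, compute the $2$-dimensional kernel of its $2$-by-$4$ matrix (a rational expression in $w,x,y,z$), and check — generically in $(w,x,y,z)$ — that seven of them can be assembled so that the three kernels of the first group share a common line $\myspan{u}$, the three kernels of the second group share a common line $\myspan{v}$, and the four vectors $u,v,s,t$ are independent, together with the two ``fan'' conditions $\dim\{u,u_1,u_2,u_3\}=\dim\{v,v_1,v_2,v_3\}=4$. Each of these independence/dimension requirements is the non-vanishing of a determinant that is a polynomial in $w,x,y,z$; the union of their zero sets is a finite union of proper subvarieties. I would then factor those determinants and verify that the irreducible components all lie inside the stated exceptional locus $x=y \myor wz=xy \myor (w,z)=(0,0) \myor (x,y)=(0,0) \myor (w^3=-z^3 \myand x=-y)$. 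To make this clean, I would pick the gadget library so that the determinants come out as products of the factors $x-y$, $wz-xy$, $w$, $z$, $x$, $y$, and $w^3+z^3$ (possibly after normalizing a signature by a nonzero scalar, which is permitted). Conversely — though the lemma only asserts the ``unless'' direction, so this half is optional — one checks that on each exceptional case the signature is either symmetric (covered by the dichotomy of~\cite{KC10}) or degenerate/tractable, which is why no general-purpose projective set is needed there.

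The main obstacle is not any single determinant but the \emph{simultaneous} bookkeeping: one must choose the seven gadgets so that (i) each is realizable as a legitimate projector $\GR$-gadget (one leading edge to a degree-$2$ vertex, two trailing edges to degree-$3$ vertices, bipartite structure intact), (ii) within each group of three the kernels genuinely share a line for \emph{all} non-exceptional $(w,x,y,z)$ — not just generically — and (iii) the combined set of forbidden determinants has \emph{exactly} the advertised vanishing locus, with no spurious extra components. Getting (ii) right typically forces the three gadgets in a group to be built from a common ``stem'' (the part computing $u$, resp.\ $v$) with three different ``tails'' (computing $u_1,u_2,u_3$), so that the shared kernel vector is structurally guaranteed; the freedom in the tails is then spent making the fan condition $\dim\{u,u_1,u_2,u_3\}=4$ hold off the exceptional locus. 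I would organize the write-up as: (1) list the seven gadgets with pictures and their transition matrices; (2) a lemma computing each kernel; (3) the determinant computations showing the three dimension conditions hold outside the stated locus; (4) apply Lemma~\ref{lem:finish:sufficientCondition}. Since these are finitely many explicit $2\times 2$ and $4\times 4$ determinant evaluations, the argument is routine once the gadgets are fixed; the creative step, and the one I would expect to consume the most effort, is selecting that gadget septuple — and this is precisely where, as the paper emphasizes, the anti-gadget viewpoint helps guide the search rather than leaving it to chance.
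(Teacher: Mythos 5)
Your outline is aimed in the right direction: you invoke the correct key lemma (Lemma~\ref{lem:finish:sufficientCondition}), you correctly identify the task as exhibiting seven $2\times 4$ projector matrices whose kernels satisfy the span and dimension conditions, and you rightly flag the danger of ``spurious extra components'' in the vanishing loci of the determinant conditions. However, there is a genuine gap in how you propose to deal with that danger. You plan to ``pick the gadget library so that the determinants come out as products of the factors $x-y$, $wz-xy$, $w$, $z$, $x$, $y$, and $w^3+z^3$'' --- that is, you hope to find a \emph{single} septuple of gadgets whose failure locus is \emph{exactly} the exceptional set. That cannot be arranged. Any concrete set of seven projector gadgets will have failure determinants carrying extraneous irreducible factors (the paper's first gadget set, for instance, also fails on the hypersurface $w^3x + wxyz + w^2z^2 + yz^3 = 0$, and on $x^2 = y^2$), and there is no reason to expect one septuple to avoid all such accidents simultaneously.

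The missing idea, which the paper's proof supplies, is to \emph{partition} the non-exceptional parameter space into five overlapping subcases and exhibit a different projective gadget set for each: roughly, (1) generic ($wxyz\ne 0$, $w^3x+wxyz+w^2z^2+yz^3\ne 0$, $x^2\ne y^2$), (2) same but $x=-y$ with $w^3\ne -z^3$, (3) same but $w^3x+wxyz+w^2z^2+yz^3=0$, (4) $w=0$, $z\ne 0$, and (5) $x=0$, $y\ne 0$, with the remaining $y=0$ or $z=0$ possibilities reduced to cases (4) and (5) by the holographic transformation $\left[\begin{smallmatrix}0&1\\ 1&0\end{smallmatrix}\right]$ swapping the roles of $0$ and $1$. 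The gadget sets are not disjoint --- some projectors (notably the $\Phi_1=\Phi_4$ gadget and the $\Phi_5$ gadget, and indeed the fixed kernel vectors $u=(0,-1,1,0)$ and $v=(0,-x,y,0)$) are shared across all five cases, which is the ``common stem'' intuition you gesture at --- but the decisive point is that \emph{which} gadgets play the roles of $\Phi_2,\Phi_3,\Phi_6,\Phi_7$ changes by case, precisely to sidestep the extra vanishing components of any one choice. So the step you labeled as ``the creative step'' of selecting ``that gadget septuple'' should instead be: select a family of septuples, one per region, whose regions of validity jointly cover the complement of the exceptional locus. Without this case analysis, the argument as you describe it would stall at exactly the obstruction you foresaw.
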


Once we have all unary signatures at our disposal, we can prove $\SHARPP$-hardness under most settings. The proof for the following lemma is also in the appendix.
\begin{lemma} \label{lem:hardByUnary}
 Suppose $w,x,y,z \in \C$ and let $\mathcal{G}$ and $\mathcal{R}$ be finite signature sets with $(w,x,y,z) \in \mathcal{G}$ and $=_3 \in \mathcal{R}$. Assume that $\holant{\mathcal{G} \union \{(X_i, Y_i) \st 0 \le i < m\}}{\mathcal{R}} \le_T^{\P} \holant{\mathcal{G}}{\mathcal{R}}$ for any $X_i, Y_i \in \C$ and $m \in \Z^+$. Then $\holant{\mathcal{G}}{\mathcal{R}}$ is $\SHARPP$-hard unless $w z = x y \myor (w, z) = (0, 0) \myor (x, y) = (0, 0)$.
\end{lemma}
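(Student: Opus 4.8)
The plan is to use the hypothesis---that any unary generator may be adjoined at no cost---to manufacture a one‑parameter family of \emph{symmetric} binary $\GR$-gadgets out of $f=(w,x,y,z)$, and then to reduce from the known complexity dichotomy for symmetric edge functions on $3$-regular graphs~\cite{KC10}. Since $\holant{h}{{=}_3}$ is literally a sub-problem of $\holant{\mathcal{G}}{\mathcal{R}}$ whenever $h$ is realizable as a $\GR$-gadget built from signatures available in $\mathcal{G},\mathcal{R}$ (and ${=}_3\in\mathcal{R}$), it suffices to exhibit, for every $f$ outside the three stated forms, one realizable symmetric binary signature $h$ for which $\holant{h}{{=}_3}$ is already $\SHARPP$-hard.

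First I would build the symmetrizing gadget. Write $F=\left[\begin{smallmatrix} w & x\\ y & z\end{smallmatrix}\right]$. Take two edges labeled $f$, oriented in opposite senses, and identify one endpoint of each with a single vertex labeled ${=}_3$ whose remaining leg carries a unary signature $(p,q)$; the two free endpoints are the external edges. This is a legitimate (generator-type) $\GR$-gadget, and a direct evaluation shows its transition matrix is $F\,\diag(p,q)\,F^{\mathsf T}$, which is symmetric for all $p,q$. As a symmetric binary signature this is $g_{p,q}=p\,(w,y)^{\otimes 2}+q\,(x,z)^{\otimes 2}=[\,pw^{2}+qx^{2},\ pwy+qxz,\ py^{2}+qz^{2}\,]$; by hypothesis each unary $(p,q)$ may be adjoined, so $g_{p,q}$ is realizable for \emph{every} $p,q\in\C$ and hence $\holant{g_{p,q}}{{=}_3}\le_T^{\P}\holant{\mathcal{G}}{\mathcal{R}}$. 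A one-line computation gives $\det(g_{p,q})=pq\,(wz-xy)^{2}$, and the two ``endpoints'' $g_{1,0},g_{0,1}$ of the family are the only degenerate members.

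Now assume $f$ is not degenerate ($wz\neq xy$), not generalized equality ($(x,y)\neq(0,0)$), and not generalized disequality ($(w,z)\neq(0,0)$); this is exactly the statement that $F$ is invertible and not monomial, so its columns $(w,y)$ and $(x,z)$ are linearly independent and neither is supported on a single coordinate. Hence $\{g_{p,q}\}$ is a genuine pencil---a projective line in the plane of symmetric binary signatures---and $\det(g_{p,q})\neq 0$ once $pq\neq 0$. I would then recall that the tractable locus of the symmetric dichotomy~\cite{KC10} contains no projective line other than $\{b=0\}$: apart from the irreducible degenerate conic $b^{2}=ac$, it consists of the ``weighted equality'' line $[a,0,c]$, the isolated ``weighted disequality'' signature $[0,b,0]$, and a finite list of affine-type exceptions. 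Our pencil equals $\{b=0\}$ only when $wy=xz=0$, which together with $\det F\neq0$ forces $F$ to be diagonal or anti-diagonal---an excluded form; otherwise the pencil meets the conic in at most two points, the line $\{b=0\}$ in at most one, and the point/finite exceptions in finitely many, so some $g_{p,q}$ escapes the tractable locus. By the dichotomy~\cite{KC10}, $\holant{g_{p,q}}{{=}_3}$ is then $\SHARPP$-hard, and composing with the reduction above shows $\holant{\mathcal{G}}{\mathcal{R}}$ is $\SHARPP$-hard, as required.

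I expect the main obstacle to be the bookkeeping against~\cite{KC10}: one must pin down the exact list of exceptional tractable symmetric signatures and confirm that, together with the degenerate conic, the line $\{b=0\}$, and the point $[0,b,0]$, they contain no projective line that could swallow a pencil $\{g_{p,q}\}$ arising from a non-monomial invertible $F$; and one must check carefully that the two-edge-plus-${=}_3$ construction is a bona fide $\GR$-gate and realizes exactly $F\diag(p,q)F^{\mathsf T}$. The algebraic core---the identity $\det(g_{p,q})=pq(wz-xy)^{2}$ and the observation that the pencil escapes the tractable locus precisely outside the three named forms---is then a short mechanical linear-algebra check. Should any boundary case of this pencil prove awkward, the entirely analogous construction with the roles of the rows and columns of $F$ interchanged supplies a second pencil $p\,(w,x)^{\otimes 2}+q\,(y,z)^{\otimes 2}$ as a fallback.
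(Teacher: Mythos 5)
Your construction is exactly the paper's Gadget~\ref{gadget:oneWeed} (a single ${=}_3$-vertex whose three legs carry a unary $\theta$ and two $f$-edges pointing into it), and the realized signature is indeed $F\diag(p,q)F^{\mathsf T}$ once the two $f$-edges are oriented consistently into the center (``opposite senses'' is a slip---they must be the same sense relative to the $=_3$-vertex, or you would get $F\diag(p,q)F$, which is not symmetric). What is genuinely different is how the two proofs then proceed. The paper chooses \emph{explicit} unary vectors $\theta$ (and in one stubborn case cascades three copies of the gadget as Gadget~\ref{gadget:threeWeeds}) so as to land exactly on a signature $(a,1,1,b)$ with $ab\neq 1$, and then applies the narrow Lemma~3.3 of \cite{KC10}, which says such a signature can simulate $(0,1,1,1)$ and is therefore \#P-hard via \#\textsc{VertexCover}. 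You instead treat $(p,q)$ as a free parameter, observe that $\{g_{p,q}\}$ traces a projective line in the plane of symmetric binary signatures, and argue by a dimension/degree count that this line cannot lie inside \cite{KC10}'s tractable locus (an irreducible conic, one line, a point, and finitely many extra points) unless it is the line $\{b=0\}$, which is ruled out by $\det F\neq 0$ and the non-monomial hypothesis. The determinant identity $ac-b^2=pq(wz-xy)^2$ and the inclusion $\holant{g_{p,q}}{{=}_3}\le_T^{\P}\holant{\mathcal G}{\mathcal R}$ are both correct. Your route unifies the paper's four sub-cases into one geometric argument and avoids Gadget~\ref{gadget:threeWeeds} entirely, at the cost of invoking the full dichotomy of \cite{KC10} rather than only its Lemma~3.3. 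Two points merit care if this were to replace the paper's proof. First, you must pin down \cite{KC10}'s tractable set precisely enough to justify ``finite list of affine-type exceptions'' and ``no other contained line''---this is doable but is real bookkeeping, as you note. Second, the paper deliberately funnels every hardness result through planar \#\textsc{VertexCover} so that the planar half of Theorem~\ref{thm:dichotomy} falls out for free; if you appeal instead to \cite{KC10}'s dichotomy as a black box, you would need to additionally verify that the resulting hardness is witnessed by planar instances (which it is, since \cite{KC10}'s reductions are themselves planar, but this should be said).
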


Combining the Group Lemma with Lemmas~\ref{lem:finish:generalPurposeBinary} and~\ref{lem:hardByUnary}, we get the following theorem.
\begin{theorem} \label{thm:interpolation}
 Suppose $w,x,y,z \in \C$ and let $\mathcal{G}$ and $\mathcal{R}$ be finite signature sets where $(w,x,y,z) \in \mathcal{G}$, $=_3 \in \mathcal{R}$, $x \ne y$, $w z \ne x y$, $(w, z) \ne (0, 0)$, $(x, y) \ne (0, 0)$, and $(w^3 \ne -z^3 \myor x \ne -y)$. Let $\mathcal{S}$ be a finite set of recursive $\GR$-gadgets of arity $d \ge 1$ with nonsingular transition matrices, and let $H$ be the group generated by the transition matrices of $\mathcal{S}$, modulo scalar matrices $\lambda I$, for $\lambda \in \C - \{0\}$. If $H$ has infinite order, then $\holant{\mathcal{G}}{\mathcal{R}}$ is $\SHARPP$-hard.
\end{theorem}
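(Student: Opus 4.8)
The plan is to chain the three preceding results, since the hypotheses of the theorem have been arranged precisely so that each lemma applies in turn. First I would observe that the five conditions imposed on $w,x,y,z$ in the theorem statement ($x \ne y$, $wz \ne xy$, $(w,z) \ne (0,0)$, $(x,y) \ne (0,0)$, and not both $w^3 = -z^3$ and $x = -y$) are exactly the negation of the list of exceptional cases in Lemma~\ref{lem:finish:generalPurposeBinary}. Therefore that lemma produces a finite projective $\GR$-gadget set $\mathcal{P}$ from arity $2$ to $1$. This is the only place the arithmetic hypotheses on $w,x,y,z$ are used to construct gadgets.

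Next I would feed this $\mathcal{P}$ together with the given set $\mathcal{S}$ into the Group Lemma (Lemma~\ref{lem:interpolation}). Since $H$, the group generated modulo scalars by the transition matrices of $\mathcal{S}$, has infinite order, the Group Lemma gives, for each single pair $X,Y \in \C$, the Turing reduction $\holant{\mathcal{G} \union \{(X,Y)\}}{\mathcal{R}} \le_T^{\P} \holant{\mathcal{G}}{\mathcal{R}}$. To obtain the multi-unary hypothesis needed by Lemma~\ref{lem:hardByUnary}, I would iterate this one pair at a time: for a fixed $m \in \Z^+$ and arbitrary $(X_0,Y_0),\ldots,(X_{m-1},Y_{m-1})$, apply the Group Lemma successively with the enlarged generator set $\mathcal{G}_j = \mathcal{G} \union \{(X_0,Y_0),\ldots,(X_{j-1},Y_{j-1})\}$ playing the role of $\mathcal{G}$. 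The key point justifying each step is that since $\mathcal{G} \subseteq \mathcal{G}_j$, the fixed gadget sets $\mathcal{P}$ and $\mathcal{S}$ remain valid $(\mathcal{G}_j \mid \mathcal{R})$-gadget sets and the group $H$ is unchanged, so the Group Lemma's hypotheses persist throughout. Composing the $m$ resulting reductions (transitivity of $\le_T^{\P}$ over a constant number of steps) yields $\holant{\mathcal{G} \union \{(X_i,Y_i) \st 0 \le i < m\}}{\mathcal{R}} \le_T^{\P} \holant{\mathcal{G}}{\mathcal{R}}$ for every $m$.

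Finally I would invoke Lemma~\ref{lem:hardByUnary}, all of whose hypotheses are now in place: $(w,x,y,z) \in \mathcal{G}$, $=_3 \in \mathcal{R}$, and arbitrary finite families of unary generators can be simulated. Its conclusion is that $\holant{\mathcal{G}}{\mathcal{R}}$ is $\SHARPP$-hard unless $wz = xy$ or $(w,z) = (0,0)$ or $(x,y) = (0,0)$; but each of these three possibilities is excluded by the hypotheses of the present theorem, so $\holant{\mathcal{G}}{\mathcal{R}}$ is $\SHARPP$-hard. The argument is essentially bookkeeping once the three lemmas are available, so I do not expect a deep obstacle; the one point requiring care is the iteration of the Group Lemma above — namely checking that enlarging $\mathcal{G}$ by finitely many unary signatures neither invalidates the fixed sets $\mathcal{P}$, $\mathcal{S}$ nor alters $H$ — and noting that $m$ is a constant (the number of distinct unary signature types used, not a quantity growing with the input), so the composition of reductions stays polynomial time.
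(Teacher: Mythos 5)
Your proposal is correct and takes essentially the same route the paper does, which records the proof of Theorem~\ref{thm:interpolation} only in the single line ``Combining the Group Lemma with Lemmas~\ref{lem:finish:generalPurposeBinary} and~\ref{lem:hardByUnary}, we get the following theorem.'' Your explicit iteration of the Group Lemma over the enlarged generator sets $\mathcal{G}_j$ (noting that $\mathcal{P}$, $\mathcal{S}$, and $H$ are unaffected by adjoining unary signatures to $\mathcal{G}$) is exactly the correct way to pass from the single-unary conclusion of Lemma~\ref{lem:interpolation} to the multi-unary hypothesis of Lemma~\ref{lem:hardByUnary}, a detail the paper leaves implicit.
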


\section{Main Result}
\begin{theorem} \label{thm:dichotomy}
 Suppose $w, x, y, z \in \C$. Then $\holant{(w,x,y,z)}{{=}_3}$ is $\SHARPP$-hard except in the following classes, for which the problem is in $\P$.

 \emph{(1) degenerate:} $w z = x y$.

 \emph{(2) generalized disequality:} $w = z = 0$.

 \emph{(3) generalized equality:} $x = y = 0$.

 \emph{(4) affine after holographic transformation:} $w z = - x y \myand w^6 = \varepsilon z^6 \myand x^2 = \varepsilon y^2$, \emph{where} $\varepsilon = \pm 1$.

 \noindent
 If the input is restricted to planar graphs, then another class becomes tractable but everything else remains $\SHARPP$-hard.

 \emph{(5) computable by holographic algorithms with matchgates:} $w^3 = \varepsilon z^3 \myand x = \varepsilon y$, \emph{where} $\varepsilon = \pm 1$.
\end{theorem}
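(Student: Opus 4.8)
The plan is to prove Theorem~\ref{thm:dichotomy} by separating the tractability claims from the hardness claims, using Theorem~\ref{thm:interpolation} as the main engine for the latter. For the \textbf{tractable classes}, classes (1)--(3) are essentially immediate: a degenerate signature $w z = x y$ makes $Z(G)$ factor into a product of independent per-vertex sums computable in linear time; the generalized equality/disequality classes each force every edge to impose (dis)equality up to a scalar, so $Z(G)$ reduces to counting assignments respecting a fixed system of (dis)equality constraints, trivially polynomial-time. For class (4), I would exhibit an explicit holographic transformation $T \in \mathrm{GL}_2(\C)$ such that under $x \mapsto Tx$ the signature $(w,x,y,z)$ becomes affine (of the form $i^{Q(x,y)}$ with even cross-term coefficient, as promised in the abstract) and the equality signature $=_3$ is transformed into something whose \#CSP-type problem is known to be tractable via the affine/Gaussian-sum dichotomy; here I would cite the affine \#CSP tractability results already referenced in the introduction. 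Class (5) for planar inputs is handled by verifying that, after the same kind of holographic transformation, both transformed signatures are realizable by matchgates (they satisfy the parity and the matchgate identities), so the FKT/Pfaffian algorithm applies; again this is a finite check plus a citation to the theory of holographic algorithms with matchgates.

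For the \textbf{hardness} direction, the strategy is: assume $(w,x,y,z)$ lies in none of classes (1)--(4) (and, in the planar case, also not in (5)), and deduce \#P-hardness. First dispense with the degenerate-exclusion hypotheses needed to invoke Theorem~\ref{thm:interpolation}: we already know $w z \ne x y$, $(w,z) \ne (0,0)$, $(x,y) \ne (0,0)$. The remaining hypotheses of Theorem~\ref{thm:interpolation} that are not automatic are $x \ne y$ and $(w^3 \ne -z^3 \text{ or } x \ne -y)$. For the boundary cases $x = y$, or $w^3 = -z^3$ with $x = -y$, or $wz = -xy$ with the sixth-power conditions failing (i.e., the ``near-(4)'' cases), I would build separate small-gadget hardness arguments or a direct reduction (for instance from a known \#P-hard symmetric case, using the symmetric dichotomy of~\cite{KC10} once a gadget symmetrizes the signature), so that all that remains is the generic case where Theorem~\ref{thm:interpolation} is applicable.

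In the generic case, the whole task reduces, by Theorem~\ref{thm:interpolation}, to \textbf{exhibiting a finite set $\mathcal{S}$ of nonsingular recursive $\GR$-gadgets whose transition matrices generate an infinite group modulo scalars}. This is where the anti-gadget machinery from Section~\ref{sec:gadgetsAndAnti-Gadgets} does the work: I would use the families already set up there. Composing an anti-gadget of Gadget~\ref{gadget:main:binary:0:110} with Gadget~\ref{gadget:main:binary:0:111} yields $\operatorname{diag}(1, y/x, x/y, 1)$, which has infinite order unless $x/y$ is a root of unity (Lemma~\ref{lem:infOrder:xOverYNormNotOne}); composing an anti-gadget of Gadget~\ref{gadget:main:unary:0:110} with Gadget~\ref{gadget:main:unary:0:111} handles the case $x = 0$, $wyz \ne 0$ via a unipotent matrix of infinite order (Lemma~\ref{lem:infOrder:x=0,wyz!=0}); and the self-loop Gadget~\ref{gadget:main:unary:4:101010} together with the recursive gadgets of Figure~\ref{fig:gadget:interpolationExamples} covers the remaining regimes. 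The proof then becomes a case analysis on the position of $(w,x,y,z)$ in parameter space: for each region not covered by a tractable class, point to the gadget pair whose transition matrix is shown (in the corresponding infinite-order lemma) to generate an infinite group, invoke Theorem~\ref{thm:interpolation}, and conclude \#P-hardness.

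The \textbf{main obstacle} I anticipate is the boundary/root-of-unity cases: when $x/y$ (or the analogous ratio from other gadgets) happens to be a root of unity, every single recursive gadget in a natural family has finite order, and one needs either a cleverer composite gadget or an entirely different pair whose product escapes the finite group — this is exactly the situation the anti-gadget idea is designed to rescue, but checking that \emph{some} gadget in the toolbox has infinite order in \emph{every} non-tractable corner of the four-parameter space is the delicate, computation-heavy heart of the argument, and it is also where one must be careful that the ``near class (4)'' cases (where $wz = -xy$ but a sixth-power condition fails) are genuinely hard rather than accidentally tractable. Correctly delineating class (4) — in particular getting the $\varepsilon = \pm 1$ and sixth-versus-third-power distinctions exactly right so that the tractable region and the infinite-order region tile parameter space with no gap — is the crux.
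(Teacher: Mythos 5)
Your architecture matches the paper: tractability is disposed of exactly as you outline (factor out the degenerate case, 2-coloring for generalized (dis)equality, a holographic transformation to an affine signature for class (4), matchgates for class (5)), and hardness is driven by Theorem~\ref{thm:interpolation} once one exhibits recursive anti-gadget pairs generating an infinite group modulo scalars. You also correctly identify that the symmetric case $x = y$ is deferred to the earlier symmetric dichotomy and that the planar boundary case is handled by gadgets that simulate symmetric generators (this is exactly what the paper's Lemma~\ref{lem:planar:AsymmetricHard} does with Gadgets~\ref{gadget:symmetricGenerator1} and~\ref{gadget:symmetricGenerator2}). And you have flagged the right crux: the regime where the natural diagonal anti-gadget pair has finite order is where the work is.

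But precisely there the proposal has a genuine gap: ``check that some gadget in the toolbox has infinite order in every non-tractable corner of the four-parameter space'' has no clear termination, and the paper does not resolve it by searching gadgets over $(w,x,y,z)$ directly. The load-bearing idea you are missing is the symmetrization lemma (Lemma~\ref{lem:algebraicSymmetrization}): the Holant depends on $(w,x,y,z)$ only through $A=wz$, $B=xy$, $C=w^3+z^3$, $D=x+y$, $E=w^3x+yz^3$, $F=w^3y+xz^3$, which after normalizing $|x|=1$, $x=\overline y$ collapses the effective parameter space to $B=1$, $D\in[-2,2]$ real. Then a unary anti-gadget pair forces $A\in\R$ (Lemma~\ref{lem:infOrder:aNotReal}), a new \emph{ternary} recursive anti-gadget pair combined with the root-norm criterion (Lemma~\ref{lem:8RootsSameNorm}) reduces the remaining regime to a real polynomial system in $A,D$ that is verified unsatisfiable by \textsc{CylindricalDecomposition} in Mathematica (Lemma~\ref{lem:infOrder:notHumanCheckable}), and the $A=-1$ boundary requires yet more anti-gadget pairs tracking $E,F$, with the exceptional values $E,F\in\{0,\pm 2i\}$ shown to coincide exactly with the tractable classes (Lemmas~\ref{lem:symmetrizedTractable:EF:BothZero}--\ref{lem:symmetrizedTractable:EF:OneZero}). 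Also note that your root-of-unity framing is not how the paper splits cases: Lemma~\ref{lem:infOrder:xOverYNormNotOne} is invoked only under $|x|\neq|y|$, and the entire unit-circle locus $|x|=|y|$ (not merely roots of unity) is funneled through the symmetrized variables. Without the symmetrization lemma, the ternary gadgets, and the computer-algebra certificate, the claimed case analysis is an outline rather than a proof.
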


We prove the tractability part of Theorem~\ref{thm:dichotomy} next. The proof of $\SHARPP$-hardness begins in section~\ref{sec:anti-gadgets:main} and continues in section~\ref{sec:anti-gadgets:appendix} of the appendix.
\begin{proof}[Proof of tractability.]
 For any signature grid $\Omega$, $\Holant_{\Omega}$ is the product of the Holant on each connected component. Case~(1) is degenerate. We can break up every edge into two unary functions, and then the Holant value is a simple product over all vertices. For case~(2), on any connected component, the Holant value is zero unless it is bipartite, and if so a 2-coloring algorithm can be used to find the only two pertinent assignments, complements of each other. Similarly, for case~(3), only the all-0 and the all-1 assignments can possibly yield a nonzero value for each connected component. For case~(4), if $w = z = 0$, then this is already covered by case~(2). Otherwise $w z \ne 0$ since $w^6 = \varepsilon z^6$, in which case we apply the holographic transformation $\shrinkMatrixTwoRows{\begin{bmatrix} \alpha & 0\\ 0 & \alpha^2 \end{bmatrix}}$ with $\alpha = \varepsilon w^2 / z^2$. Note that $\alpha^3 = \varepsilon w^6 / z^6 = 1$. The edge signature becomes $(w,x,y,z) \shrinkMatrixTwoRows{\begin{bmatrix} \alpha & 0\\ 0 & \alpha^2 \end{bmatrix}^{\tensor 2}} = (\alpha^2 w, x, y, \alpha z)$, while $=_3$ is unchanged since $\shrinkMatrixTwoRows{\left(\begin{bmatrix} \alpha & 0\\ 0 & \alpha^2 \end{bmatrix}^{-1}\right)^{\tensor 3}} = I$, the 8-by-8 identity matrix~\cite{Val08, CL11}. This reduces to the case $w z = - x y \myand w^2 = \varepsilon z^2 \myand x^2 = \varepsilon y^2$, where $\varepsilon = \pm 1$. This edge signature belongs to the so-called affine function family and is tractable by Theorem 5.2 of~\cite{CLX09a}. Further discussion on this case is in section~\ref{sec:affine}. If the input is restricted to planar graphs and $w^3 = \varepsilon z^3 \myand x = \varepsilon y$, where $\varepsilon = \pm 1$, then we use the theory of holographic algorithms with matchgates to compute the Holant in polynomial time (see~\cite{CL11}).
\end{proof}

\section{Anti-Gadgets in Action} \label{sec:anti-gadgets:main}
Now we use our new idea of anti-gadgets to construct explicit matrices of infinite order.

\begin{lemma} \label{lem:infOrder:xOverYNormNotOne}
 If $w z \ne x y$, $w x y z \ne 0$, and $|x| \ne |y|$, then $\holant{(w,x,y,z)}{{=}_3}$ is $\SHARPP$-hard.
\end{lemma}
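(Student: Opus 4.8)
The plan is to apply Theorem~\ref{thm:interpolation} with $\mathcal{G} = \{(w,x,y,z)\}$ and $\mathcal{R} = \{{=}_3\}$. This requires two things: first, checking that the present hypotheses imply the side conditions of that theorem; second, exhibiting a finite set $\mathcal{S}$ of recursive $\GR$-gadgets with nonsingular transition matrices whose transition matrices generate an infinite group modulo scalars.

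For the first part, $w x y z \ne 0$ gives $w \ne 0$ and $x \ne 0$, so $(w,z) \ne (0,0)$ and $(x,y) \ne (0,0)$; $w z \ne x y$ is assumed directly; and $|x| \ne |y|$ forces both $x \ne y$ and $x \ne -y$ (since $|{-y}| = |y|$), so in particular $(w^3 \ne -z^3 \myor x \ne -y)$ holds. Thus every side condition of Theorem~\ref{thm:interpolation} is met.

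For the second part I would take $\mathcal{S} = \{\text{Gadget}~\ref{gadget:main:binary:0:110}, \text{Gadget}~\ref{gadget:main:binary:0:111}\}$, the two arity-$2$ recursive gadgets differing only in the orientation of one edge, with $M_{\ref{gadget:main:binary:0:110}} = \left(\shrinkMatrixTwoRows{\begin{bmatrix} w & x\\ y & z\end{bmatrix}}\right)^{\tensor 2}\diag(w,x,y,z)$ and $M_{\ref{gadget:main:binary:0:111}} = \left(\shrinkMatrixTwoRows{\begin{bmatrix} w & x\\ y & z\end{bmatrix}}\right)^{\tensor 2}\diag(w,y,x,z)$. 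Both are nonsingular, since their determinants are nonzero multiples of $(wz-xy)^4\, wxyz$, which is nonzero by hypothesis. The group $H$ generated by $M_{\ref{gadget:main:binary:0:110}}$ and $M_{\ref{gadget:main:binary:0:111}}$ modulo scalars contains the anti-gadget product $M_{\ref{gadget:main:binary:0:110}}^{-1} M_{\ref{gadget:main:binary:0:111}} = \diag(1, y/x, x/y, 1)$ — the leading-edge contributions cancel, exactly the computation in Section~\ref{sec:gadgetsAndAnti-Gadgets}. Its $n$th power is $\diag(1, (y/x)^n, (x/y)^n, 1)$; comparing first coordinates, any two such powers that are linearly dependent must in fact be equal, so this element has finite order modulo scalars if and only if $y/x$ is a root of unity. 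Since $|x| \ne |y|$ we have $|y/x| \ne 1$, so $y/x$ is not a root of unity, hence $H$ is infinite and Theorem~\ref{thm:interpolation} gives that $\holant{(w,x,y,z)}{{=}_3}$ is $\SHARPP$-hard.

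There is no serious obstacle here; the one substantive point is the realization that $|x| \ne |y|$ is precisely what rules out $y/x$ being a root of unity, and this is the only place the hypothesis is used with its full strength — so that is where the write-up should be careful, while the determinant checks and the verification of the side conditions of Theorem~\ref{thm:interpolation} are routine. I also note that Theorem~\ref{thm:interpolation} abstracts away realizability: one never has to exhibit the anti-gadget of Gadget~\ref{gadget:main:binary:0:110} as an actual graph fragment. (If one wished to, the case split is: either $M_{\ref{gadget:main:binary:0:110}}$ already has infinite order modulo scalars, so its own powers suffice; or it has finite order $k \ge 2$, in which case composing $k-1$ copies of Gadget~\ref{gadget:main:binary:0:110} realizes $\lambda M_{\ref{gadget:main:binary:0:110}}^{-1}$, and composing that with Gadget~\ref{gadget:main:binary:0:111} realizes $\lambda \diag(1, y/x, x/y, 1)$.)
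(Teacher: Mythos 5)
Your proof is correct and follows the same route as the paper: you use the same pair of gadgets, the same anti-gadget product $M_{\ref{gadget:main:binary:0:110}}^{-1} M_{\ref{gadget:main:binary:0:111}} = \diag(1, y/x, x/y, 1)$, and conclude via Theorem~\ref{thm:interpolation}. The only difference is that you spell out the nonsingularity determinant and the ``infinite order mod scalars'' argument, which the paper leaves as assertions; these elaborations are accurate.
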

\begin{proof}
 The transition matrices for Gadgets~\ref{gadget:main:binary:0:110} and~\ref{gadget:main:binary:0:111} are $M_{\ref{gadget:main:binary:0:110}} = \shrinkMatrixTwoRows{\begin{bmatrix} w & x\\ y & z \end{bmatrix}^{\tensor 2}} \diag(w, x, y, z)$ and $M_{\ref{gadget:main:binary:0:111}} = \shrinkMatrixTwoRows{\begin{bmatrix} w & x\\ y & z \end{bmatrix}^{\tensor 2}} \diag(w, y, x, z)$, both nonsingular. Since the matrix $M_{\ref{gadget:main:binary:0:110}}^{-1} M_{\ref{gadget:main:binary:0:111}} = \diag(1, y/x, x/y, 1)$ has infinite order up to a scalar, we are done by Theorem~\ref{thm:interpolation}.
\end{proof}

\begin{lemma} \label{lem:infOrder:x=0,wyz!=0}
 If $x = 0$ and $w y z \ne 0$, then $\holant{(w,x,y,z)}{{=}_3}$ is $\SHARPP$-hard.
\end{lemma}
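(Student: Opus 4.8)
The plan is to mirror the proof of Lemma~\ref{lem:infOrder:xOverYNormNotOne} and reduce to Theorem~\ref{thm:interpolation}, this time using the recursive gadget pair Gadget~\ref{gadget:main:unary:0:110} and Gadget~\ref{gadget:main:unary:0:111} foreshadowed at the end of Section~\ref{sec:gadgetsAndAnti-Gadgets}. The first step is to check that the hypothesis $x = 0 \myand w y z \ne 0$ forces all the side conditions of Theorem~\ref{thm:interpolation} on $(w,x,y,z)$: since $y \ne 0$ we have $x = 0 \ne y$, hence $x \ne y$; since $w z \ne 0 = x y$ we have $w z \ne x y$; clearly $(w,z) \ne (0,0)$ and $(x,y) \ne (0,0)$; and $x = 0 \ne -y$, so the disjunction $w^3 \ne -z^3 \myor x \ne -y$ holds. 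Thus it only remains to exhibit a finite set $\mathcal{S}$ of recursive $\GR$-gadgets with nonsingular transition matrices whose transition matrices generate (modulo scalars) an infinite group.

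For $\mathcal{S}$ I would take $\{\text{Gadget~\ref{gadget:main:unary:0:110}}, \text{Gadget~\ref{gadget:main:unary:0:111}}\}$, both of arity $1$. Specializing the gadget compositions recorded in Section~\ref{sec:gadgetsAndAnti-Gadgets} to $x = 0$, one gets $M_{\ref{gadget:main:unary:0:111}} = \shrinkMatrixTwoRows{\begin{bmatrix} w^3 & w y^2 \\ w^2 y & y^3 + z^3 \end{bmatrix}}$ and $M_{\ref{gadget:main:unary:0:110}} = \shrinkMatrixTwoRows{\begin{bmatrix} w^3 & 0 \\ w^2 y & z^3 \end{bmatrix}}$, each with determinant $w^3 z^3 \ne 0$, so both transition matrices are nonsingular and $\mathcal{S}$ is admissible for Theorem~\ref{thm:interpolation}. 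The group $H$ generated by these two matrices modulo scalars contains $M_{\ref{gadget:main:unary:0:110}}^{-1} M_{\ref{gadget:main:unary:0:111}}$, and a direct $2 \times 2$ computation gives $M_{\ref{gadget:main:unary:0:110}}^{-1} M_{\ref{gadget:main:unary:0:111}} = \shrinkMatrixTwoRows{\begin{bmatrix} 1 & y^2/w^2 \\ 0 & 1 \end{bmatrix}}$ — precisely the cancellation that an anti-gadget of Gadget~\ref{gadget:main:unary:0:110} performs against Gadget~\ref{gadget:main:unary:0:111}. Since $y^2/w^2 \ne 0$, the $n$th power is $\shrinkMatrixTwoRows{\begin{bmatrix} 1 & n y^2/w^2 \\ 0 & 1 \end{bmatrix}}$, and these are pairwise linearly independent, so this element has infinite order and $H$ is infinite. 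Theorem~\ref{thm:interpolation} then yields that $\holant{(w,x,y,z)}{{=}_3}$ is $\SHARPP$-hard.

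I do not expect a genuine obstacle; the argument is routine once the gadget pair is fixed. The two points that need care are (i) the $2 \times 2$ matrix arithmetic — in particular verifying that the $w^3 y^3$ and the off-diagonal contributions cancel so that the product is a nontrivial unipotent matrix (hence of infinite order) rather than something that could be finite order — and (ii) noticing why $M_{\ref{gadget:main:unary:0:110}}$ alone will not do: for suitable $w,z$ (for instance when $w^3/z^3$ is a primitive cube root of unity) that matrix already has finite order modulo scalars, and it is exactly the anti-gadget cancellation against Gadget~\ref{gadget:main:unary:0:111} that removes this degeneracy and makes the generated group unconditionally infinite under $x = 0 \myand w y z \ne 0$.
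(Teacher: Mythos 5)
Your proposal is correct and follows exactly the paper's own proof: same gadget pair (Gadgets~\ref{gadget:main:unary:0:110} and~\ref{gadget:main:unary:0:111}), same anti-gadget product $M_{\ref{gadget:main:unary:0:110}}^{-1} M_{\ref{gadget:main:unary:0:111}} = \shrinkMatrixTwoRows{\begin{bmatrix} 1 & y^2/w^2 \\ 0 & 1 \end{bmatrix}}$ yielding a nontrivial unipotent of infinite order, and the same appeal to Theorem~\ref{thm:interpolation}. The only difference is presentational (you expand the transition matrices instead of leaving them factored, and you spell out the side-condition checks), so there is nothing to change.
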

\begin{proof}
 The transition matrices for Gadget~\ref{gadget:main:unary:0:111} and Gadget~\ref{gadget:main:unary:0:110} are $M_{\ref{gadget:main:unary:0:111}} = \shrinkMatrixTwoRows{\begin{bmatrix} w & 0\\ y & z \end{bmatrix} \begin{bmatrix} w^2 & y^2\\ 0 & z^2 \end{bmatrix}}$ and $M_{\ref{gadget:main:unary:0:110}} = \shrinkMatrixTwoRows{\begin{bmatrix} w & 0\\ y & z \end{bmatrix} \begin{bmatrix} w^2 & 0\\ 0 & z^2 \end{bmatrix}}$, both nonsingular. Since $M_{\ref{gadget:main:unary:0:110}}^{-1} M_{\ref{gadget:main:unary:0:111}} = \shrinkMatrixTwoRows{\begin{bmatrix} 1 & y^2 / w^2\\ 0 & 1 \end{bmatrix}}$ has infinite order up to a scalar, we are done by Theorem~\ref{thm:interpolation}.
\end{proof}

\begin{lemma} \label{lem:infOrder:w=0,xyz!=0}
 If $w = 0$ and $x y z \ne 0$, then $\holant{(w,x,y,z)}{{=}_3}$ is $\SHARPP$-hard.
\end{lemma}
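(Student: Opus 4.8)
The plan is to follow the anti-gadget template of Lemmas~\ref{lem:infOrder:xOverYNormNotOne} and~\ref{lem:infOrder:x=0,wyz!=0}. I would exhibit a pair of arity-$1$ recursive $\GR$-gadgets, built from the basic components of Figure~\ref{gadget:simple}, whose transition matrices are nonsingular when $w=0$ and one of which is, up to a nonzero scalar, an anti-gadget of the other once a shared leading factor --- a copy of \ref{gadget:simple:eRight} or \ref{gadget:simple:eLeft} --- is cancelled. New gadgets really are needed: the Klein-four symmetry group of $\holant{(w,x,y,z)}{{=}_3}$, generated by edge reversal (which swaps $x$ and $y$) and vertex complementation (which sends $(w,x,y,z)$ to $(z,y,x,w)$), does not carry the hypothesis $w=0$ onto the already-settled hypothesis $x=0$. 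Moreover only ``unary'' recursive gadgets are usable, namely those avoiding a $\diag(w,x,y,z)$ factor, since that factor is singular at $w=0$; this rules out the binary recursive gadgets of Lemma~\ref{lem:infOrder:xOverYNormNotOne}. For a well-chosen pair $G,G'$ with transition matrices $\ref{gadget:simple:eRight}\,\ref{gadget:simple:projector}\,P\,\ref{gadget:simple:expander}$ and $\ref{gadget:simple:eRight}\,\ref{gadget:simple:projector}\,P'\,\ref{gadget:simple:expander}$, where $P,P'$ are built from copies of \ref{gadget:simple:eRight} and \ref{gadget:simple:eLeft} differing only in internal edge orientation, the cancellation leaves $M_G^{-1}M_{G'}=(\ref{gadget:simple:projector}\,P\,\ref{gadget:simple:expander})^{-1}(\ref{gadget:simple:projector}\,P'\,\ref{gadget:simple:expander})$, and specializing to $w=0$ I expect this to be triangular with off-diagonal entry a monomial multiple of $x-y$ --- unipotent, hence of infinite order modulo scalars, whenever $x\ne y$ --- mirroring the $x=0$ computation that produced a unipotent matrix with off-diagonal entry $y^{2}/w^{2}$.

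With an explicit matrix of infinite order modulo scalars in hand, I would finish by invoking Theorem~\ref{thm:interpolation}. For the signature $(0,x,y,z)$ with $xyz\ne 0$ every hypothesis of that theorem holds automatically --- $wz=0\ne xy$, $(w,z)=(0,z)\ne(0,0)$, $(x,y)\ne(0,0)$, and $w^{3}=0\ne -z^{3}$ --- except the requirement $x\ne y$. The subcase $x=y$, i.e. the symmetric signature $(0,x,x,z)$ with $xz\ne 0$, lies outside the scope of Theorem~\ref{thm:interpolation} (and Lemma~\ref{lem:finish:generalPurposeBinary} fails to supply a projective gadget set precisely when $x=y$); it is handled instead by the known dichotomy for \emph{symmetric} edge functions on $3$-regular graphs~\cite{KC10}, under which $(0,x,x,z)$ with $xz\ne 0$ is $\SHARPP$-hard, being neither degenerate, nor a generalized equality or disequality, nor affine. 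Should the cleanly chosen pair fail to have infinite order for a handful of residual values of $x,y,z$ --- for instance where its eigenvalue ratio is a root of unity --- those would be patched with one more gadget pair or a higher-arity recursive gadget, e.g. one built with the self-loop component \ref{gadget:simple:linearStarter} in the style of Gadget~\ref{gadget:main:unary:4:101010}; the point is that anti-gadget cancellation keeps all such auxiliary computations short.

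The main obstacle is this coverage requirement: one must produce finitely many recursive gadgets --- in practice just a couple --- that are simultaneously nonsingular at $w=0$ and whose transition matrices generate an infinite group modulo scalars for \emph{every} admissible $(x,y,z)$ with $xyz\ne 0$ and $x\ne y$, with no symmetry available to import the $x=0$ construction. A secondary, purely technical nuisance is the bookkeeping: confirming nonsingularity at $w=0$ by choosing gadgets whose determinants are nonzero monomials (avoiding factors such as $y^{3}+z^{3}$ that could vanish), and correctly tracking the bipartite leading/trailing structure when an anti-gadget is composed with its partner.
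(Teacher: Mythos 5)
Your proposal is structurally the same as the paper's proof: exhibit two unary recursive gadgets nonsingular at $w=0$, cancel the shared $\ref{gadget:simple:eRight}$ factor, obtain a triangular/unipotent composition of infinite order modulo scalars, and invoke Theorem~\ref{thm:interpolation}. Your observations about why new gadgets are needed (the Klein-four symmetry sends $w=0$ to $\{w=0,z=0\}$ rather than to $\{x=0,y=0\}$) and why $\diag(w,x,y,z)$-bearing binary gadgets are unusable at $w=0$ are both accurate and match the paper's implicit reasoning. Your remark that the $x=y$ subcase is not settled by Theorem~\ref{thm:interpolation} and must be handed off to the symmetric dichotomy of~\cite{KC10} is correct, and is in fact more careful than the paper's own one-line proof of this lemma, which applies Theorem~\ref{thm:interpolation} without noting that its hypothesis $x\neq y$ is not forced by $w=0\myand xyz\neq 0$.

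The one place where your guess diverges from what actually happens is the specific form of the anti-gadget quotient. The natural two-vertex pair $(\ref{gadget:main:unary:0:110},\ref{gadget:main:unary:0:111})$ at $w=0$ does give an upper-triangular matrix with off-diagonal entry proportional to $x-y$, as you predict, but it is \emph{not} unipotent: the diagonal entries are $x/y$ and $y/x$, so the matrix has finite order modulo scalars whenever $x^2/y^2$ is a root of unity (e.g.\ $x=iy$), even with $x\neq y$. The paper instead pairs Gadget~\ref{gadget:main:unary:0:110} with the self-loop Gadget~\ref{gadget:main:unary:4:101010} (built from $\ref{gadget:simple:linearStarter}$) and gets the genuinely unipotent $z^3\shrinkMatrixTwoRows{\begin{bmatrix}1&y/z\\0&1\end{bmatrix}}$, whose off-diagonal $y/z$ is nonzero as soon as $yz\neq 0$, with no residual cases. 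In other words, what you listed as the ``patch'' gadget is actually the main ingredient; your primary candidate pair would have left a nontrivial family of exceptions to chase.
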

\begin{proof}
 The transition matrices for Gadgets~\ref{gadget:main:unary:0:110} and~\ref{gadget:main:unary:4:101010} are $M_{\ref{gadget:main:unary:0:110}} = \shrinkMatrixTwoRows{\begin{bmatrix} 0 & x\\ y & z \end{bmatrix} \begin{bmatrix} 0 & x y\\ x y & z^2 \end{bmatrix}}$ and $M_{\ref{gadget:main:unary:4:101010}} = \shrinkMatrixTwoRows{\begin{bmatrix} 0 & x\\ y & z \end{bmatrix}}$ $\shrinkMatrixTwoRows{\begin{bmatrix} 0 & x y z^3\\ x y z^3 & x y^2 z^2 + z^5 \end{bmatrix}}$, both nonsingular. Since $M_{\ref{gadget:main:unary:0:110}}^{-1} M_{\ref{gadget:main:unary:4:101010}} = z^3 \shrinkMatrixTwoRows{\begin{bmatrix} 1 & y / z\\ 0 & 1 \end{bmatrix}}$ has infinite order up to a scalar, we are done by Theorem~\ref{thm:interpolation}.
\end{proof}

The remainder of the proof of hardness for Theorem~\ref{thm:dichotomy} appears in section \ref{sec:anti-gadgets:appendix} of the appendix.

\section{Anti-Gadgets and Previous Work}
To further appreciate the usefulness of anti-gadgets, we show how this technique sheds new light on previous results.

One can find \emph{failure conditions} for a binary recursive gadget using the following lemma.
\begin{lemma} \label{lem:4RootsSameNormInText}
 Let $G$ be a binary recursive gadget having nonsingular transition matrix $M$. Then $\{M^i\}_{i \ge 0}$ is a sequence of pairwise linearly independent signatures unless $a_2 |a_1|^2 - |a_3|^2 \overline{a_2} a_0 = 0$, where $x^4 + a_3 x^3 + a_2 x^2 + a_1 x + a_0$ is the characteristic polynomial of $M$.
\end{lemma}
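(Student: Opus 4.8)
The plan is to prove the contrapositive: assume $\{M^i\}_{i\ge 0}$ is not pairwise linearly independent and deduce $a_2|a_1|^2 = |a_3|^2\overline{a_2}a_0$. Since $M$ is a nonsingular $4\times 4$ matrix, every power $M^i$ is nonzero, so a linear dependence between $M^i$ and $M^j$ with $i>j\ge 0$ forces $M^{i-j}=cI$ for some $c\in\C-\{0\}$ and some $i-j\ge 1$; conversely any relation $M^k=cI$ with $k\ge 1$ already makes $M^k$ dependent on $M^0=I$. (This is the familiar fact, used repeatedly in Section~\ref{sec:gadgetsAndAnti-Gadgets}, that the powers of $M$ are pairwise linearly independent exactly when $M$ has infinite order modulo scalars.) So it suffices to assume $M^k=cI$ for some $k\ge 1$ and $c\ne 0$. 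Passing to eigenvalues, each of the four eigenvalues $\lambda_1,\ldots,\lambda_4$ of $M$ satisfies $\lambda^k=c$, hence they all share a common modulus $\rho:=|c|^{1/k}>0$.

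The crux is the following spectral symmetry. Because $|\lambda_j|=\rho$ for every $j$, we have $\rho^2/\lambda_j=\overline{\lambda_j}$, so the eigenvalues of the scaled inverse $\rho^2 M^{-1}$ are exactly the conjugates $\overline{\lambda_1},\ldots,\overline{\lambda_4}$. Consequently its characteristic polynomial is the conjugated polynomial $\overline{p}(x)=x^4+\overline{a_3}x^3+\overline{a_2}x^2+\overline{a_1}x+\overline{a_0}$, since the elementary symmetric functions of the $\overline{\lambda_j}$ are the conjugates of those of the $\lambda_j$. I would then compute this same characteristic polynomial a second, purely formal way: the characteristic polynomial of $M^{-1}$ is the reversal $x^4+(a_1/a_0)x^3+(a_2/a_0)x^2+(a_3/a_0)x+1/a_0$ (here $a_0=\det M\ne 0$), and scaling the matrix by $\rho^2$ replaces the coefficient of $x^{4-j}$ by $\rho^{2j}$ times the old one. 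Equating the two expressions for the characteristic polynomial of $\rho^2 M^{-1}$ coefficient by coefficient yields $\overline{a_3}=\rho^2 a_1/a_0$, $\ \overline{a_2}=\rho^4 a_2/a_0$, $\ \overline{a_1}=\rho^6 a_3/a_0$, and $\overline{a_0}=\rho^8/a_0$, the last of which says $|a_0|^2=\rho^8$.

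Finally I would substitute these relations into the target identity. From $\overline{a_2}=\rho^4 a_2/a_0$ we get $|a_3|^2\overline{a_2}a_0=\rho^4|a_3|^2a_2$; from $\overline{a_3}=\rho^2a_1/a_0$ together with $|a_0|^2=\rho^8$ we get $|a_3|^2=\rho^4|a_1|^2/|a_0|^2=|a_1|^2/\rho^4$; multiplying these gives $|a_3|^2\overline{a_2}a_0=|a_1|^2a_2$, which is precisely $a_2|a_1|^2=|a_3|^2\overline{a_2}a_0$, with no case split on whether $a_2=0$ needed. There is no serious obstacle in the argument: everything after the first paragraph is short linear algebra, and the one conceptual point to state carefully is that the hypothesis "the powers of $M$ eventually repeat up to a scalar" is used \emph{only} through "all eigenvalues of $M$ have equal modulus" — it is exactly this equal-modulus condition that forces $\rho^2 M^{-1}$ to have the conjugate spectrum, which in turn is why the failure condition comes out as the Hermitian-type identity $a_2|a_1|^2=|a_3|^2\overline{a_2}a_0$ rather than a plain polynomial identity in the coefficients.
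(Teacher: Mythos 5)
Your proof is correct and follows essentially the same route as the paper: both arguments reduce the non-independence hypothesis to the statement that all four eigenvalues of $M$ share a common modulus, then exploit the equal-modulus condition through the identity $\overline{\lambda} = \rho^2/\lambda$ to obtain Hermitian-type relations between $a_k$ and $\overline{a_{4-k}}$, and finally substitute into the target identity. The only stylistic difference is that you package the coefficient relations as the statement that $\rho^2M^{-1}$ has characteristic polynomial $\overline{p}(x)$, whereas the paper derives the same relations directly from the elementary symmetric polynomials of the norm-one scaled roots ($\sigma_k = \overline{\sigma_{n-k}}\sigma_n$), leading to the identical equations $\overline{a_2}=\rho^4 a_2/a_0$, $|a_3|^2=|a_1|^2/\rho^4$, and $|a_0|^2=\rho^8$.
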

Analyzing a failure condition such as $a_2 |a_1|^2 - |a_3|^2 \overline{a_2} a_0 = 0$ simultaneously for several gadgets is quite difficult, even with the aid of symbolic computation. Previous work~\cite{CK10, CK11} relied heavily on miraculous cancellations in the failure conditions to contend with this. For example, consider the two gadgets in Figure \ref{fig:TAMC}. They are from~\cite{CK10}, where symmetric (i.e. $x=y$) signatures $(w,x,y,z)$ were considered on $k$-regular graphs.

After a change of variables $X = w z x^{-2}$ and $Y = (w / x)^3 + (z / x)^3$ and making a few assumptions to guarantee that $M_{\ref{gadget:TAMC:m1}}$ and $M_{\ref{gadget:TAMC:m2}}$ are nonsingular (which we omit in this discussion), the failure conditions of Gadgets~\ref{gadget:TAMC:m1} and~\ref{gadget:TAMC:m2} (when restricted to the real numbers) simplify to
\begin{align*}
 (X-1)^3 (X^{k-2} - 1) (X^{k-2} (X + 1)^2 (X^{k-1} + X^{k-2} + X + 3 Y + 1) - Y^3) &= 0, \\
 X^3 (X-1)^3 (X^{k-4} - 1) (X^{k-2} (X + 1)^2 (X^2 + X^{k-2} + X + 3 Y + X^{k-3}) - Y^3) &= 0.
\end{align*}
Assuming that both gadgets fail and $X \notin \{0, \pm 1\}$, this gives two polynomial expressions for $Y^3$. Setting these equal to each other and refactoring results in the contradiction $X^{k-2}(X + 1)^3 (X-1) (X^{k-3}-1) = 0$, implying that either one or the other gadget works. At the time of this discovery, it was a mystery whether there was any underlying explanation for such miraculous cancellations. Now we see how anti-gadgets reveal a better understanding of this same gadget pair.

By assuming that $M_{\ref{gadget:TAMC:m1}}$ \emph{fails} to produce an infinite set of pairwise linearly independent signatures, we have an explicit recursive gadget for $M_{\ref{gadget:TAMC:m1}}^{-1}$. Then $M_{\ref{gadget:TAMC:m1}}^{-1} M_{\ref{gadget:TAMC:m2}} = \diag(1,X,X,1)$ clearly produces an infinite set of pairwise linearly independent signatures unless $X$ is zero or a root of unity. Note that in the ``gadget language'' of $M_{\ref{gadget:TAMC:m1}}^{-1} M_{\ref{gadget:TAMC:m2}}$, the two leading directed edges of Gadget~\ref{gadget:TAMC:m1} and~\ref{gadget:TAMC:m2} simply annihilate each other, as do $k-4$ copies of the vertical edge. The signatures $=_3$ at the degree 3 vertices force the matrix $M_{\ref{gadget:TAMC:m1}}^{-1} M_{\ref{gadget:TAMC:m2}}$ to be diagonal. Thus, with almost no effort we have a strictly stronger result (i.e. over the complex numbers) through the use of an anti-gadget. This also shows that the anti-gadget concept is useful in the symmetric setting as well as the asymmetric setting.

In~\cite{CK11}, a similarly fantastic cancellation occurred involving Gadgets~\ref{gadget:COCOON:m1} and~\ref{gadget:COCOON:m2} (see Figure~\ref{fig:COCOON}). They form a suitable gadget and anti-gadget pair, as $M_{\ref{gadget:COCOON:m1}}^{-1} M_{\ref{gadget:COCOON:m2}}$ is a diagonal matrix. While this diagonal matrix is not as easy to analyze as the previous example, anti-gadgets would \emph{inform} the search for such useful gadgets, even if the analysis is carried out with different techniques.

\section*{Acknowledgements}
We thank Heng Guo for his many insightful comments and suggestions. We also thank him for pointing out an idea similar to that of an anti-gadget that appeared in the finite characteristic case of a very recent paper~\cite{GHLX11}, where finite order is forced by the characteristic. All authors were supported in part by NSF CCF-0914969.

\begin{figure}[ht]
 \centering
 \captionsetup[subfigure]{labelformat=empty}
 \gadgetSubfloat[Gadget \ref{gadget:TAMC:m1}]{
  \begin{tikzpicture}[scale=\scale,transform shape,>=\arrowType,node distance=\nodeDist,semithick]
   \node[external] (0)              {};
   \node[external] (1) [below of=0] {};
   \node[internal] (2) [right of=1] {};
   \node[internal] (3) [right of=0] {};
   \node[external] (4) [right of=2] {};
   \node[external] (5) [right of=3] {};
   \path (0) edge [->] node[pos=\startPos] (e1) {} (3)
         (1) edge [->] node[pos=\startPos] (e2) {} (2)
         (2) edge [<-, ultra thick]                (3)
             edge [-]  node[pos=\endPos]   (e3) {} (4)
         (3) edge [-]  node[pos=\endPos]   (e4) {} (5);
   \begin{pgfonlayer}{background}
    \node[draw=\borderColor,thick,rounded corners,fit = (2) (3) (e1) (e2) (e3) (e4)] {};
   \end{pgfonlayer}
  \end{tikzpicture}} \label{gadget:TAMC:m1}
 \qquad
 \gadgetSubfloat[Gadget \ref{gadget:TAMC:m2}]{
  \begin{tikzpicture}[scale=\scale,transform shape,>=\arrowType,node distance=\nodeDist,semithick]
   \node[external] (0)              {};
   \node[external] (1) [below of=0] {};
   \node[internal] (2) [right of=1] {};
   \node[internal] (3) [right of=0] {};
   \node[external] (4) [right of=2] {};
   \node[external] (5) [right of=3] {};
   \path (0) edge [->]             node[pos=\startPos] (e1) {} (3)
         (1) edge [->]             node[pos=\startPos] (e2) {} (2)
         (2) edge [<-, ultra thick]                            (3)
             edge [-]              node[pos=\endPos]   (e3) {} (4)
             edge [->, loop below] coordinate (c1)             (2)
         (3) edge [-]              node[pos=\endPos]   (e4) {} (5)
             edge [->, loop above] coordinate (c2)             (3);
   \begin{pgfonlayer}{background}
    \node[draw=\borderColor,thick,rounded corners,fit = (2) (3) (c1) (c2) (e1) (e2) (e3) (e4)] {};
   \end{pgfonlayer}
  \end{tikzpicture}} \label{gadget:TAMC:m2}
 \caption{Recursive gadgets from \cite{CK10} on $k$-regular graphs. Bold edges represent parallel edges. In Gadget \ref{gadget:TAMC:m1} (resp.~\ref{gadget:TAMC:m2}), the multiplicity is $k-2$ (resp.~$k-4$) so that the vertices have degree $k$.}
 \label{fig:TAMC}
\end{figure}
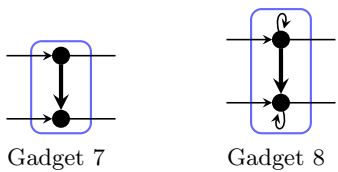

\begin{figure}[ht]
 \centering
 \captionsetup[subfigure]{labelformat=empty}
 \gadgetSubfloat[Gadget \ref{gadget:COCOON:m1}]{
  \begin{tikzpicture}[scale=\scale,transform shape,>=\arrowType,node distance=\nodeDist,semithick]
   \node[external] (0)              {};
   \node[internal] (1) [right of=0] {};
   \node[internal] (2) [above of=1] {};
   \node[external] (3) [right of=1] {};
   \path (0) edge [->]             node[pos=\startPos] (e1) {} (1)
         (1) edge [->, loop above]                             (1)
             edge [-]              node[pos=\endPos]   (e2) {} (3)
         (2) edge [<-, loop above] coordinate (c1)             (2)
             edge [<-, loop below]                             (2);
   \begin{pgfonlayer}{background}
    \node[draw=\borderColor,thick,rounded corners,fit = (c1) (e1) (e2)] {};
   \end{pgfonlayer}
  \end{tikzpicture}} \label{gadget:COCOON:m1}
 \qquad
 \gadgetSubfloat[Gadget \ref{gadget:COCOON:m2}]{
  \begin{tikzpicture}[scale=\scale,transform shape,>=\arrowType,node distance=\nodeDist,semithick]
   \node[external] (0)              {};
   \node[internal] (1) [right of=0] {};
   \node[internal] (2) [above of=1] {};
   \node[external] (3) [right of=1] {};
   \path (0) edge [->]             node[pos=\startPos] (e1) {} (1)
         (1) edge [->, bend left]                              (2)
             edge [<-, bend right]                             (2)
             edge [-]              node[pos=\endPos]   (e2) {} (3)
         (2) edge [<-, loop above] coordinate (c1)             (2);
   \begin{pgfonlayer}{background}
    \node[draw=\borderColor,thick,rounded corners,fit = (c1) (e1) (e2)] {};
   \end{pgfonlayer}
  \end{tikzpicture}} \label{gadget:COCOON:m2}
 \caption{Recursive gadgets from \cite{CK11} on $k$-regular graphs for $k$ even. The gadgets are pictured for $k = 4$ but generalize to all even $k \ge 4$ by adding self loops to the vertices.}
 \label{fig:COCOON}
\end{figure}
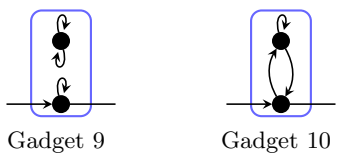

\bibliographystyle{plain}
\bibliography{bib}

\pagebreak
\appendix
\section*{Appendix}
\begin{figure}[h]
 \centering
 \def\decompScale{0.45}
 \def\dottedNodeDist{0.8cm}
 \begin{tikzpicture}[scale=\decompScale,transform shape,>=\arrowType,node distance=\nodeDist,semithick]
  \node[external]  (0)               {};
  \node[external]  (1) [right of= 0] {};
  \node[external]  (2) [right of= 1] {};
  \node[external]  (3) [right of= 2,node distance=\dottedNodeDist] {};
  \node[internal]  (4) [right of= 3] {};
  \node[external]  (5) [above right of=4,yshift=-13] {};
  \node[external]  (6) [below right of=4,yshift=13]  {};
  \node[external]  (7) [right of= 5,node distance=\dottedNodeDist] {};
  \node[external]  (8) [right of= 6,node distance=\dottedNodeDist] {};
  \node[external]  (9) [right of= 7] {};
  \node[external] (10) [right of= 8] {};
  \node[external] (11) [right of= 9] {};
  \node[external] (12) [right of=10] {};
  \node[external] (13) [right of=11,node distance=\dottedNodeDist] {};
  \node[external] (14) [right of=12,node distance=\dottedNodeDist] {};
  \node[internal] (15) [right of=13] {};
  \node[external] (16) [right of=14] {};
  \node[external] (18) [right of=15] {};
  \node[external] (17) [above of=18,yshift=-10] {};
   \path let
          \p1 = (15),
          \p2 = (17)
         in
          node[external,yshift=-5] (n1) at (\x1,\y2) {};
  \node[external] (19) [right of=16] {};
  \node[external] (20) [right of=17,node distance=\dottedNodeDist] {};
  \node[external] (21) [right of=18,node distance=\dottedNodeDist] {};
  \node[external] (22) [right of=19,node distance=\dottedNodeDist] {};
  \node[external] (23) [right of=20] {};
  \node[external] (24) [right of=21] {};
  \node[external] (25) [right of=22] {};
  \node[external] (26) [right of=23] {};
  \node[external] (27) [right of=24] {};
  \node[external] (28) [right of=25] {};
  \node[external] (29) [right of=26,node distance=\dottedNodeDist] {};
  \node[external] (30) [right of=27,node distance=\dottedNodeDist] {};
  \node[external] (31) [right of=28,node distance=\dottedNodeDist] {};
  \node[internal] (32) [right of=29] {};
  \node[external] (33) [right of=30] {};
  \node[external] (34) [right of=31] {};
  \node[external] (36) [right of=32] {};
  \node[external] (35) [above of=36,yshift=-10] {};
   \path let
          \p1 = (32),
          \p2 = (35)
         in
          node[external,yshift=-5] (n2) at (\x1,\y2) {};
  \node[external] (37) [right of=33] {};
  \node[external] (38) [right of=34] {};
  \node[external] (39) [right of=35,node distance=\dottedNodeDist] {};
  \node[external] (40) [right of=36,node distance=\dottedNodeDist] {};
  \node[external] (41) [right of=40,node distance=\dottedNodeDist] {};
   \path let
          \p1 = (41),
          \p2 = (39),
          \p3 = (40)
         in
          node[external] (42) at (\x1, \y2 / 2 + \y3 / 2) {};
   \path let
          \p1 = (42)
         in
          node[external] (n3) at (\x1, \y1 + 7) {};
   \path let
          \p1 = (42)
         in
          node[external] (n4) at (\x1, \y1 - 7) {};
   \path let
          \p1 = (42)
         in
          node[external] (n5) at (\x1 + 4, \y1) {};
  \node[external] (43) [right of=37] {};
  \node[external] (44) [right of=38] {};
  \node[external] (45) [right of=43] {};
  \node[external] (46) [right of=44] {};
  \node[external] (47) [right of=45,node distance=\dottedNodeDist] {};
  \node[external] (48) [right of=46,node distance=\dottedNodeDist] {};
  \node[external] (49) [right of=47] {};
  \node[external] (50) [right of=48] {};
  \node[external] (51) [right of=49] {};
  \node[external] (52) [right of=50] {};
  \node[external] (53) [right of=51,node distance=\dottedNodeDist] {};
  \node[external] (54) [right of=52,node distance=\dottedNodeDist] {};
  \node[internal] (55) [below right of=53,yshift=13] {};
  \node[external] (56) [right of=55] {};
  \path (0) edge [->]                    node[pos=0.3] (e1) {} node[pos=0.7] (e2) {}    (2)
        (2) edge [-, dotted, very thick]                                                (3)
        (3) edge [-]                     node[pos=\startPos] (e3) {}                    (4)
        (4) edge [-, out=90, in=180]     node[pos=0.4]       (e4) {}                    (5)
            edge [-, out=-90, in=180]    node[pos=0.4]       (e5) {}                    (6)
        (5) edge [-, dotted, very thick]                                                (7)
        (6) edge [-, dotted, very thick]                                                (8)
        (7) edge [->]                    node[pos=0.3] (e6) {} node[pos=0.7] (e7) {}   (11)
        (8) edge [-]                     node[pos=0.3] (e8) {} node[pos=0.7] (e9) {}   (12)
       (11) edge [-, dotted, very thick]                                               (13)
       (12) edge [-, dotted, very thick]                                               (14)
       (13) edge [-]                     node[pos=\startPos] (e10) {}                  (15)
       (14) edge [-]                     node[pos=0.3] (e11) {} node[pos=0.7] (e12) {} (19)
       (15) edge [-, bend left]          node[pos=\endPos]   (e13) {}                  (17)
            edge [-]                     node[pos=\endPos]   (e14) {}                  (18)
       (17) edge [-, dotted, very thick]                                               (20)
       (18) edge [-, dotted, very thick]                                               (21)
       (19) edge [-, dotted, very thick]                                               (22)
       (20) edge [<-]                    node[pos=0.3] (e15) {} node[pos=0.7] (e16) {} (26)
       (21) edge [-]                     node[pos=0.3] (e17) {} node[pos=0.7] (e18) {} (45)
       (22) edge [-]                     node[pos=0.3] (e19) {} node[pos=0.7] (e20) {} (46)
       (26) edge [-, dotted, very thick]                                               (29)
       (29) edge [-]                     node[pos=\startPos] (e21) {}                  (32)
       (32) edge [-, bend left]          node[pos=\endPos]   (e22) {}                  (35)
            edge [-]                     node[pos=\endPos]   (e23) {}                  (36)
       (35) edge [-, dotted, very thick]                                               (39)
       (36) edge [-, dotted, very thick]                                               (40)
       (39) edge [-, out=0, in=90]                                                     (42.center)
(42.center) edge [->, out=-90, in=0]                                                   (40)
       (45) edge [-, dotted, very thick]                                               (47)
       (46) edge [-, dotted, very thick]                                               (48)
       (47) edge [->]                    node[pos=0.3] (e24) {} node[pos=0.7] (e25) {} (51)
       (48) edge [<-]                    node[pos=0.3] (e26) {} node[pos=0.7] (e27) {} (52)
       (51) edge [-, dotted, very thick]                                               (53)
       (52) edge [-, dotted, very thick]                                               (54)
       (55) edge [-, out=90, in=0]       node[pos=0.4]     (e29) {}                    (53)
            edge [-, out=-90, in=0]      node[pos=0.4]     (e30) {}                    (54)
            edge [-]                     node[pos=\endPos] (e31) {}                    (56);
  \begin{pgfonlayer}{background}
   \node[draw=\borderColor,thick,rounded corners,fit = (e1) (e2)] {};
   \node[draw=\borderColor,thick,rounded corners,fit = (e3) (e4) (e5)] {};
   \node[draw=\borderColor,thick,rounded corners,fit = (e6) (e7) (e8) (e9)] {};
   \node[draw=\borderColor,thick,rounded corners,fit = (n1) (e10) (e11) (e12) (e13) (e14)] {};
   \node[draw=\borderColor,thick,densely dashed,rounded corners,fit = (e15) (e16)] {};
   \node[draw=\borderColor,thick,densely dashed,rounded corners,fit = (n2) (e21) (e22) (e23)] {};
   \node[draw=\borderColor,thick,densely dashed,rounded corners,fit = (42) (n3) (n4)] {};
   \node[draw=\borderColor,thick,rounded corners,fit = (35) (39) (n5) (e15) (e17) (e18) (e19) (e20)] {};
   \node[draw=\borderColor,thick,rounded corners,fit = (e24) (e25) (e26) (e27)] {};
   \node[draw=\borderColor,thick,rounded corners,fit = (e29) (e30) (e31)] {};
  \end{pgfonlayer}
 \end{tikzpicture}
 \caption{Composition of Gadget \ref{gadget:main:unary:4:101010} using the basic gadget components in Figure \ref{gadget:simple}}
 \label{fig:gadget:decomposition:unary:4:101010}
\end{figure}
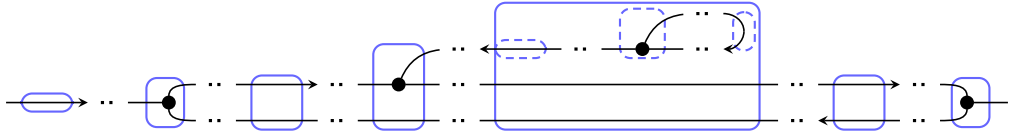

\section{Proof of Group Lemma}
\begin{proof}
 Two matrices are unequal modulo scalar matrices $\lambda I$ if and only if they are linearly independent. If any member of $\mathcal{S}$, as a group element in $H$, has infinite order, then its powers supply an infinite set of pairwise linearly independent signatures. Otherwise they all have finite order, and the group $H$ is identical to the monoid generated by $\mathcal{S}$, i.e., every $h \in H$ is a product over $\mathcal{S}$ with non-negative powers. Such products give a composition of gadgets in $\mathcal{S}$, which is a recursive gadget. By assumption, $H$ has infinite order, so by composing recursive gadgets from $\mathcal{S}$, a breadth-first traversal of the Cayley graph of the monoid generated by $\mathcal{S}$ supplies an arbitrarily large set of recursive gadgets having pairwise linearly independent signatures.

 Before we can use a projective gadget set to project the set of pairwise linearly independent signatures down to arity $1$, we make a small modification to each corresponding gadget: connect a nondegenerate generator $g$ to every trailing edge. This ensures that the bipartite structure of the graph is preserved when applying projector gadgets. We claim that there is some nondegenerate signature $g \in \mathcal{G}$. If this were not the case, then any recursive gadget $s \in \mathcal{S}$ (note $\mathcal{S}$ is nonempty) could be rewritten with all leading edges internally incident to unary signatures. The recurrence matrix of such a gadget is expressible as a product of a column vector and a row vector (by partitioning $s$ into two gadgets with no shared edges), hence the recurrence matrix of $s$ would have rank at most 1, which is less than $2^d$ as promised. Let $a \ge 2$ be the arity of $g$. One can show by induction that any nondegenerate signature has at least one index $i$, such that if we express the signature as a $2$-by-$2^{a-1}$ matrix $M$ indexed by the $i$-th variable for the row and the remaining $a-1$ variables for the column, then $M$ has rank 2. We designate one such dangling edge of $g$ as the leading edge and all other dangling edges as trailing edges. As there are $d$ trailing edges in $s$, we apply $d$ copies of $g$, which corresponds to multiplication by the matrix $M^{\tensor d}$. Since $M$ has full rank, pairwise linear independence of the signatures is preserved. Now rewrite the $2^d$-by-$2^{d (a-1)}$ matrix form of the signature as a column vector in $\C^{2^{d a}}$, indexed by $c_{d (a-1)} \cdots c_1 b_1 \cdots b_d \in \{0,1\}^{d a}$, where $b_1 \cdots b_d$ and $c_1 \cdots c_{d (a-1)}$ are the row and column indices. Denote these vectors as $\{v_i\}_{i \ge 0}$. Finally we can attach projector gadgets to project each $v_i$ down to arity $1$.

 To show $\holant{\mathcal{G} \union \{(X, Y)\}}{\mathcal{R}} \le_T^{\P} \holant{\mathcal{G}}{\mathcal{R}}$, suppose we are given as input a bipartite signature grid $\Omega$ for $\holant{\mathcal{G} \union \{(X, Y)\}}{\mathcal{R}}$, with underlying graph $G = (V, E)$. Let $Q \subseteq V$ be the set of vertices labeled with generator $(X, Y)$, and let $n = |Q|$. By Corollary~\ref{cor:finish:downByAllWires}, there exists a finite projective set containing $f$ gadgets from arity $d$ to $1$, so by Lemma~\ref{lem:finish:projectingEnough} there is some projector gadget $F$ in this set such that at least $n+2$ of the first $(n+2)^f + 1$ vectors of the form $F v_t$ are pairwise linearly independent. It is straightforward to efficiently find such a set; denote it by $S = \{(X_0, Y_0), (X_1, Y_1), \ldots, (X_{n+1}, Y_{n+1})\}$ and let $G_0, G_1, \ldots, G_{n+1}$ be the corresponding gadgets. At most one $Y_t$ can be zero, so without loss of generality assume $Y_t \ne 0$ for $0 \le t \le n$. If we replace every element of $Q$ with a copy of $G_t$, we obtain an instance of $\holant{\mathcal{G}}{\mathcal{R}}$ (note that the correct bipartite structure is preserved), and we denote this new signature grid by $\Omega_t$. Although $\Holant_{\Omega_t}$ is a sum of exponentially many terms, each nonzero term has the form $b X_t^i Y_t^{n-i}$ for some $i$ and for some $b \in \C$ that does not depend on $X_t$ or $Y_t$. Then for some $c_0, c_1, \dots, c_n \in \C$, the sum can be rewritten as \[\Holant_{\Omega_t} = \sum_{0 \le i \le n} c_i X_t^i Y_t^{n-i}.\] Since each signature grid $\Omega_t$ is an instance of $\holant{\mathcal{G}}{\mathcal{R}}$, $\Holant_{\Omega_t}$ can be solved exactly using the oracle. Carrying out this process for every $t$ where $0 \le t \le n$, we arrive at a linear system where the $c_i$ values are the unknowns.
 \begin{eqnarray*}
  \begin{bmatrix}
   Y_{0}^{-n} \cdot \Holant_{\Omega_{0}} \\
   Y_{1}^{-n} \cdot \Holant_{\Omega_{1}} \\
   \vdots \\
   Y_{n}^{-n} \cdot \Holant_{\Omega_{n}}
  \end{bmatrix}
  &=&
  \begin{bmatrix}
   X_{0}^0 Y_{0}^0 & X_{0}^1 Y_{0}^{-1} & \cdots & X_{0}^n Y_{0}^{-n} \\
   X_{1}^0 Y_{1}^0 & X_{1}^1 Y_{1}^{-1} & \cdots & X_{1}^n Y_{1}^{-n} \\
   \vdots & \vdots & \ddots & \vdots \\
   X_{n}^0 Y_{n}^0 & X_{n}^1 Y_{n}^{-1} & \cdots & X_{n}^n Y_{n}^{-n} \\
  \end{bmatrix}
  \begin{bmatrix}
   c_0 \\
   c_1 \\
   \vdots \\
   c_n
  \end{bmatrix}
 \end{eqnarray*}
 The matrix above has entry $(X_r / Y_r)^c$ at row $r$ and column $c$. Due to pairwise linear independence of $(X_r, Y_r)$, $X_r / Y_r$ is pairwise distinct for $0 \le r \le n$. Hence this is a Vandermonde system of full rank, and we can solve it for the $c_i$ values. With these values in hand, we can calculate $\Holant_\Omega = \sum_{0 \le i \le n} c_i X^i Y^{n-i}$ directly, completing the reduction.
\end{proof}

\section{Proof of Lemma \ref{lem:finish:generalPurposeBinary}}
We are given that $x \ne y \myand w z \ne x y \myand (w, z) \ne (0, 0) \myand (x, y) \ne (0, 0) \myand (w^3 \ne -z^3 \myor x \ne -y)$. We prove Lemma~\ref{lem:finish:generalPurposeBinary} by exhibiting projective gadget sets that satisfy the hypotheses of Lemma~\ref{lem:finish:sufficientCondition}. Let $F_i$ be the transition matrix of Gadget $i$ for $\ref{gadget:finish:0:1} \le i \le \ref{gadget:finish:27:1010011}$. There are five cases of projective $\GR$-gadget sets from arity 2 to 1. We omit the verification that each set of projectors forms a projective gadget set from arity 2 to 1 under its particular assumptions since this is a straightforward linear algebra computation. The five cases are

(1) $w z \ne x y \myand w x y z \ne 0 \myand w^3 x + w x y z + w^2 z^2 + y z^3 \ne 0 \myand x^2 \ne y^2$,

(2) $w z \ne x y \myand w x y z \ne 0 \myand w^3 x + w x y z + w^2 z^2 + y z^3 \ne 0 \myand x = -y \myand w^3 \ne -z^3$,

(3) $w z \ne x y \myand w x y z \ne 0 \myand w^3 x + w x y z + w^2 z^2 + y z^3 = 0 \myand x \ne y$,

(4) $w z \ne x y \myand w = 0 \myand z \ne 0 \myand x \ne y$, and

(5) $w z \ne x y \myand x = 0 \myand y \ne 0$.

\noindent
Which projectors are used in each case (and the role of each projector within each case) can be found in Table~\ref{tbl:finishers:generalPurposeBinary}. In all five cases, the vector $u$ in the kernels of $\Phi_1$, $\Phi_2$, and $\Phi_3$ is $(0, -1, 1, 0)$ and the vector $v$ in the kernels of $\Phi_4$, $\Phi_5$, and $\Phi_6$ is $(0, -x, y, 0)$.

All five cases utilize the assumption $w z \ne x y$, i.e., the edge signature is non-degenerate. Under three additional disequality assumptions, the projectors in row~1 of Table~\ref{tbl:finishers:generalPurposeBinary} have the desired properties. The purpose of the remaining four cases is to handle the situation that these three disequalities are not all true.

Case~(2) retains two of the additional disequality assumptions but assumes that $x^2 = y^2$. Since we are considering the asymmetric case, the only option is $x = -y$. By assumption, it is not the case that $x = -y \myand w^3 = -z^3$, so we have $w^3 \ne -z^3$. Under these conditions, the projectors in row~2 of Table~\ref{tbl:finishers:generalPurposeBinary} have the desired properties.

Like cases~(1) and~(2), case~(3) retains the assumption that no variable is zero but now considers the case that the polynomial $w^3 x + w x y z + w^2 z^2 + y z^3$ is zero. Given that we are also considering the asymmetric case, i.e., $x \not = y$, the projectors in row~3 of Table~\ref{tbl:finishers:generalPurposeBinary} have the desired properties.

Cases~(4) and~(5) handle the remaining case $w z \ne x y \myand w x y z = 0$. The assumptions $w z \ne x y \myand (w,z) \ne (0,0) \myand (x,y) \ne (0,0)$ imply that at most one of $w$, $x$, $y$, and $z$ is zero. By switching the role of 0 and 1 via the holographic transformation $\shrinkMatrixTwoRows{\begin{bmatrix} 0 & 1\\ 1 & 0\end{bmatrix}}$, the complexity of the case $y z = 0$ is the same as the complexity of the case $w x = 0$. Therefore, we assume that $y z \ne 0$. Case~(4) considers $w$ as zero, so $z$ is nonzero by assumption. Then still within the asymmetric case, the projectors in row~4 of Table~\ref{tbl:finishers:generalPurposeBinary} have the desired properties. Case~(5) considers $x$ as zero, so $y$ is nonzero by assumption and the projectors in row~5 of Table~\ref{tbl:finishers:generalPurposeBinary} have the desired properties.

These five cases cover all settings not excluded by the assumptions in the statement of the lemma, so the proof is complete.
\begin{table}[t]
 \centering
 \begin{align*}
  \begin{array}{|c||c|c|c||c|c|c||c|}\hline
                 & \Phi_1                                         & \Phi_2                                            & \Phi_3                                                & \Phi_4                                         & \Phi_5                                            & \Phi_6                             & \Phi_7                                            \\\hline\hline
   \text{Case 1} & \multirow{5}{*}{$F_{\ref{gadget:finish:0:1}}$} & F_{\ref{gadget:finish:13:1000000}}                & \multirow{2}{*}{$F_{\ref{gadget:finish:13:1010100}}$} & \multirow{5}{*}{$F_{\ref{gadget:finish:0:1}}$} & \multirow{5}{*}{$F_{\ref{gadget:finish:2:1000}}$} & F_{\ref{gadget:finish:2:1011}}     & \multirow{2}{*}{$F_{\ref{gadget:finish:3:1010}}$} \\\cline{1-1}\cline{3-3}\cline{7-7}
   \text{Case 2} &                                                & \multirow{3}{*}{$F_{\ref{gadget:finish:2:1010}}$} &                                                       &                                                &                                                   & F_{\ref{gadget:finish:13:1000100}} &                                                   \\\cline{1-1}\cline{4-4}\cline{7-8}
   \text{Case 3} &                                                &                                                   & F_{\ref{gadget:finish:27:1010011}}                    &                                                &                                                   & F_{\ref{gadget:finish:27:1000010}} & F_{\ref{gadget:finish:14:1001010}}                \\\cline{1-1}\cline{4-4}\cline{7-8}
   \text{Case 4} &                                                &                                                   & \multirow{2}{*}{$F_{\ref{gadget:finish:19:1010100}}$} &                                                &                                                   & F_{\ref{gadget:finish:19:1000110}} & \multirow{2}{*}{$F_{\ref{gadget:finish:3:1000}}$} \\\cline{1-1}\cline{3-3}\cline{7-7}
   \text{Case 5} &                                                & F_{\ref{gadget:finish:19:1010111}}                &                                                       &                                                &                                                   & F_{\ref{gadget:finish:2:1011}}     &                                                   \\\hline
  \end{array}
 \end{align*}
 \caption{This table indicates which projectors are used in each case (and the role of each projector within each case) in the proof of Lemma~\ref{lem:finish:generalPurposeBinary}. The seven $\Phi_i$ refer to the matrices in Lemma~\ref{lem:finish:sufficientCondition}. As an example, the projective set in case~(1) is $\{F_{\ref{gadget:finish:0:1}}, F_{\ref{gadget:finish:13:1000000}}, F_{\ref{gadget:finish:13:1010100}}, F_{\ref{gadget:finish:0:1}}, F_{\ref{gadget:finish:2:1000}}, F_{\ref{gadget:finish:2:1011}}, F_{\ref{gadget:finish:3:1010}}\}$. Note that $F_{\ref{gadget:finish:0:1}}$ plays the role of both $\Phi_1$ and $\Phi_4$.}
 \label{tbl:finishers:generalPurposeBinary}
\end{table}

\section{Proof of Lemma \ref{lem:hardByUnary}} \label{sec:hardByUnaryProof}
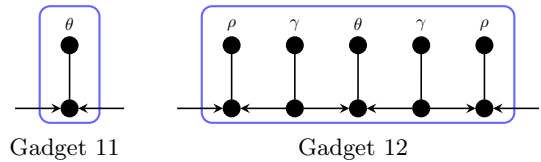
\begin{figure}[b]
 \centering
 \captionsetup[subfigure]{labelformat=empty}
 \gadgetSubfloat[Gadget \ref{gadget:oneWeed}]{
  \begin{tikzpicture}[scale=\scale,transform shape,>=\arrowType,node distance=\nodeDist,semithick]
   \node[external] (0)              {};
   \node[internal] (1) [right of=0] {};
   \node[internal] (2) [above of=1] {};
   \node[external] (3) [right of=1] {};
   \node[above] at (2.north) (l1) {$\theta$};
   \path (0) edge [->] node[pos=\startPos] (e1) {} (1)
         (1) edge [-]                              (2)
             edge [<-] node[pos=\endPos]   (e2) {} (3);
   \begin{pgfonlayer}{background}
    \node[draw=\borderColor,thick,rounded corners,fit = (1) (2) (e1) (e2) (l1)] {};
   \end{pgfonlayer}
  \end{tikzpicture}} \label{gadget:oneWeed}
 \gadgetSubfloat[Gadget \ref{gadget:threeWeeds}]{
  \begin{tikzpicture}[scale=\scale,transform shape,>=\arrowType,node distance=\nodeDist,semithick]
   \node[external] (0)               {};
   \node[internal] (1)  [right of=0] {};
   \node[internal] (2)  [above of=1] {};
   \node[internal] (3)  [right of=1] {};
   \node[internal] (4)  [above of=3] {};
   \node[internal] (5)  [right of=3] {};
   \node[internal] (6)  [above of=5] {};
   \node[internal] (7)  [right of=5] {};
   \node[internal] (8)  [above of=7] {};
   \node[internal] (9)  [right of=7] {};
   \node[internal] (10) [above of=9] {};
   \node[external] (11) [right of=9] {};
   \node[above] at (2.north)  (l1) {$\rho$};
   \node[above] at (4.north)  (l2) {$\gamma$};
   \node[above] at (6.north)  (l3) {$\theta$};
   \node[above] at (8.north)  (l4) {$\gamma$};
   \node[above] at (10.north) (l5) {$\rho$};
   \path (0) edge [->] node[pos=\startPos] (e1) {}  (1)
         (1) edge [-]                               (2)
             edge [<-]                              (3)
         (3) edge [-]                               (4)
             edge [->]                              (5)
         (5) edge [-]                               (6)
             edge [<-]                              (7)
         (7) edge [-]                               (8)
             edge [->]                              (9)
         (9) edge [-]                              (10)
             edge [<-] node[pos=\endPos]   (e2) {} (11);
   \begin{pgfonlayer}{background}
    \node[draw=\borderColor,thick,rounded corners,fit = (1) (2) (3) (4) (5) (6) (7) (8) (9) (10) (e1) (e2) (l1) (l2) (l3) (l4) (l5)] {};
   \end{pgfonlayer}
  \end{tikzpicture}} \label{gadget:threeWeeds}
 \caption{Gadgets used to simulate the generator $(0,1,1,1)$}
\end{figure}

The proof of Lemma~\ref{lem:hardByUnary} makes use of the following lemma.
\begin{lemma}[Lemma 3.3 of \cite{KC10}] \label{lem:interpolation:symmetricSimulationOfVertexCover}
 Suppose that $(a,b) \in \C^2 - \{(a,b) \st a b = 1\} - (0,0)$ and let $\mathcal{G}$ and $\mathcal{R}$ be finite signature sets where $(a,1,1,b) \in \mathcal{G}$ and $=_3 \in \mathcal{R}$. Further assume that $\holant{\mathcal{G} \union \{(X_i, Y_i)\} \st 0 \le i < m\}}{\mathcal{R}} \le_T^{\P} \holant{\mathcal{G}}{\mathcal{R}}$ for any $X_i, Y_i \in \C$ and $m \in \Z^+$. Then $\holant{\mathcal{G} \union \{(0,1,1,1)\}}{\mathcal{R}} \le_T^{\P} \holant{\mathcal{G}}{\mathcal{R}}$ and $\holant{\mathcal{G}}{\mathcal{R}}$ is $\SHARPP$-hard.
\end{lemma}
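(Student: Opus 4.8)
The plan is to reduce to simulating the single generator $(0,1,1,1)$ and then appeal to a classical hardness result. Since $(0,1,1,1)$ is the arity-$2$ \textsc{Or} function, $\holant{(0,1,1,1)}{{=}_3}$ is exactly the problem of counting vertex covers in $3$-regular graphs, which is $\SHARPP$-hard. Because $(0,1,1,1)$ lies in $\mathcal{G} \union \{(0,1,1,1)\}$ and ${=}_3 \in \mathcal{R}$, this problem trivially reduces to $\holant{\mathcal{G} \union \{(0,1,1,1)\}}{\mathcal{R}}$. So it suffices to establish the first claim, $\holant{\mathcal{G} \union \{(0,1,1,1)\}}{\mathcal{R}} \le_T^{\P} \holant{\mathcal{G}}{\mathcal{R}}$; composing the two reductions then shows $\holant{\mathcal{G}}{\mathcal{R}}$ is $\SHARPP$-hard.

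For the simulation I would build a $\GR$-gate $\Gamma$ whose signature is a nonzero scalar multiple of $(0,1,1,1)$, and replace every vertex labeled $(0,1,1,1)$ in an input instance by a copy of $\Gamma$; since $\Gamma$ uses only $(a,1,1,b)$, ${=}_3$, and a fixed finite set of unary generators, the interpolation hypothesis turns this into a Turing reduction to $\holant{\mathcal{G}}{\mathcal{R}}$ (each copy of $\Gamma$ contributes the same nonzero scalar factor, which is divided out). The raw materials are: the binary generator $(a,1,1,b)$, with invertible matrix $M = \begin{bmatrix} a & 1 \\ 1 & b \end{bmatrix}$ since $ab \ne 1$; the recognizer ${=}_3$; and arbitrary unary generators, usable at no cost by the interpolation hypothesis. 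Gadget~\ref{gadget:oneWeed}, a copy of ${=}_3$ with a unary $\theta$ on one leg, has transition matrix the diagonal binary signature $\diag(\theta_0,\theta_1)$, so every diagonal binary generator is at hand. Gadget~\ref{gadget:threeWeeds} threads four copies of $(a,1,1,b)$ through such diagonal blocks arranged palindromically, yielding a transition matrix
\[
 \diag(\rho)\, M\, \diag(\gamma)\, M\, \diag(\theta)\, M\, \diag(\gamma)\, M\, \diag(\rho),
\]
which is automatically symmetric, matching the symmetry of the target $(0,1,1,1)$.

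The remaining and main step is a linear-algebra calculation: for every admissible $(a,b)$, i.e.\ $ab \ne 1$ and $(a,b) \ne (0,0)$, choose nonzero unaries $\rho,\gamma,\theta$ (or, in the easy subcases, a shorter chain of $M$'s) so that the resulting matrix is a nonzero multiple of $\begin{bmatrix} 0 & 1 \\ 1 & 1\end{bmatrix}$ --- equivalently, its $(0,0)$-entry vanishes and its off-diagonal entries agree with its $(1,1)$-entry and are nonzero. I would split on $(a,b)$: if $ab = 0$, say $a=0$ so $b\ne0$, then $\diag(b,1)\,M\,\diag(b,1) = b\begin{bmatrix} 0 & 1 \\ 1 & 1\end{bmatrix}$ already works with a single $M$; if $ab \notin \{0,-1\}$, two copies of $M$ suffice --- use $\gamma$ to kill the $(0,0)$-entry (which leaves the off-diagonal equal to $a(1-ab)\ne0$ and the $(1,1)$-entry a nonzero multiple of $1+ab$), then balance with the flanking diagonals; and the case $ab=-1$, where that construction collapses, is exactly what forces the full four-$M$ three-weed and its extra parameters. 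Certifying that across \emph{all} admissible $(a,b)$ one can hit the target line while keeping $\rho,\gamma,\theta$ nonzero is the step I expect to dominate the argument; once it is in hand, the reduction to counting vertex covers completes the proof.
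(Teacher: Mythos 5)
The paper does not prove this statement; it quotes it verbatim as Lemma~3.3 of \cite{KC10}, so there is no in-paper proof for your attempt to be checked against. Your outline --- manufacture a $\GR$-gate whose binary signature is a nonzero multiple of $(0,1,1,1)$ using only $(a,1,1,b)$, $=_3$, and the freely available unaries, then invoke $\SHARPP$-hardness of counting vertex covers on $3$-regular graphs --- is the right framework, and Gadgets~\ref{gadget:oneWeed} and~\ref{gadget:threeWeeds} are indeed the constructions the paper itself feeds into this lemma in the proof of Lemma~\ref{lem:hardByUnary}.

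Your gadget signatures, however, omit the boundary copies of $M$. A $\GR$-gate that replaces a degree-$2$ generator must have both dangling edges attached internally to degree-$2$ (edge-function) vertices, or the replacement puts two $=_3$ vertices adjacent and destroys the $(2,3)$-bipartite structure. Hence the signature of Gadget~\ref{gadget:oneWeed} is $M\,\diag(\theta_0,\theta_1)\,M$, not $\diag(\theta_0,\theta_1)$; you can confirm this against the paper's explicit use of Gadget~\ref{gadget:oneWeed} in the proof of Lemma~\ref{lem:hardByUnary}, where with $w=0$ and $\theta=\frac{1}{x}\left(\frac{z}{y^2},\frac{1}{z}\right)$ the product of the three factors gives exactly $(x/z,1,1,2z/x)$. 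Likewise, Gadget~\ref{gadget:threeWeeds} chains six, not four, copies of the edge function, so its signature is $M\,\diag(\rho)\,M\,\diag(\gamma)\,M\,\diag(\theta)\,M\,\diag(\gamma)\,M\,\diag(\rho)\,M$. Since $M$ is invertible (because $ab\neq 1$) you may absorb the boundary factors by targeting $M^{-1}\shrinkMatrixTwoRows{\begin{bmatrix}0&1\\1&1\end{bmatrix}}M^{-1}$ instead, but your concrete parameter settings do not survive the correction: for instance, when $a=0$ the genuine one-$M$ signature $M\,\diag(\theta)\,M$ has $(0,0)$-entry equal to $\theta_1$, so forcing it to vanish collapses the whole matrix, and a single $M$ does not in fact suffice in that case. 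Finally, you are candid that the real content --- a case analysis certifying that for every admissible $(a,b)$ one can pick nonzero unaries making the resulting matrix proportional to $(0,1,1,1)$ --- is not carried out; that analysis is exactly what the cited \cite{KC10} proof supplies, so as written this is a plan for a proof rather than a proof.
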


\begin{proof}[Proof of Lemma \ref{lem:hardByUnary}.]
 Since $\holant{(0,1,1,1)}{{=}_3}$, \#\textsc{VertexCover} on 3-regular graphs, is $\SHARPP$-hard, we only need to show how to simulate the generator signature $(0,1,1,1)$. The assumptions $w z \ne x y \myand (w,z) \ne (0,0) \myand (x,y) \ne (0,0)$ imply that at most one of $w$, $x$, $y$, and $z$ is zero. By switching the role of 0 and 1 via the holographic transformation $\shrinkMatrixTwoRows{\begin{bmatrix} 0 & 1\\ 1 & 0\end{bmatrix}}$, the complexity of the case $y z = 0$ is the same as the complexity of the case $w x = 0$. Therefore, we assume that $y z \ne 0$.

 If $w = 0$, then Gadget~\ref{gadget:oneWeed} with $\theta = \frac{1}{x} \left(\frac{z}{y^2}, \frac{1}{z}\right)$ simulates $(x / z, 1, 1, 2 z / x)$, which can in turn simulate $(0,1,1,1)$ by Lemma~\ref{lem:interpolation:symmetricSimulationOfVertexCover}. If $x = 0$, then Gadget~\ref{gadget:oneWeed} with $\theta = \frac{1}{w} \left(\frac{1}{y}, \frac{y}{z^2}\right)$ simulates $(w / y, 1, 1, 2 y / w)$, which can in turn simulate $(0,1,1,1)$ by Lemma~\ref{lem:interpolation:symmetricSimulationOfVertexCover}. If $w x \neq 0 \myand w z = -x y$, then Gadget~\ref{gadget:oneWeed} with $\theta = \frac{1}{x y} \left(\frac{2 x}{w}, \frac{w}{x}\right)$ simulates $(3 w / y, 1, 1, 3 y / w)$, which can in turn simulate $(0,1,1,1)$ by Lemma~\ref{lem:interpolation:symmetricSimulationOfVertexCover}. Finally if $w x \ne 0 \myand w z \ne x y \myand w z \ne -x y$, then Gadget~\ref{gadget:threeWeeds} with $\theta = \frac{w z + x y}{w x(w z - x y)} \left(\frac{-x}{w}, \frac{w}{x}\right)$, $\gamma = \frac{1}{(w z - x y)}\left(\frac{-1}{w x}, \frac{w x}{y z (w z + x y)}\right)$, and $\rho = (x z, -w y)$ simulates $(0,1,1,1)$.
\end{proof}

\section{More Anti-Gadgets in Action} \label{sec:anti-gadgets:appendix}
For the remainder of the proof of $\SHARPP$-hardness of Theorem~\ref{thm:dichotomy}, we use our anti-gadget technique in combination with Lemmas~\ref{lem:2RootsSameNorm}, \ref{lem:4RootsSameNorm}, and~\ref{lem:8RootsSameNorm}. In the contrapositive, these lemmas provide sufficient conditions to conclude that a matrix has infinite order (up to a scalar). Their proofs follow from a few observations. For monic polynomials in $\C[X]$ of degree $n$ with roots $\lambda_i$ for $1 \le i \le n$ of the same nonnegative norm $r \in \R$, let $a_k \in \C$ be the coefficient of $X^k$ and $\sigma_k$ the elementary symmetric polynomial of degree $k$ in $\lambda_i / r$ for $1 \le i \le n$, the norm one (scaled) roots.\footnote{This argument assumes $r \not = 0$.
However, when  $r = 0$, the conclusion still holds trivially.} Thus, $a_k = (-r)^{n-k} \sigma_{n-k}$. By being norm one, $\sigma_k = \overline{\sigma_{n-k}} \sigma_n$, $a_k = (-1)^n r^{n - 2 k} \overline{a_{n-k}} \sigma_n$, and $|a_k| = r^{n - 2 k} |a_{n-k}|$,
for $0 \le k < n$.
\begin{lemma}[Lemma 4.4 in \cite{KC10}] \label{lem:2RootsSameNorm}
 If both roots of $X^2 + a_1 X + a_0 \in \C[X]$ have the same norm, then $a_1 |a_0| = \overline{a_1} a_0$. If further $a_0 a_1 \ne 0$, then $\operatorname{Arg}(a_1^2) = \operatorname{Arg}(a_0)$ thus $a_1^2 / a_0 \in \R^+$.
\end{lemma}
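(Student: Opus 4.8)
The plan is to apply Vieta's formulas and exploit the elementary fact that $\overline{\lambda} = r^2/\lambda$ for any complex number $\lambda$ of norm $r > 0$. Write the two roots as $\lambda_1, \lambda_2$, so that $a_1 = -(\lambda_1 + \lambda_2)$ and $a_0 = \lambda_1 \lambda_2$, and set $r = |\lambda_1| = |\lambda_2|$, whence $|a_0| = |\lambda_1||\lambda_2| = r^2$. First I would dispose of the degenerate case $r = 0$: then both roots vanish, so $a_0 = a_1 = 0$, the claimed identity $a_1|a_0| = \overline{a_1}a_0$ holds trivially, and the extra hypothesis $a_0 a_1 \ne 0$ of the second assertion is never met. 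This is the case flagged in the footnote, and it must be separated out precisely because the substitution $\overline{\lambda_i} = r^2/\lambda_i$ presumes $\lambda_i \ne 0$.

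For $r > 0$, using $\overline{\lambda_i} = r^2/\lambda_i$, I would compute
\[\overline{a_1}\,a_0 = -(\overline{\lambda_1} + \overline{\lambda_2})\,\lambda_1\lambda_2 = -\left(\frac{r^2}{\lambda_1} + \frac{r^2}{\lambda_2}\right)\lambda_1\lambda_2 = -r^2(\lambda_1 + \lambda_2) = r^2 a_1 = |a_0|\,a_1,\]
which is exactly the first assertion. (Equivalently, one may quote the $n = 2$ instance of the general identity $a_k = (-1)^n r^{n-2k}\overline{a_{n-k}}\sigma_n$ recorded just before the lemma, with $\sigma_2 = a_0/|a_0|$, obtaining $a_1 = \overline{a_1}\,a_0/|a_0|$ directly.)

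For the second assertion, assume $a_0 a_1 \ne 0$ and multiply the identity $a_1|a_0| = \overline{a_1}a_0$ through by $a_1$ to get $a_1^2|a_0| = a_1\overline{a_1}a_0 = |a_1|^2 a_0$, hence
\[\frac{a_1^2}{a_0} = \frac{|a_1|^2}{|a_0|} \in \R^+,\]
so $\operatorname{Arg}(a_1^2) = \operatorname{Arg}(a_0)$. There is no genuine obstacle here — the argument is a two-line computation with Vieta's relations — the only point requiring any attention is the separate (trivial) treatment of $r = 0$, and, if one wants to be careful, noting that $|a_1|^2/|a_0|$ is a strictly positive real precisely because $a_0 a_1 \ne 0$.
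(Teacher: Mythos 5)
Your proof is correct and takes essentially the same route as the paper: the paper does not prove this lemma directly (it is cited from~\cite{KC10}) but instead records the general identity $a_k = (-1)^n r^{n-2k}\overline{a_{n-k}}\,\sigma_n$ in the paragraph preceding the lemma, of which your Vieta computation is exactly the $n=2$, $k=1$ specialization, as you yourself note parenthetically. Your explicit treatment of the $r=0$ case matches the paper's footnote, and the derivation of the second assertion from the first by multiplying through by $a_1$ is a clean and correct way to finish.
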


\begin{lemma} \label{lem:4RootsSameNorm}
 If all roots of $X^4 + a_3 X^3 + a_2 X^2 + a_1 X + a_0 \in \C[X]$ have the same norm, then $a_2 |a_1|^2 = |a_3|^2 \overline{a_2} a_0$.
\end{lemma}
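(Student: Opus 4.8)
The plan is to read off the identity directly from the symmetric-function relations recorded in the paragraph immediately preceding the statement, specialized to $n = 4$. Write $X^4 + a_3 X^3 + a_2 X^2 + a_1 X + a_0 = \prod_{i=1}^4 (X - \lambda_i)$ with $|\lambda_i| = r$ for all $i$. First I would dispose of the degenerate case $r = 0$: then every $\lambda_i = 0$, so $a_0 = a_1 = a_2 = a_3 = 0$ and both sides of $a_2 |a_1|^2 = |a_3|^2 \overline{a_2} a_0$ vanish (this is the case the footnote flags). So assume $r > 0$, and let $\sigma_4 = \prod_{i=1}^4 \lambda_i / r$, a complex number of modulus $1$.

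Next I would record the three instances of the general relations that are actually needed. Taking $k = 2$ in $a_k = (-1)^n r^{n-2k}\, \overline{a_{n-k}}\, \sigma_n$ gives $a_2 = \overline{a_2}\, \sigma_4$ (the exponent of $r$ is $0$, and $(-1)^4 = 1$). Taking $k = 0$ in the same relation and using $a_4 = 1$ gives $a_0 = r^4 \sigma_4$. Finally, the modulus relation $|a_k| = r^{n-2k} |a_{n-k}|$ with $k = 1$ gives $|a_1| = r^2 |a_3|$, hence $|a_1|^2 = r^4 |a_3|^2$.

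Then the conclusion is a one-line substitution, valid for all coefficient values once $r > 0$ is assumed: $a_2 |a_1|^2 = a_2\, r^4 |a_3|^2 = r^4 |a_3|^2 \bigl(\overline{a_2}\, \sigma_4\bigr) = |a_3|^2 \overline{a_2} \bigl(r^4 \sigma_4\bigr) = |a_3|^2 \overline{a_2}\, a_0$, using in turn $|a_1|^2 = r^4 |a_3|^2$, then $a_2 = \overline{a_2}\,\sigma_4$, then $a_0 = r^4 \sigma_4$. This is exactly the claimed equality.

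The only real ``obstacle'' here is bookkeeping: recognizing that the asserted identity is precisely the $k = 2$ relation $a_2 = \overline{a_2}\,\sigma_4$, packaged with the factor $r^4 |a_3|^2$ supplied by the $k = 0$ and $k = 1$ relations, plus remembering to handle $r = 0$ separately. No root finding, case analysis on the $\lambda_i$, or symbolic computation beyond the elementary-symmetric-function identities already in hand is required; in particular no special treatment of vanishing individual coefficients is needed, since the displayed chain of equalities holds identically.
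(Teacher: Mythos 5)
Your proof is correct and is essentially the argument the paper intends: it specializes the general identities $a_k = (-1)^n r^{n-2k}\,\overline{a_{n-k}}\,\sigma_n$ and $|a_k| = r^{n-2k}|a_{n-k}|$ from the preceding paragraph to $n=4$ with $k=2,0,1$ and chains them together, handling $r=0$ separately exactly as the paper's footnote does. The paper omits the explicit one-line substitution you wrote out, treating Lemma~\ref{lem:4RootsSameNorm} as an immediate consequence of those relations, but there is no difference in method.
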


\begin{lemma} \label{lem:8RootsSameNorm}
 If $\sum_{k=0}^8 a_k X^k \in \C[X]$ is monic and all roots have the same norm, then $a_3^2 |a_1|^2 = |a_7|^2 \overline{a_5}^2 a_0^2$, $a_4 |a_2|^2 = |a_6|^2 \overline{a_4} a_0$, and $|a_3|^2 a_2 = \overline{a_6} |a_5|^2 a_0$.
\end{lemma}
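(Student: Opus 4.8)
The plan is to reduce the statement to the coefficient-symmetry relations already derived in the paragraph immediately preceding the lemma, specialized to $n = 8$. First I would dispose of the degenerate case where the common root-norm $r$ equals $0$: then the polynomial is $X^8$, so $a_0 = a_1 = \cdots = a_7 = 0$, and all three claimed identities read $0 = 0$. So assume $r > 0$ from now on.

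For $n = 8$ the relations recorded in the preamble specialize to $a_k = r^{8 - 2k}\,\overline{a_{8-k}}\,\sigma_8$ and $|a_k| = r^{8 - 2k}\,|a_{8-k}|$ for $0 \le k < 8$, where $\sigma_8$ is the product of the norm-one scaled roots, so that $|\sigma_8| = 1$; moreover, since the polynomial is monic ($a_8 = 1$), taking $k = 0$ gives $a_0 = r^8\,\overline{a_8}\,\sigma_8 = r^8\,\sigma_8$. The key observation is that each of the three target identities is engineered so that, after rewriting every factor on the left-hand side via these substitutions in terms of the complementary coefficient together with a power of $r$ and a power of $\sigma_8$, the accumulated power $r^{16}\sigma_8^2$ (resp. $r^8\sigma_8$) is exactly what is supplied by the factor $a_0^2$ (resp. $a_0$) appearing on the right-hand side.

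Then I would verify each identity by direct substitution. For $a_3^2|a_1|^2 = |a_7|^2\overline{a_5}^2 a_0^2$: write $a_3 = r^2\overline{a_5}\sigma_8$ and $|a_1|^2 = r^{12}|a_7|^2$, so the left side equals $r^{16}\overline{a_5}^2\sigma_8^2|a_7|^2$, and $a_0^2 = r^{16}\sigma_8^2$ makes the right side identical. For $a_4|a_2|^2 = |a_6|^2\overline{a_4}a_0$: here $k = 4 = n/2$ is self-complementary, so $a_4 = \overline{a_4}\sigma_8$; together with $|a_2|^2 = r^8|a_6|^2$ the left side is $r^8\overline{a_4}\sigma_8|a_6|^2$, which equals the right side by $a_0 = r^8\sigma_8$. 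For $|a_3|^2 a_2 = \overline{a_6}|a_5|^2 a_0$: use $|a_3|^2 = r^4|a_5|^2$ and $a_2 = r^4\overline{a_6}\sigma_8$ to obtain $r^8|a_5|^2\overline{a_6}\sigma_8$ on the left, again matching the right via $a_0 = r^8\sigma_8$.

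There is no genuine obstacle here beyond the bookkeeping: all the content is packaged in the preamble relations, which themselves rest on the classical fact that when every root lies on a common circle of radius $r$, complex conjugation permutes the roots with $\overline{\lambda_i}$ proportional to $r^2/\lambda_i$, forcing $\sigma_k$ and $\sigma_{n-k}$ to be complex conjugates up to the factor $\sigma_n$. The only mild point worth flagging is the middle identity, where the self-complementary index $k = 4$ yields $a_4 = \overline{a_4}\sigma_8$ rather than relating two distinct coefficients; one should note that this still produces exactly the cancellation needed.
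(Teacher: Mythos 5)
Your proof is correct and follows exactly the route the paper intends: the paper itself gives no explicit proof of this lemma beyond stating the coefficient relations $a_k = (-1)^n r^{n-2k}\,\overline{a_{n-k}}\,\sigma_n$ and $|a_k| = r^{n-2k}|a_{n-k}|$ in the preceding paragraph, and you have correctly specialized these to $n=8$ and verified all three identities by direct substitution, including the $r=0$ case flagged in the paper's footnote.
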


On directed 3-regular graphs, there are some symmetries under which the Holant is invariant. The next lemma states these symmetries.
\begin{lemma} \label{lem:algebraicSymmetrization}
Let $G$ be a directed 3-regular graph. Then there exists a polynomial $P$ with integer coefficients in six variables, such that for any signature grid $\Omega$ having underlying graph $G$ with vertex signature $=_3$ and edge signature $(w,x,y,z)$, the Holant value is
 \begin{align*}
  \operatorname{Holant}_\Omega = P (w z, x y, w^3 + z^3, x + y, w^3 x + y z^3, w^3 y + x z^3).
 \end{align*}
\end{lemma}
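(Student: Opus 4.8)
The plan is to expand the Holant as a sum over ordered bipartitions of the vertex set, pair each bipartition with its complement, and then use $3$-regularity to collapse everything into an elementary symmetric-function identity. Write $n=|V(G)|$; since the degree sum equals $2|E(G)|$ we have $3n=2|E(G)|$, so $n$ is even. Let $R\subseteq\C[w,x,y,z]$ be the subring generated over $\Z$ by the six quantities $wz$, $xy$, $w^3+z^3$, $x+y$, $w^3x+yz^3$, $w^3y+xz^3$. It suffices to prove $\Holant_\Omega=Z(G)\in R$, since expressing $Z(G)$ through those six generators then yields the required integer polynomial $P$.

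First I would record the contribution of a single ordered bipartition $(V_0,V_1)$ of $V(G)$. Let $m$ be the number of edges having one endpoint in each part, of which $j$ are directed from $V_0$ to $V_1$ (so $m-j$ from $V_1$ to $V_0$). Counting edge-endpoints at the vertices of $V_0$ and using $3$-regularity gives $3|V_0|=2a+m$, where $a$ is the number of edges with both endpoints in $V_0$; likewise $3|V_1|=2d+m$ for the number $d$ of edges inside $V_1$. Hence the assignment $\sigma$ with $\sigma^{-1}(0)=V_0$ contributes $w^{a}x^{j}y^{m-j}z^{d}$, and $Z(G)=\sum_{(V_0,V_1)}w^{a}x^{j}y^{m-j}z^{d}$. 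The decisive point is that $a-d=\frac{3}{2}\bigl(2|V_0|-n\bigr)\in 3\Z$.

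Next I would pair each ordered bipartition with its complement $(V_1,V_0)$. For $n\ge 1$ no ordered bipartition is self-complementary (if $n=0$ then $Z(G)=1\in R$ trivially), so the $2^n$ bipartitions split into $2^{n-1}$ such pairs. The complement keeps $m$ fixed and swaps the two parts, so a pair contributes $w^{a}x^{j}y^{m-j}z^{d}+w^{d}x^{m-j}y^{j}z^{a}$. Putting $e=\min(a,d)$, $a-d=3\ell$ with $\ell\in\Z$, $\ell_{\pm}=\max(\pm\ell,0)$, $P=w^3$, $Q=z^3$, and using $PQ=(wz)^3$, this equals $(wz)^{e}\bigl(P^{\ell_+}Q^{\ell_-}x^{j}y^{m-j}+P^{\ell_-}Q^{\ell_+}x^{m-j}y^{j}\bigr)$; since $wz$ is one of the six generators, it remains only to place the parenthesized factor in $R$. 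Factoring out $(xy)^{\min(j,m-j)}$ reduces this to showing that $P^{p}x^{r}+Q^{p}y^{r}$ and $P^{p}y^{r}+Q^{p}x^{r}$ lie in $R$ for all integers $p,r\ge 0$. That follows by a double induction built on the recursions $P^{p+1}X+Q^{p+1}Y=(P+Q)(P^{p}X+Q^{p}Y)-PQ\,(P^{p-1}X+Q^{p-1}Y)$ and $ux^{r+1}+vy^{r+1}=(x+y)(ux^{r}+vy^{r})-xy\,(ux^{r-1}+vy^{r-1})$, whose base cases $p\in\{0,1\}$, $r\in\{0,1\}$ are exactly the power sums $x^r+y^r\in\Z[x+y,xy]$, $P+Q=w^3+z^3$, $Px+Qy=w^3x+yz^3$, and $Py+Qx=w^3y+xz^3$. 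Summing over the complementary pairs, each contribution lying in $R$ with integer coefficients, gives $Z(G)\in R$.

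The only step I expect to need genuine care is the divisibility $3\mid(a-d)$: this is precisely where $3$-regularity (via the parity $2\mid n$) is used, and it is what forces the \emph{cubes} $w^3,z^3$ --- rather than $w,z$ themselves --- to appear, which is exactly what makes the reduction to symmetric functions possible. All remaining steps are routine bookkeeping. (The six quantities are mildly redundant, since $w^3y+xz^3=(w^3+z^3)(x+y)-(w^3x+yz^3)$; retaining it only serves to make the induction symmetric in $x$ and $y$.)
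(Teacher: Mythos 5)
Your proof is correct and follows essentially the same strategy as the paper's: pair each assignment with its complement, establish $a\equiv d\pmod 3$, factor out $(wz)^{\min(a,d)}(xy)^{\min(b,c)}$, and reduce the remaining factor to the six generators via the standard Newton-type recursions for power sums. The only (minor, and arguably cleaner) difference is how you establish $3\mid(a-d)$: you derive it directly from the degree count $3|V_0|=2a+m$, $3|V_1|=2d+m$ together with the evenness of $n$, whereas the paper proves it by induction on single-vertex switches starting from the all-$0$ assignment.
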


\begin{proof}
 Consider any ${0, 1}$ vertex assignment $\sigma$ with a non-zero valuation. If $\sigma'$ is the complement assignment switching all 0's and 1's in $\sigma$, then for $\sigma$ and $\sigma'$, we have the sum of valuations $w^a x^b y^c z^d + w^d x^c y^b z^a$ for some $a,b,c,d$. Here $a$ (resp.~$d$) is the number of edges connecting two degree 3 vertices both assigned~0 (resp.~1) by $\sigma$. Similarly, $b$ (resp.~$c$) is the number of edges from one degree 3 vertex to another that are assigned 0 and 1 (resp. 1 and 0), in that order, by $\sigma$. We note that
 \begin{align*}
  w^a x^b y^c z^d + w^d x^c y^b z^a
  &= \left\{
  \begin{array}{ll}
   (w z)^{\min(a,d)} (x y)^{\min(b,c)} \left(w^{|a - d|} y^{|b - c|} + x^{|b - c|} z^{|a - d|}\right) & a > d \XOR b > c\\
   (w z)^{\min(a,d)} (x y)^{\min(b,c)} \left(w^{|a - d|} x^{|b - c|} + y^{|b - c|} z^{|a - d|}\right) & \text{otherwise}.
  \end{array}
  \right.
 \end{align*}

 We prove $a \equiv d \pmod{3}$ inductively. For the all-0 assignment, this is clear since every edge contributes a factor $w$ and the number of edges is divisible by 3 for a 3-regular graph. Now starting from any assignment $\sigma$, if we switch the assignment on one vertex from 0 to 1, it is easy to verify that it changes the valuation from $w^a x^b y^c z^d$ to $w^{a'} x^{b'} y^{c'} z^{d'}$, where $a - d = a' - d' + 3$. As every $\{0,1\}$ assignment is obtainable from the all-0 assignment by a sequence of switches, the conclusion $a \equiv d \pmod{3}$ follows.

 Now
 \begin{align*}
  w^a x^b y^c z^d + w^d x^c y^b z^a
  =
  \left\{
  \begin{array}{ll}
   (w z)^{\min(a,d)} (x y)^{\min(b,c)} \left(w^{3k} y^\ell + x^\ell z^{3k}\right) & a > d \XOR b > c\\
   (w z)^{\min(a,d)} (x y)^{\min(b,c)} \left(w^{3k} x^\ell + y^\ell z^{3k}\right) & \text{otherwise}
  \end{array}
  \right.
 \end{align*}
 for some $k, \ell \ge 0$. Consider $w^{3k} y^\ell + x^\ell z^{3k}$ (the other case is similar). Two simple inductive steps
 \begin{align*}
  w^{3k} y^{\ell + 1} + x^{\ell + 1} z^{3k} &= \left(w^{3k} y^\ell + x^\ell z^{3k}\right) (x + y) - x y \left(w^{3k} y^{\ell - 1} + x^{\ell - 1} z^{3k}\right)\\
  w^{3(k + 1)} y^\ell + x^\ell z^{3(k + 1)} &= \left(w^{3k} y^\ell + x^\ell z^{3k}\right) \left(w^3 + z^3\right) - (w z)^3 \left(w^{3(k - 1)} y^\ell + x^\ell z^{3(k - 1)}\right)
 \end{align*}
 (when combined with the other case) show that the Holant is a polynomial $P (w z, x y, w^3 + z^3, x + y, w^3 x + y z^3, w^3 y + x z^3)$ with integer coefficients.
\end{proof}

Assume non-degeneracy of $(w, x, y, z)$, Lemmas~\ref{lem:infOrder:xOverYNormNotOne}, \ref{lem:infOrder:x=0,wyz!=0}, and~\ref{lem:infOrder:w=0,xyz!=0} proved $\SHARPP$-hardness unless two (or more) of $w$, $x$, $y$, and $z$ are zero or none are zero and $|x| = |y|$. If any two (or more) of variables are zero, then the problem is tractable, as proved after Theorem~\ref{thm:dichotomy}. Therefore, the dichotomy in Theorem~\ref{thm:dichotomy} holds unless $w x y z \ne 0$ and $|x| = |y|$. In accordance with Lemma~\ref{lem:algebraicSymmetrization}, we make a change of variables to $A = w z$, $B = x y$, $C = w^3 + z^3$, $D = x + y$, $E = w^3 x + y z^3$, and $F = w^3 y + x z^3$. Since the complexity of a Holant remains the same under multiplication by a nonzero constant to any signature, we normalize so that $|x| = 1$ and $x = \overline{y}$ without repeatedly stating this as an assumption. Thus, $B = 1$ and $D = x + y \in [-2, 2]$ with $D^2 = 4$ corresponding to the symmetric case: $x = y$. A degenerate edge signature now means $A = 1$. Additionally, notice that $E + F = C D$ and $E F = -4 A^3 B + B C^2 + A^3 D^2$. Theorem~\ref{thm:dichotomy} can also be stated in these symmetrized variables.

\begin{theorem} \label{thm:dichotomySymmetrized}
 Suppose $w, x, y, z \in \C$. Then $\holant{(w,x,y,z)}{{=}_3}$ is $\SHARPP$-hard except in the following cases, for which the problem is in $\P$.

 $(1)\ w z = x y \iff A = B$.

 $(2)\ w = z = 0 \iff A = C = 0$.

 $(3)\ x = y = 0 \iff B = D = 0$.

 $(4)\ w z = - x y \myand w^6 = z^6 \myand x^2 = y^2 \iff A = -B \myand 4 A^3 C = C^3 \myand 4 B D = D^3$.

 $(5)\ w z = - x y \myand w^6 = -z^6 \myand x^2 = -y^2 \iff A = -B \myand 2 A^3 = C^2 \myand 2 B = D^2$.

 \noindent
 If the input is restricted to planar graphs, then two more cases become tractable but all other cases remain $\SHARPP$-hard.

 $(6)\ w^3 = z^3 \myand x = y \iff 4 A^3 = C^2 \myand 4 B = D^2$.

 $(7)\ w^3 = -z^3 \myand x = -y \iff C = D = 0$.
\end{theorem}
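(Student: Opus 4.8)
The plan is to treat Theorem~\ref{thm:dichotomySymmetrized} as a coordinate-change restatement of Theorem~\ref{thm:dichotomy}. The $\SHARPP$-hardness/tractability dichotomy is exactly the one of Theorem~\ref{thm:dichotomy} (tractability proven after that theorem, hardness proven in Sections~\ref{sec:anti-gadgets:main} and~\ref{sec:anti-gadgets:appendix}): cases~(1)--(3) listed here are literally cases~(1)--(3) there; cases~(4) and~(5) here are the $\varepsilon = +1$ and $\varepsilon = -1$ specializations of case~(4) there; and cases~(6) and~(7) here are the two specializations of case~(5) there. Hence the only thing left to verify is that, in each of the seven cases, the displayed condition on $A = wz$, $B = xy$, $C = w^3 + z^3$, $D = x + y$ is equivalent to the corresponding condition on $(w, x, y, z)$; the ``except in the following cases'' phrasing then transfers verbatim.

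To do this I would first record two elementary identities. Since $w^3 + z^3 = C$ and $w^3 z^3 = (wz)^3 = A^3$, the numbers $w^3, z^3$ are the two roots of $t^2 - Ct + A^3$, so $(w^3 - z^3)^2 = C^2 - 4A^3$ and $w^6 + z^6 = C^2 - 2A^3$; dually, $x, y$ are the roots of $t^2 - Dt + B$, so $(x - y)^2 = D^2 - 4B$ and $x^2 + y^2 = D^2 - 2B$. Each of the seven equivalences is then a short check. Case~(1) is the definition $A = B$. Case~(2): $wz = 0$ together with $w^3 + z^3 = 0$ forces $w = z = 0$, and conversely; case~(3) is symmetric, with $xy$ and $x+y$ in place of $wz$ and $w^3 + z^3$. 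Case~(4): $wz = -xy$ is $A = -B$, and $w^6 = z^6 \iff (w^3 - z^3)(w^3 + z^3) = 0 \iff C\,(w^3 - z^3)^2 = 0 \iff C(C^2 - 4A^3) = 0 \iff 4A^3 C = C^3$, and likewise $x^2 = y^2 \iff 4BD = D^3$. Case~(5): $w^6 = -z^6 \iff w^6 + z^6 = 0 \iff C^2 = 2A^3$, and $x^2 = -y^2 \iff D^2 = 2B$. Case~(6): $w^3 = z^3 \iff (w^3 - z^3)^2 = 0 \iff C^2 = 4A^3$, and $x = y \iff D^2 = 4B$. Case~(7): $C = D = 0$ is exactly $w^3 = -z^3$ and $x = -y$.

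No step is a real obstacle: this is purely a change of variables. The only point needing genuine attention is the handling of the degenerate branches of each ``$\iff$'' — for instance, verifying that $4A^3 C = C^3$ forces $w^6 = z^6$ even when $C = 0$ (then $w^3 = -z^3$, hence $w^6 = z^6$) — and, relatedly, double-checking that the union of the seven symmetrized conditions, viewed as a subset of $\C^4$, is exactly the substituted image of the union of the tractable classes of Theorem~\ref{thm:dichotomy}, so that nothing is created or dropped in the restatement. I would also note that Lemma~\ref{lem:algebraicSymmetrization} is not used in this proof; it only motivates the choice of the symmetrized variables (and of $E, F$, which do not appear in any of the seven cases).
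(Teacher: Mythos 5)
Your proposal is correct: Theorem~\ref{thm:dichotomySymmetrized} is stated in the paper as a direct restatement of Theorem~\ref{thm:dichotomy} in the variables $A = wz$, $B = xy$, $C = w^3 + z^3$, $D = x + y$, and no separate proof is supplied there; the equivalences are treated as a routine change of coordinates. Your case-by-case verification via the identities $(w^3 - z^3)^2 = C^2 - 4A^3$ and $w^6 + z^6 = C^2 - 2A^3$ (and the $(x,y)$ analogues), with explicit attention to the degenerate branches such as $C = 0$ in case~(4), accurately fills in the details the paper leaves implicit and matches the intended derivation.
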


Now we continue with the proof of $\SHARPP$-hardness.

\begin{lemma} \label{lem:infOrder:aNotReal}
 If $D^2 \ne 4$, and $A \not\in \R$, then $\holant{(w,x,y,z)}{{=}_3}$ is $\SHARPP$-hard.
\end{lemma}
\begin{proof}
 The transition matrices for Gadgets~\ref{gadget:unary:0:001} and~\ref{gadget:unary:0:101} are $M_{\ref{gadget:unary:0:001}} = \shrinkMatrixTwoRows{\begin{bmatrix} w & y\\ x & z \end{bmatrix} \begin{bmatrix} w^2 & x y\\ x y & z^2 \end{bmatrix}}$ and $M_{\ref{gadget:unary:0:101}} = \shrinkMatrixTwoRows{\begin{bmatrix} w & x\\ y & z \end{bmatrix} \begin{bmatrix} w^2 & x y\\ x y & z^2 \end{bmatrix}}$. Both matrices have determinant $(A - 1)^2 (A + 1)$, which is nonzero since $A$ is not real. Then $N = M_{\ref{gadget:unary:0:001}} M_{\ref{gadget:unary:0:101}}^{-1}$ has determinant 1 and trace
 \begin{align*}
  \tr\left(\shrinkMatrixTwoRows{\begin{bmatrix} w & y\\ x & z \end{bmatrix} \begin{bmatrix} w & x\\ y & z \end{bmatrix}^{-1}}\right)
 = \frac{2 w z - x^2 - y^2}{w z - x y}
 = \frac{2 A - D^2 + 2}{A - 1},
 \end{align*}
 which is nonzero since $A$ is not real. If the eigenvalues of $N$ have distinct norms, then it has infinite order up to a scalar and we are done by Theorem~\ref{thm:interpolation}, so assume that its eigenvalues are of equal norm. Then Lemma~\ref{lem:2RootsSameNorm} says that $\frac{\tr(N)^2}{\det N} = \frac{(2 A - D^2 + 2)^2}{(A - 1)^2} \in \R^+$. Taking square roots, we have $\frac{2 A - D^2 + 2}{A - 1} \in \R$, which implies that $\frac{-D^2 + 4}{A - 1} \in \R$. Since $D^2 \ne 4$, this gives $A \in \R$, a contradiction.
\end{proof}

Unary recursive gadgets, such as the ones used in the proof of Lemma~\ref{lem:infOrder:aNotReal}, are quite useful for proving $\SHARPP$-hardness when variables like $A = w z$ are complex. When all variables are real, the conclusion of Lemma~\ref{lem:2RootsSameNorm} is weak (though one can still prove $\SHARPP$-hardness using a related lemma with significant effort in the symmetric case~\cite{CLX09b}). For complex variables in the symmetric case, \cite{KC10} showed that using higher arity (namely binary) recursive gadgets can give a much simpler proof of $\SHARPP$-hardness. The next lemma continues this pattern with the first ever use of ternary recursive gadgets.

\begin{lemma} \label{lem:infOrder:notHumanCheckable}
 If $A^2 \ne 1$, $A D \ne 0$, and $D^2 \ne 4$, then $\holant{(w,x,y,z)}{{=}_3}$ is $\SHARPP$-hard.
\end{lemma}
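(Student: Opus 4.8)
The plan is to bring \emph{ternary} (arity $3$) recursive $\GR$-gadgets into play for the first time, and to use the anti-gadget of one such gadget to cancel internal structure against a second one, producing an $8$-by-$8$ transition matrix whose characteristic polynomial can be attacked with Lemma~\ref{lem:8RootsSameNorm}. First I would reduce to the real case: since $D^2 \ne 4$, Lemma~\ref{lem:infOrder:aNotReal} handles $A \notin \R$, so I may assume $A \in \R$, and then the hypotheses give $A \in \R \setminus \{0, 1, -1\}$ and $D \in \R$ with $D \ne 0$ and $D^2 \ne 4$. Throughout I keep the running normalization $|x| = |y| = 1$, $x = \overline{y}$ (so $B = 1$) together with the relations among $A, B, C, D, E, F$ recorded before Theorem~\ref{thm:dichotomySymmetrized}, which in particular express $\overline{C}, \overline{E}, \overline{F}$ in terms of $A, C, D, E, F$.

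Next I would exhibit a pair of ternary recursive $\GR$-gadgets $G_1$ and $G_2$, assembled from the five basic components of Figure~\ref{gadget:simple} by the gadgetry calculus, and chosen (guided by the anti-gadget heuristic) to agree except for the orientation of a few internal edges, so that the cancellations in the composition keep its transition matrix tractable. If either of the transition matrices $M_1, M_2$ of $G_1, G_2$ already has infinite order modulo scalars we are done immediately by Theorem~\ref{thm:interpolation}, so assume both have finite order modulo scalars; then an anti-gadget of $G_1$ (obtained by composing $k-1$ copies of $G_1$, where $M_1^k$ is scalar) exists, and composing it with $G_2$ realizes $\lambda M_1^{-1} M_2$ for some nonzero $\lambda$. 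One checks that $\det M_1$ and $\det M_2$ are nonzero under the hypotheses, since they factor into the excluded quantities such as $A$, $A^2 - 1$, $D$, $D^2 - 4$; so $N := M_1^{-1} M_2$ is well defined and the question is exactly whether $N$ has infinite order modulo scalar matrices.

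Then I would compute the monic degree-$8$ characteristic polynomial $\chi_N(X) = \sum_{k=0}^{8} a_k X^k$, whose coefficients $a_k$ are rational in $A, C, D, E, F$ with $B = 1$. If $N$ had finite order modulo scalars, all its eigenvalues would share one norm (this also subsumes the degenerate possibility $N = \mu I$), so Lemma~\ref{lem:8RootsSameNorm} would force
\[
 a_3^2 |a_1|^2 = |a_7|^2 \overline{a_5}^2 a_0^2, \qquad a_4 |a_2|^2 = |a_6|^2 \overline{a_4}\, a_0, \qquad |a_3|^2 a_2 = \overline{a_6}\, |a_5|^2 a_0 .
\]
Substituting the $a_k$, clearing denominators, and using $A, D \in \R$ together with the relations that eliminate $\overline{C}, \overline{E}, \overline{F}$, I would run a symbolic elimination and show that this system has no solution compatible with $A^2 \ne 1$, $A D \ne 0$, $D^2 \ne 4$; this is the not-human-checkable step and is carried out by computer algebra. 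Hence the eigenvalues of $N$ cannot all have the same norm, so $N$ has infinite order modulo scalars and the group generated by $\{M_1, M_2\}$ modulo scalars is infinite. Since the side conditions of Theorem~\ref{thm:interpolation} all hold in our setting ($D^2 \ne 4$ gives $x \ne y$, $D \ne 0$ gives $x \ne -y$, $A^2 \ne 1$ gives $wz \ne xy$, and $wxyz \ne 0$ gives the remaining two), Theorem~\ref{thm:interpolation} yields $\SHARPP$-hardness.

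The hard part will be step three, and it is hard for two reasons. On the design side, one must locate ternary gadgets for which $M_1^{-1} M_2$ is simple enough that the degree-$8$ elimination terminates with an outright contradiction rather than carving off a further special locus; there is a genuine chance that a low-dimensional residual case survives and must be mopped up by an auxiliary binary or unary recursive gadget pair, invoking Lemma~\ref{lem:2RootsSameNorm} or Lemma~\ref{lem:4RootsSameNorm} in place of Lemma~\ref{lem:8RootsSameNorm}. On the verification side, even with good gadgets the three conjugation-symmetric identities, once combined with the quadratic relations tying $C, D, E, F$ together and the reality of $A$ and $D$, form a system whose inconsistency with the three open conditions is delicate, and establishing it rigorously (not merely numerically) is the crux.
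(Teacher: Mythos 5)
Your proposal matches the paper's proof exactly in structure: reduce to $A \in \R$ via Lemma~\ref{lem:infOrder:aNotReal}, build a ternary anti-gadget pair (the paper uses Gadgets~\ref{gadget:ternary:0:000} and~\ref{gadget:ternary:1:001}, each with determinant $A^2(A-1)^4 \ne 0$, so only $A \ne 0, 1$ is needed for invertibility, not $D$ as you guessed), form $N = M_1^{-1}M_2$, assume all eight eigenvalues have equal norm, apply Lemma~\ref{lem:8RootsSameNorm}, and refute the resulting system by \textsc{CylindricalDecomposition}. The one thing you did not anticipate is that for this gadget pair the characteristic polynomial coefficients of $N$ depend \emph{only} on $A$ and $D$ --- $C$, $E$, $F$ cancel completely --- so the delicate elimination of $\overline{C},\overline{E},\overline{F}$ that you flagged as the crux never arises, and that complete cancellation is exactly what keeps the (double-exponential) computer-algebra step feasible.
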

\begin{proof}
 The determinants of the 8-by-8 transition matrices of Gadget~\ref{gadget:ternary:0:000} and Gadget~\ref{gadget:ternary:1:001} are both $A^2 (A - 1)^4 \ne 0$. If $N = M_{\ref{gadget:ternary:0:000}}^{-1} M_{\ref{gadget:ternary:1:001}}$ has any two eigenvalues with distinct norms, then it has infinite order up to a scalar and we are done by Theorem~\ref{thm:interpolation}. Thus assume that all eight eigenvalues of $N$ have the same norm. Then by Lemma~\ref{lem:8RootsSameNorm}, we know that several equations hold among the coefficients of its characteristic polynomial. After scaling by the nonzero factor $A (A - 1)$, these coefficients for $A (A - 1) N$ are
 \begin{align*}
  a_7 &= (A - 1) (A D^2 + 2 A + 2)\\
  a_6 &= (A - 1)^2 (5 A^2 D^2 - 3 A^2 + 2 A D^2 + 2 A + 1)\\
  a_5 &= A (A - 1)^3 (A^2 D^4 + 5 A^2 D^2 - 6 A^2 + 7 A D^2 - 6 A + D^2)\\
  a_4 &= A^2 (A - 1)^4 (3 A^2 D^4 - 4 A^2 D^2 + 4 A^2 + A D^4 + 4 A D^2 - 4 A + 2 D^2 - 2)\\
  a_3 &= A^3 (A - 1)^5 (2 A D^4 + 3 A^2 D^4 - 6 A^2 D^2 + 6 A^2 - 4 A D^2 + 6 A + D^2)\\
  a_2 &= A^4 (A - 1)^6 (A^2 D^4 + A^2 D^2 - 3 A^2 + A D^4 - 2 A D^2 + 2 A + 1)\\
  a_1 &= A^6 (A - 1)^7 (2 A D^2 - 2 A + D^2 - 2)\\
  a_0 &= A^8 (A - 1)^8.
 \end{align*}
 Amazingly, $C$, $E$, and $F$ do not appear.\footnote{The runtime of \textsc{CylindricalDecomposition} is a double exponential in the number of variables, so it is crucial that our query include as few variables as possible.} Lemma~\ref{lem:infOrder:aNotReal} shows $\SHARPP$-hardness unless $A \in \R$, so assume that $A \in \R$. Because $A, D \in \R$, the equations in Lemma~\ref{lem:8RootsSameNorm} are simplified by the disappearance of norms and conjugates. Using \textsc{CylindricalDecomposition} in Mathematica\texttrademark, we conclude that there are no solutions under our assumptions, which is a contradiction.
\end{proof}

The meaning of the assumptions in Lemma~\ref{lem:infOrder:notHumanCheckable} will be explained after the next lemma, which considers the same assumptions except that $D$ is zero (a situation not covered in Lemma~\ref{lem:infOrder:notHumanCheckable}) and $C$ is nonzero.

\begin{lemma} \label{lem:infOrder:a^2!=1,ac!=0,d=0}
 If $A^2 \ne 1$, $A C \ne 0$, and $D = 0$, then $\holant{(w,x,y,z)}{{=}_3}$ is $\SHARPP$-hard.
\end{lemma}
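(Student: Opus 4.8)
The plan is to follow the same template as Lemmas~\ref{lem:infOrder:aNotReal} and~\ref{lem:infOrder:notHumanCheckable}: exhibit a pair of recursive $\GR$-gadgets whose transition matrices $M_1, M_2$ are nonsingular under the hypotheses, form the anti-gadget composition $N = M_1^{-1} M_2$, and argue that modulo scalars $N$ has infinite order, so that Theorem~\ref{thm:interpolation} applies. First I would check that the hypotheses really do license Theorem~\ref{thm:interpolation}: in the standing regime $w x y z \ne 0$ and $|x| = |y|$, the assumption $D = x + y = 0$ gives $x \ne y$; $A \ne 1$ (from $A^2 \ne 1$) gives $w z \ne x y$; $w z = A \ne 0$ gives $(w,z) \ne (0,0)$ and $(x,y) \ne (0,0)$; and since $x = -y$ here, the remaining clause ``$w^3 \ne -z^3 \myor x \ne -y$'' reduces to $w^3 + z^3 = C \ne 0$, which is precisely our hypothesis. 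Moreover $D^2 = 0 \ne 4$, so by Lemma~\ref{lem:infOrder:aNotReal} we may assume $A \in \R$ from the outset.

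The essential new ingredient is that the ternary gadgets of Lemma~\ref{lem:infOrder:notHumanCheckable} have a characteristic polynomial depending only on $A$ and $D$; at $D = 0$ the equal-norm relations of Lemma~\ref{lem:8RootsSameNorm} collapse to identities and yield nothing. So the gadgets used here must have transition matrices that genuinely involve $C = w^3 + z^3$ -- most naturally via an extra vertex or a self-loop contributing a factor $w^3 + z^3$ (as in Gadget~\ref{gadget:main:unary:4:101010}, built from the basic components of Figure~\ref{gadget:simple}). Writing $x = i$, $y = -i$ so that $B = xy = 1$, $wz = A$, and recalling $E = w^3 x + y z^3$, $F = -E$, $E^2 = 4A^3 - C^2$, I would take two such gadgets -- either unary (transition matrices $2 \times 2$ with trace a function of $A$ and $C$: e.g.\ $M_{\ref{gadget:main:unary:0:111}}$ specialized has trace $C$ and determinant $(A-1)^2(A+1)$) or binary -- and check that $\det N \ne 0$ under $A^2 \ne 1$ and $AC \ne 0$; this is where nonsingularity pulls the full set of hypotheses into the argument.

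If the eigenvalues of $N$ are not all of the same norm, then $N$ has infinite order modulo scalars and we finish by Theorem~\ref{thm:interpolation}. Otherwise we invoke Lemma~\ref{lem:2RootsSameNorm} (unary case) or Lemma~\ref{lem:4RootsSameNorm} (binary case) on the characteristic polynomial of $N$. Because $A \in \R$, the norm and conjugation operators simplify, leaving polynomial relations among $A$ and $C$ (with $E$ eliminated via $E^2 = 4A^3 - C^2$); as in Lemma~\ref{lem:infOrder:notHumanCheckable}, a symbolic elimination -- by hand if the polynomials are small, otherwise by \textsc{CylindricalDecomposition} in the real variables $A$, $\operatorname{Re} C$, $\operatorname{Im} C$ -- should establish that no solution exists with $A^2 \ne 1$ and $AC \ne 0$, giving the desired contradiction.

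I expect the gadget design to be the main obstacle. A single unary gadget does not suffice: Lemma~\ref{lem:2RootsSameNorm} only forces $C^2 / \det N$ to lie in $\R^+$, which still permits $C$ to be anything on two lines through the origin. One therefore needs either a higher-arity recursive gadget with a stronger equal-norm relation, or a second, independent unary gadget whose trace combines $A$ and $C$ in a different way, so that the two relations together pin $C$ down; checking that the combined polynomial system has no spurious solution in the allowed parameter range is the remaining computational work.
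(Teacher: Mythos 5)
Your overall framework is sound, but you overshoot the target and miss the paper's much simpler route: \emph{no anti-gadget is needed here}. This lemma is the one hardness proof in the section that does not use the $M_1^{-1}M_2$ composition at all. The paper takes a single binary recursive gadget -- the same Gadget~\ref{gadget:main:binary:0:110} already used in Lemma~\ref{lem:infOrder:xOverYNormNotOne} -- with transition matrix $M = \begin{bmatrix} w & x\\ y & z \end{bmatrix}^{\tensor 2} \diag(w,x,y,z)$ and $\det M = A(A-1)^2 \ne 0$. Its characteristic polynomial has coefficients $(a_3,a_2,a_1,a_0)=\bigl(-C,\ (A+1)^2(A-1),\ -(A-1)^2 C,\ A(A-1)^4\bigr)$; crucially, the trace is $C + AD = C$ (since $D=0$), so $C$ appears natively, without needing self-loops, anti-gadget cancellations, or elimination via $E^2 = 4A^3-C^2$. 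After reducing to $A\in\R$ by Lemma~\ref{lem:infOrder:aNotReal} (which you set up correctly), one plugs these coefficients into Lemma~\ref{lem:4RootsSameNorm}: $a_2|a_1|^2 = (A+1)^2(A-1)^5|C|^2$ while $|a_3|^2\overline{a_2}a_0 = (A+1)^2(A-1)^5 A|C|^2$, so equality forces $A=1$ -- an immediate contradiction, by hand, with no \textsc{CylindricalDecomposition}.

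Two secondary remarks. First, your instinct to reach for an anti-gadget pair of binary gadgets would actually be dangerous here: with $x=-y$ the diagonal matrix $\diag(1,y/x,x/y,1)$ from Lemma~\ref{lem:infOrder:xOverYNormNotOne} degenerates to $\diag(1,-1,-1,1)$, which has finite order, so that particular pair fails. Second, your observation that a single unary gadget is insufficient (because Lemma~\ref{lem:2RootsSameNorm} only constrains an argument) is correct, and your hedge that ``a higher-arity recursive gadget with a stronger equal-norm relation'' might suffice is exactly right -- you just stopped one step short of recognizing that the already-constructed binary gadget is that gadget, with Lemma~\ref{lem:4RootsSameNorm} supplying the stronger relation.
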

\begin{proof}
 Lemma~\ref{lem:infOrder:aNotReal} shows $\SHARPP$-hardness unless $A \in \R$, so assume that $A \in \R$. The transition matrix for Gadget~\ref{gadget:binary:0:110} is $M_{\ref{gadget:binary:0:110}} = \shrinkMatrixTwoRows{\begin{bmatrix} w & x\\ y & z \end{bmatrix}^{\tensor 2}} \diag(w, x, y, z)$ and has determinant $A (A - 1)^2 \ne 0$. If $M_{\ref{gadget:binary:0:110}}$ has any two eigenvalues with distinct norms, then it has infinite order up to a scalar and we are done by Theorem~\ref{thm:interpolation}, so assume that all eigenvalues have the same norm. However, the coefficients of the characteristic polynomial of $M_{\ref{gadget:binary:0:110}}$, which are
 \begin{align*}
  (a_3, a_2, a_1, a_0) = \left(-C, (A + 1)^2 (A - 1), -(A - 1)^2 C, A (A - 1)^4\right),
 \end{align*}
 do not satisfy the conclusion of Lemma~\ref{lem:4RootsSameNorm} under the assumptions, a contradiction.
\end{proof}

The case $A = 1$ is degenerate (thus tractable), the case $A = 0$ is covered in Lemma~\ref{lem:infOrder:w=0,xyz!=0}, and recall that $D^2 = 4$ corresponds to the symmetric case~\cite{KC10}, so now we assume that $A \ne 0, 1 \myand D^2 \ne 4$. Lemma \ref{lem:infOrder:notHumanCheckable} handled $A \ne -1$ and $D \ne 0$ while Lemma \ref{lem:infOrder:a^2!=1,ac!=0,d=0} handled $A \ne -1 \myand D = 0 \myand C \ne 0$. We note that $C = D = 0$ is tractable on planar graphs. Now we focus on the case $A = -1$.

The next two proofs of $\SHARPP$-hardness (the proofs of Lemmas~\ref{lem:infOrder:a=-1,E^2!=0,-4} and~\ref{lem:infOrder:a=-1,F^2!=0,-4}) make use of the following technical lemma.

\begin{lemma} \label{lem:equSolsTrivial}
 Let $c \in \C$ and $\varepsilon = \pm 1$. Then the only solutions to the equation $(\overline{c + 2 \varepsilon}) c = \varepsilon (c + 2 \varepsilon)$ are the trivial solutions $c \in \{-2 \varepsilon, \varepsilon\}$.
\end{lemma}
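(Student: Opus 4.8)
The plan is to separate the single complex equation into its real and imaginary parts. Write $c = a + b i$ with $a, b \in \R$, where $i = \sqrt{-1}$. Then $c + 2\varepsilon = (a + 2\varepsilon) + b i$ and $\overline{c + 2\varepsilon} = (a + 2\varepsilon) - b i$, so expanding the left-hand side gives
\[
 (\overline{c + 2\varepsilon})\, c = \bigl((a + 2\varepsilon) - b i\bigr)(a + b i) = \bigl(a^2 + 2\varepsilon a + b^2\bigr) + 2\varepsilon b\, i,
\]
where the $-ab i$ and $+ab i$ cross terms cancel, leaving the imaginary part equal to $2\varepsilon b$. On the other side, since $\varepsilon^2 = 1$,
\[
 \varepsilon(c + 2\varepsilon) = \varepsilon a + 2 + \varepsilon b\, i,
\]
with imaginary part $\varepsilon b$.

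First I would equate the imaginary parts: $2\varepsilon b = \varepsilon b$ forces $\varepsilon b = 0$, and since $\varepsilon \in \{+1,-1\}$ this yields $b = 0$. Hence every solution $c$ is real, say $c = a$. Substituting $b = 0$, the original equation collapses to the real polynomial identity $(a + 2\varepsilon)\, a = \varepsilon(a + 2\varepsilon)$, i.e. $(a + 2\varepsilon)(a - \varepsilon) = 0$. Therefore $a \in \{-2\varepsilon, \varepsilon\}$, which is precisely the claimed set of trivial solutions; conversely, a direct substitution confirms that both $c = -2\varepsilon$ and $c = \varepsilon$ satisfy the equation.

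There is essentially no obstacle here. The only point that warrants a moment's care is the bookkeeping in the product $(\overline{c + 2\varepsilon})\, c$: one must check that the terms involving $a$ in the imaginary part cancel, so that equating imaginary parts immediately gives $b = 0$ rather than a relation coupling $a$ and $b$. Once that is in place, the remaining step is the one-line factorization above.
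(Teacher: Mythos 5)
Your proof is correct, but it takes a genuinely different route from the paper's. You decompose $c = a + bi$ into real and imaginary parts and observe that the imaginary parts of both sides are $2\varepsilon b$ and $\varepsilon b$, forcing $b = 0$ and collapsing the equation to the real quadratic $(a+2\varepsilon)(a-\varepsilon) = 0$. The paper instead argues at the level of norms: assuming $c \ne -2\varepsilon$, taking $|\cdot|$ of both sides gives $|c+2\varepsilon|\,|c| = |c+2\varepsilon|$, hence $|c| = 1$; then substituting $c\overline{c} = 1$ into the original equation linearizes it to $1 + 2\varepsilon c = \varepsilon c + 2$, i.e.\ $c = \varepsilon$. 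Both arguments are short and elementary. Yours is slightly more computational but makes the "solutions are real" conclusion fully explicit and needs no case split; the paper's norm trick is a cleaner two-step deduction ($|c|=1$, then solve a linear equation) that exploits the multiplicative structure $(\overline{c+2\varepsilon})c$ directly and sidesteps the coordinate expansion entirely.
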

\begin{proof}
 Assume that $c \ne -2 \varepsilon$. Now we show that $c = \varepsilon$. Taking norms, we see that $|c| = 1$. Then simplifying $(\overline{c + 2 \varepsilon}) c = \varepsilon (c + 2 \varepsilon)$ using $c \overline{c} = |c|^2 = 1$ yields $c = \varepsilon$ as claimed.
\end{proof}

\begin{lemma} \label{lem:infOrder:a=-1,E^2!=0,-4}
 If $A = -1$ and $E \not\in \{0, \pm 2 i\}$, then $\holant{(w,x,y,z)}{{=}_3}$ is $\SHARPP$-hard.
\end{lemma}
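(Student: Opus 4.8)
The plan is to follow the template of Lemmas~\ref{lem:infOrder:aNotReal}--\ref{lem:infOrder:a^2!=1,ac!=0,d=0}: construct a recursive $\GR$-gadget---most naturally a composition of some recursive gadget with an anti-gadget of another, so that leading edges and common internal edges annihilate---whose transition matrix $M$ is nonsingular under the hypotheses, and then apply Theorem~\ref{thm:interpolation}. We work in the symmetrized variables under the standing assumptions of this section ($A \ne 0, 1$ and $D^2 \ne 4$) with the normalization $B = 1$, $x = \overline{y}$, $|x| = 1$, $D = x + y \in [-2,2]$; thus $A = -1$ is already real, $E + F = CD$, and $EF = 4 + C^2 - D^2$. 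Note first that all hypotheses of Theorem~\ref{thm:interpolation} hold here: $A = -1 \ne 1 = B$, $A = wz \ne 0$, $B = xy \ne 0$, $D^2 \ne 4$, and $C = D = 0$ is impossible because it would force $F = -E$ and $-E^2 = EF = 4$, i.e.\ $E = \pm 2i$, contrary to hypothesis; so $(w^3 \ne -z^3) \lor (x \ne -y)$ holds.

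The first step is to choose a suitable low-arity recursive gadget. A unary gadget pair, giving $M \in \C^{2 \times 2}$ with Lemma~\ref{lem:2RootsSameNorm} as the relevant same-norm criterion, is the natural first candidate; a binary gadget with Lemma~\ref{lem:4RootsSameNorm} is the fallback. Building $M$ from the basic components of Figure~\ref{gadget:simple} and specializing to $A = -1$, one expects---exactly as in the ``$C$, $E$, and $F$ do not appear'' cancellation of Lemma~\ref{lem:infOrder:notHumanCheckable}---that the coefficients of the characteristic polynomial of $M$ collapse to expressions in $E$ alone, with $F$ eliminated via $E + F = CD$ and $EF = 4 + C^2 - D^2$ (and, if $F$ enters unavoidably, using the companion Lemma~\ref{lem:infOrder:a=-1,F^2!=0,-4} to treat $E$ and $F$ symmetrically). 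The determinant of $M$ at $A = -1$ should be a nonzero scalar multiple of a power of $E$, so the hypothesis $E \ne 0$ guarantees $M$ is nonsingular.

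Now the usual dichotomy: if the eigenvalues of $M$ do not all share a common norm, then $M$ has infinite order modulo scalar matrices and $\holant{(w,x,y,z)}{{=}_3}$ is $\SHARPP$-hard by Theorem~\ref{thm:interpolation}. Otherwise all eigenvalues of $M$ have equal norm, and Lemma~\ref{lem:2RootsSameNorm} (resp.\ Lemma~\ref{lem:4RootsSameNorm}) applied to the characteristic polynomial of $M$ produces an identity among its coefficients, hence an identity in $E$ and $\overline{E}$. The concluding step is to rearrange this identity, after substituting an explicit polynomial $c = c(E)$ and fixing $\varepsilon = \pm 1$ (read off from the sign pattern of the coefficients at $A = -1$), into precisely the form $(\overline{c + 2\varepsilon})\, c = \varepsilon\,(c + 2\varepsilon)$ of Lemma~\ref{lem:equSolsTrivial}. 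That lemma forces $c \in \{-2\varepsilon, \varepsilon\}$; for the natural quadratic choice with $c + 2\varepsilon$ proportional to $E^2$ this unwinds to $E \in \{0, \pm 2i\}$, contradicting the hypothesis. Hence $M$ must have infinite order and we are done.

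The main obstacle is the very first step: identifying the gadget (or gadget/anti-gadget pair) whose characteristic polynomial genuinely degenerates to a polynomial in $E$ once $A = -1$. This is precisely the kind of miraculous cancellation that the anti-gadget viewpoint is supposed to predict, and getting the combinatorial structure right so that the unwanted variables $C$ and $F$ cancel is the crux. Everything after that---verifying $\det M \ne 0$ from $E \ne 0$, and massaging the equal-norm constraint into the exact shape of Lemma~\ref{lem:equSolsTrivial}---is routine linear algebra of the sort already carried out in the preceding lemmas.
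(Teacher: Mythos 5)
Your proposal has the right high-level template---an anti-gadget composition whose transition matrix is nonsingular precisely because $E \notin \{0, \pm 2i\}$, followed by the equal-norm lemma and Lemma~\ref{lem:equSolsTrivial}---and your verification that the hypotheses of Theorem~\ref{thm:interpolation} hold under the standing assumptions is correct (in particular, ruling out $C = D = 0$ via $EF = 4 + C^2 - D^2 = -E^2$). But the plan has two genuine gaps, one of which you acknowledge and one you do not.

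The gap you acknowledge is that you have not found the gadgets: you describe what the characteristic polynomial should look like after the ``miraculous cancellation'' at $A = -1$ without exhibiting any concrete construction. That is, of course, the heart of the matter. The paper's proof uses three binary recursive gadgets, Gadgets~\ref{gadget:binary:4:110000}, \ref{gadget:binary:7:110010}, and~\ref{gadget:binary:5:110010}, with determinants $2^8$, $2^6(E^2+4)$, and $-2^6 E^2$ respectively, and forms \emph{two} gadget/anti-gadget compositions $N_1 = M_{\ref{gadget:binary:4:110000}}^{-1} M_{\ref{gadget:binary:5:110010}}$ and $N_2 = M_{\ref{gadget:binary:4:110000}}^{-1} M_{\ref{gadget:binary:7:110010}}$, sharing $M_{\ref{gadget:binary:4:110000}}^{-1}$ as the common anti-gadget. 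Note how the two hypotheses $E \ne 0$ and $E \ne \pm 2i$ each guarantee nonsingularity of one of the two partner gadgets---a sign that a single composition cannot do the whole job.

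The gap you do not acknowledge is more fundamental: a \emph{single} gadget/anti-gadget pair cannot pin $E$ down to $\{0, \pm 2i\}$. Lemma~\ref{lem:equSolsTrivial} only forces $c \in \{-2\varepsilon, \varepsilon\}$ (plus the $c = 0$ case one divides out when reducing the norm identity to that form). With the substitution $c = E^2/4$ (which arises for $N_2$, with $\varepsilon = -1$) this yields $E^2 \in \{0, 8, -4\}$, i.e.\ $E \in \{0, \pm 2\sqrt{2}, \pm 2i\}$; with $c = (E^2+4)/4$ (which arises for $N_1$, with $\varepsilon = +1$) it yields $E^2 \in \{-4, -12, 0\}$, i.e.\ $E \in \{0, \pm 2i, \pm 2i\sqrt{3}\}$. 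Each constraint set contains extraneous values. Only the \emph{intersection} of the two is $\{0, \pm 2i\}$. So even granting the existence of a gadget pair whose characteristic-polynomial coefficients collapse to functions of $E$ alone, your plan as written---apply Lemma~\ref{lem:2RootsSameNorm} or~\ref{lem:4RootsSameNorm} once, massage into the form of Lemma~\ref{lem:equSolsTrivial}, conclude---would terminate with spurious values like $\pm 2\sqrt{2}$ or $\pm 2i\sqrt{3}$ still alive, and the lemma would not be proved. The paper's proof needs both $N_1$ and $N_2$ precisely to eliminate these.
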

\begin{proof}
 The transition matrices of Gadgets~\ref{gadget:binary:4:110000}, \ref{gadget:binary:7:110010}, and~\ref{gadget:binary:5:110010} are
 \begin{align*}
  M_{\ref{gadget:binary:4:110000}}
  &=
   \begin{bmatrix}
    w & x\\
    y & z
   \end{bmatrix}^{\tensor 2}
   \diag(w, y, x, z)
   \begin{bmatrix}
    w & y\\
    x & z
   \end{bmatrix}^{\tensor 2}
   \diag(w, x, y, z)\\
  M_{\ref{gadget:binary:7:110010}}
  &=
   \begin{bmatrix}
    w & x\\
    y & z
   \end{bmatrix}^{\tensor 2}
   \begin{bmatrix}
    w^4 + w x y^2 & w^2 x y + x y^2 z & 0 & 0\\
    w^2 x y + x y^2 z & w x y z + y z^3 & 0 & 0\\
    0 & 0 & w^3 x + w x y z & w x^2 y + x y z^2\\
    0 & 0 & w x^2 y + x y z^2 & x^2 y z + z^4
   \end{bmatrix}\\
  M_{\ref{gadget:binary:5:110010}}
  &=
   \begin{bmatrix}
    w & x\\
    y & z
   \end{bmatrix}^{\tensor 2}
   \diag(w, y, x, z)
   \left(
   I_2 \tensor
   \begin{bmatrix}
    w & x\\
    y & z
   \end{bmatrix}
   \begin{bmatrix}
    w^2 + y z & 0\\
    0 & w x + z^2
   \end{bmatrix}
   \right)
 \end{align*}
 with $\det M_{\ref{gadget:binary:4:110000}} = 2^8$, $\det M_{\ref{gadget:binary:5:110010}} = -2^6 E^2$, and $\det M_{\ref{gadget:binary:7:110010}} = 2^6 (E^2 + 4)$, so all are nonsingular. Let $N_1 = M_{\ref{gadget:binary:4:110000}}^{-1} M_{\ref{gadget:binary:5:110010}}$ and $N_2 = M_{\ref{gadget:binary:4:110000}}^{-1} M_{\ref{gadget:binary:7:110010}}$. The coefficients of the characteristic polynomials of $-2^4 N_1$ and $2^4 N_2$ are respectively
 \begin{align*}
  (a_3, a_2, a_1, a_0) &= \left(-4, E^2 + 12, -4 (E^2 + 4), 4 (E^2 + 4)\right)\\
  (a_3, a_2, a_1, a_0) &= \left(4, -E^2 + 8, -4 E^2, -4 E^2\right).
 \end{align*}
 If $N_1$ (resp.~$N_2$) has any two eigenvalues with distinct norms, then $N_1$ (resp.~$N_2$) has infinite order up to a scalar and we are done by Theorem~\ref{thm:interpolation}, so assume that all eigenvalues of $N_1$ (resp.~$N_2$) have the same norm. Then by Lemma~\ref{lem:4RootsSameNorm}, we have two equations relating these coefficients. However, after a change of variables by $c = (E^2 + 4) / 4$ (for the coefficients of $N_1$) and $c = E^2 / 4$ (for the coefficients of $N_2$), Lemma~\ref{lem:equSolsTrivial} says that the only solutions to both equations require $E \in \{0, \pm 2 i\}$, a contradiction.
\end{proof}

The next lemma is similar to Lemma \ref{lem:infOrder:a=-1,E^2!=0,-4} with $E$ in place of $F$.

\begin{lemma} \label{lem:infOrder:a=-1,F^2!=0,-4}
 If $A = -1$ and $F \not\in \{0, \pm 2 i\}$, then $\holant{(w,x,y,z)}{{=}_3}$ is $\SHARPP$-hard.
\end{lemma}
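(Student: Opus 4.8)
The plan is to deduce this lemma from Lemma~\ref{lem:infOrder:a=-1,E^2!=0,-4} by exploiting a complexity-preserving symmetry that interchanges the roles of $E$ and $F$. First I would observe that reversing the orientation of every edge of a directed $3$-regular graph $G$ produces another directed $3$-regular graph $G^{\mathrm{rev}}$, and that $G \mapsto G^{\mathrm{rev}}$ is a bijection on the class of directed $3$-regular graphs. For any assignment $\sigma$, an edge contributing $f(\sigma(u),\sigma(v))$ in $G^{\mathrm{rev}}$ corresponds to the reversed edge in $G$, which contributes exactly $f(\sigma(u),\sigma(v))$ when the edge function is replaced by its transpose, namely $(w,x,y,z) \mapsto (w,y,x,z)$. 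Hence $Z_{(w,x,y,z)}(G^{\mathrm{rev}}) = Z_{(w,y,x,z)}(G)$, so $\holant{(w,x,y,z)}{{=}_3}$ and $\holant{(w,y,x,z)}{{=}_3}$ have exactly the same complexity.

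Next I would track the invariants of Lemma~\ref{lem:algebraicSymmetrization} under the substitution $(w,x,y,z) \mapsto (w,y,x,z)$. The quantities $A = wz$, $B = xy$, $C = w^3 + z^3$, and $D = x+y$ are all fixed, while $E = w^3 x + y z^3$ and $F = w^3 y + x z^3$ are interchanged. In particular the hypothesis $A = -1$ is preserved, and the hypothesis $F \notin \{0, \pm 2i\}$ on $(w,x,y,z)$ is precisely the statement that the $E$-invariant of the transposed signature $(w,y,x,z)$ lies outside $\{0, \pm 2i\}$.

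Finally I would apply Lemma~\ref{lem:infOrder:a=-1,E^2!=0,-4} to the signature $(w,y,x,z)$: since its $A$-invariant equals $-1$ and its $E$-invariant equals $F \notin \{0, \pm 2i\}$, we conclude that $\holant{(w,y,x,z)}{{=}_3}$ is $\SHARPP$-hard, and therefore so is $\holant{(w,x,y,z)}{{=}_3}$.

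I expect essentially no obstacle here; the only thing to verify with care is the orientation-reversal symmetry and that it corresponds exactly to transposing the edge signature. If a self-contained argument were preferred instead, one could mirror the proof of Lemma~\ref{lem:infOrder:a=-1,E^2!=0,-4} using the gadgets obtained from Gadgets~\ref{gadget:binary:4:110000}, \ref{gadget:binary:7:110010}, and~\ref{gadget:binary:5:110010} by reversing all internal edge orientations; the resulting transition matrices have determinants $2^8$, $-2^6 F^2$, and $2^6(F^2+4)$, and the anti-gadget products $M_{\ref{gadget:binary:4:110000}}^{-1} M_{\ref{gadget:binary:5:110010}}$ and $M_{\ref{gadget:binary:4:110000}}^{-1} M_{\ref{gadget:binary:7:110010}}$ lead, via Lemmas~\ref{lem:4RootsSameNorm} and~\ref{lem:equSolsTrivial} with the changes of variables $c = (F^2+4)/4$ and $c = F^2/4$, to the same contradiction unless $F \in \{0, \pm 2i\}$.
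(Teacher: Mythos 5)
Your primary argument is correct and, in my view, cleaner than the paper's proof. The paper proceeds by constructing two new explicit binary recursive gadgets (Gadgets~\ref{gadget:binary:7:111010} and~\ref{gadget:binary:5:111100}), listing their transition matrices, and asserting that the computation from Lemma~\ref{lem:infOrder:a=-1,E^2!=0,-4} carries over with $E$ replaced by $F$; this asks the reader to trust or re-derive that the coefficients of the new characteristic polynomials arise from the old ones by the substitution $E \mapsto F$. You instead observe a global reduction: reversing every directed edge of the input $3$-regular graph is a polynomial-time bijection under which $Z_{(w,x,y,z)}(G^{\mathrm{rev}}) = Z_{(w,y,x,z)}(G)$, so $\holant{(w,x,y,z)}{{=}_3}$ and its transpose $\holant{(w,y,x,z)}{{=}_3}$ are interreducible, and under $(w,x,y,z)\mapsto(w,y,x,z)$ the invariants $A,B,C,D$ of Lemma~\ref{lem:algebraicSymmetrization} are fixed while $E$ and $F$ are exchanged. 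Applying Lemma~\ref{lem:infOrder:a=-1,E^2!=0,-4} to $(w,y,x,z)$ then gives the conclusion with no new gadget computations, and it makes the $E\leftrightarrow F$ duality structural rather than coincidental. One small correction to your closing remark: the paper's replacement gadgets are \emph{not} obtained from Gadgets~\ref{gadget:binary:7:110010} and~\ref{gadget:binary:5:110010} by reversing every internal edge (Gadget~\ref{gadget:binary:7:111010}, for instance, reverses only one of the four internal edges of Gadget~\ref{gadget:binary:7:110010}); the fully reversed gadgets you describe would have transition matrices equal to the originals transposed, hence the determinants $2^8$, $-2^6F^2$, $2^6(F^2+4)$ you state, so that variant also works, but it is not literally what the paper does.
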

\begin{proof}
 The transition matrices of Gadgets~\ref{gadget:binary:4:110000}, \ref{gadget:binary:7:111010}, and~\ref{gadget:binary:5:111100} are
 \begin{align*}
  M_{\ref{gadget:binary:4:110000}}
  &=
   \begin{bmatrix}
    w & x\\
    y & z
   \end{bmatrix}^{\tensor 2}
   \diag(w, y, x, z)
   \begin{bmatrix}
    w & x\\
    y & z
   \end{bmatrix}^{\tensor 2}
   \diag(w, x, y, z)\\
  M_{\ref{gadget:binary:7:111010}}
  &=
   \begin{bmatrix}
    w & x\\
    y & z
   \end{bmatrix}^{\tensor 2}
   \begin{bmatrix}
    w^4 + w x^2 y & w^2 x y + x^2 y z & 0 & 0\\
    w^2 x y + x^2 y z & w x y z + x z^3 & 0 & 0\\
    0 & 0 & w^3 y + w x y z & w x y^2 + x y z^2\\
    0 & 0 & w x y^2 + x y z^2 & x y^2 z + z^4
   \end{bmatrix}\\
  M_{\ref{gadget:binary:5:111100}}
  &=
   \begin{bmatrix}
    w & x\\
    y & z
   \end{bmatrix}^{\tensor 2}
   \diag(w, x, y, z)
   \left(
   I_2 \tensor
   \begin{bmatrix}
    w & x\\
    y & z
   \end{bmatrix}
   \begin{bmatrix}
    w^2 + x z & 0\\
    0 & w y + z^2
   \end{bmatrix}
   \right).
 \end{align*}
 The rest of the proof uses the same reasoning as the proof of Lemma~\ref{lem:infOrder:a=-1,E^2!=0,-4} with Gadgets~\ref{gadget:binary:7:110010} and~\ref{gadget:binary:5:110010} replaced by Gadgets~\ref{gadget:binary:7:111010} and~\ref{gadget:binary:5:111100} respectively.
\end{proof}

All remaining cases, those for which $A = -1$ and $E, F \in \{0, \pm 2 i\}$, imply tractability. Since this is not immediately obvious, we prove this next. As pointed out after Lemma~\ref{lem:algebraicSymmetrization}, the following equations hold and are used frequently below. They simplify to
\begin{align}
 E + F &= C D \label{equ:ePlusF}\\
 E F &= -4 A^3 B + B C^2 + A^3 D^2 = 4 + C^2 - D^2 \label{equ:eTimesF}
\end{align}
when $A = -1$ and $B = 1$. These next four lemmas cover all possibilities of $E, F \in \{0, \pm 2 i\}$ as follows:

Lemma \ref{lem:symmetrizedTractable:EF:BothZero}: Both zero

Lemma \ref{lem:symmetrizedTractable:EF:EqualNonzero}: Both nonzero and equal

Lemma \ref{lem:symmetrizedTractable:EF:NotEqualNonzero}: Both nonzero and not equal

Lemma \ref{lem:symmetrizedTractable:EF:OneZero}: Exactly one zero

\begin{lemma} \label{lem:symmetrizedTractable:EF:BothZero}
 If $A = -1 \myand E = F = 0$, then $(D = 0 \myand C^2 = -4)$ or $(D^2 = 4 \myand C = 0)$, which are both tractable.
\end{lemma}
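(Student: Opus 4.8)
The plan is to read off both alternatives directly from the symmetrization identities and then match each against the tractable case~(4) of Theorem~\ref{thm:dichotomySymmetrized}. Recall that in this portion of the argument we have normalized so that $B = 1$, and that we are in the case $A = -1$, so \eqref{equ:ePlusF} and \eqref{equ:eTimesF} specialize to $E + F = C D$ and $E F = 4 + C^2 - D^2$.

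First I would substitute $E = F = 0$ into these two identities. The first gives $C D = 0$, and the second gives $D^2 - C^2 = 4$. From $C D = 0$ we split into two subcases: if $C = 0$ then $D^2 = 4$, and if $D = 0$ then $C^2 = -4$. The subcase $C = D = 0$ cannot occur here since it would contradict $D^2 - C^2 = 4$, which is consistent with $C = D = 0$ being tractable only on planar graphs. This establishes the stated disjunction $(D = 0 \myand C^2 = -4) \myor (D^2 = 4 \myand C = 0)$.

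It then remains to verify tractability of each alternative. I would check that in both alternatives the conditions defining case~(4) of Theorem~\ref{thm:dichotomySymmetrized} hold, namely $A = -B$, $4 A^3 C = C^3$, and $4 B D = D^3$. Since $A = -1 = -B$, the first holds; with $A = -1$ and $B = 1$ the remaining two become $C(C^2 + 4) = 0$ and $D(D^2 - 4) = 0$, each of which is satisfied in both alternatives. Hence both alternatives lie in the tractable class~(4), affine after a holographic transformation. If a translation back to $(w,x,y,z)$ is desired, one notes that $w z = -1 = -x y$, that $x^2 = y^2$ (either $x = y$ when $D^2 = 4$, or $y = -x$ when $D = 0$), and that $w^6 = z^6$: using $w^3 z^3 = (w z)^3 = -1$ one has $(w^3 - z^3)^2 = C^2 + 4$, so $C = 0$ gives $w^3 = -z^3$ while $C^2 = -4$ gives $w^3 = z^3$, and in either case $w^6 = z^6$. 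Thus both alternatives fall into case~(4) of Theorem~\ref{thm:dichotomy} with $\varepsilon = 1$.

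I do not expect a genuine obstacle here: the entire lemma is essentially an immediate consequence of the two symmetrization identities, and the only point needing care is the last step—confirming that the two surviving configurations land in the tractable case~(4) and not in case~(5) or in a hard case—which is exactly the short verification sketched above.
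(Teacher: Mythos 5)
Your proposal is correct and follows essentially the same route as the paper: use $E+F=CD$ to split on $C=0$ or $D=0$, use $EF=4+C^2-D^2=0$ to pin down the other variable, and land in tractable case~(4) of Theorem~\ref{thm:dichotomySymmetrized}. The extra observations you supply (that $C=D=0$ is inconsistent with $D^2-C^2=4$, and the explicit translation back to $(w,x,y,z)$ via $(w^3-z^3)^2=C^2+4$) are correct but not needed beyond what the paper records.
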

\begin{proof}
 Since $0 = E + F = C D$, either $C$ or $D$ is zero. In either case, simplifying equation~(\ref{equ:eTimesF}) gives the desired result and is covered by tractable case~(4) in Theorem~\ref{thm:dichotomySymmetrized}.
\end{proof}

\begin{lemma} \label{lem:symmetrizedTractable:EF:EqualNonzero}
 If $A = -1 \myand E = F = \pm 2 i$, then $C^2 = -4 \myand D^2 = 4$, which is tractable.
\end{lemma}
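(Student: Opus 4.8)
The plan is a short, direct algebraic computation using the identities recorded just before this lemma. Recall that under the standing normalization $A = -1$ and $B = 1$, equations~(\ref{equ:ePlusF}) and~(\ref{equ:eTimesF}) specialize to $E + F = C D$ and $E F = 4 + C^2 - D^2$. First I would substitute the hypothesis $E = F = \pm 2 i$ into both identities: from $E F = (\pm 2 i)^2 = -4$ together with the second identity we get $C^2 - D^2 = -8$, and squaring the first identity gives $C^2 D^2 = (E + F)^2 = (\pm 4 i)^2 = -16$.

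Next I would solve this system in the unknowns $C^2$ and $D^2$. Writing $C^2 = D^2 - 8$ and substituting into $C^2 D^2 = -16$ yields $(D^2 - 8) D^2 = -16$, i.e., $D^4 - 8 D^2 + 16 = (D^2 - 4)^2 = 0$. Hence $D^2 = 4$, and then $C^2 = D^2 - 8 = -4$, which is the claimed conclusion.

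Finally I would verify that the pair of constraints $C^2 = -4 \myand D^2 = 4$ (still with $A = -1$, $B = 1$) falls into the tractable class~(4) of Theorem~\ref{thm:dichotomySymmetrized}, namely $A = -B \myand 4 A^3 C = C^3 \myand 4 B D = D^3$: the first conjunct holds since $A = -1 = -B$; the second becomes $-4 C = C^3$, i.e., $C (C^2 + 4) = 0$, which holds because $C^2 + 4 = 0$; and the third becomes $4 D = D^3$, i.e., $D (D^2 - 4) = 0$, which holds because $D^2 = 4$. Thus the edge signature is tractable. There is essentially no obstacle in this argument; the only point requiring a moment's care is confirming that the derived constraints match a listed tractable case, exactly as was done in the proof of Lemma~\ref{lem:symmetrizedTractable:EF:BothZero}.
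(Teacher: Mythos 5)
Your proof is correct, and it takes a genuinely different route from the paper's. The paper returns to the original variables: it multiplies $\pm 2 i = E = w^3 x + y z^3$ by $w^3 y$ and $\pm 2 i = F = w^3 y + x z^3$ by $w^3 x$ (using $w z = -1$, $x y = 1$) to show that $x$ and $y$ both satisfy the quadratic $T^2 \pm 2 i w^3 T - w^6 = 0$, whose discriminant vanishes; hence $x = y$, giving $D^2 = 4$, and then equation~(\ref{equ:eTimesF}) yields $C^2 = -4$. You instead stay entirely in the symmetrized variables: from $E F = -4$ and equation~(\ref{equ:eTimesF}) you get $C^2 - D^2 = -8$, and from equation~(\ref{equ:ePlusF}) you get $C^2 D^2 = (E+F)^2 = -16$; eliminating $C^2$ gives $(D^2 - 4)^2 = 0$, forcing $D^2 = 4$ and then $C^2 = -4$. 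Your version is self-contained at the level of $(A,B,C,D,E,F)$ and avoids any appeal to the underlying $(w,x,y,z)$, which makes it slightly shorter and more uniform with the neighboring Lemmas~\ref{lem:symmetrizedTractable:EF:BothZero}, \ref{lem:symmetrizedTractable:EF:NotEqualNonzero}, and~\ref{lem:symmetrizedTractable:EF:OneZero}; the paper's discriminant computation, by contrast, makes the underlying geometric reason ($x = y$ exactly) more visible. The closing verification against tractable case~(4) of Theorem~\ref{thm:dichotomySymmetrized} is also correct.
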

\begin{proof}
 Using $w z = A = -1$ and $x y = B = 1$, we multiply $\pm 2 i = E = w^3 x + y z^3$ by $w^3 y$ to get $y^2 \pm 2 i w^3 y - w^6 = 0$. Similarly, multiplying $\pm 2 i = F = w^3 y + x z^3$ by $w^3 x$ gives $x^2 \pm 2 i w^3 x - w^6 = 0$. This is the same quadratic polynomial with $x$ and $y$ as indeterminates. Its discriminant is zero, so $x = y$ which means that $D^2 = 4$. Simplifying equation~(\ref{equ:eTimesF}) yields $C^2 = -4$ as required. This is covered by tractable case~(4) in Theorem~\ref{thm:dichotomySymmetrized}.
\end{proof}

\begin{lemma} \label{lem:symmetrizedTractable:EF:NotEqualNonzero}
 If $A = -1 \myand E = -F = \pm 2 i$, then $C = D = 0$, which is tractable.
\end{lemma}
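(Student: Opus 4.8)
The plan is to derive $C = D = 0$ directly from the two symmetrization identities, equation~(\ref{equ:ePlusF}) and equation~(\ref{equ:eTimesF}), which are available here because we are operating under the standing normalization $A = -1$ and $B = 1$.

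First I would unpack the hypothesis: writing $E = -F = \pm 2i$ means $\{E, F\} = \{2i, -2i\}$, so in particular $E + F = 0$, and equation~(\ref{equ:ePlusF}) immediately gives $C D = E + F = 0$. For the product, $E F = -E^2 = -(\pm 2i)^2 = 4$, a value independent of which sign is in force, so equation~(\ref{equ:eTimesF}) gives $4 = E F = 4 + C^2 - D^2$, hence $C^2 = D^2$.

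Combining the two consequences finishes the argument. From $C D = 0$, at least one of $C, D$ vanishes; say $C = 0$ (the other case is symmetric). Then $D^2 = C^2 = 0$, so $D = 0$ as well, giving $C = D = 0$. To conclude tractability I would note that with $A = -1$ and $B = 1$ the point $C = D = 0$ satisfies the conditions $A = -B$, $4A^3 C = C^3$, $4 B D = D^3$ of case~(4) of Theorem~\ref{thm:dichotomySymmetrized} (and is exactly case~(7), tractable on planar graphs), so the claimed tractability follows.

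I do not anticipate any real obstacle: the whole proof is a two-line manipulation of equation~(\ref{equ:ePlusF}) and equation~(\ref{equ:eTimesF}). The only points that need a moment's care are verifying that $E F = 4$ under both admissible sign choices for $E$, and the small amount of bookkeeping to confirm that $C = D = 0$ indeed falls into an already-established tractable class.
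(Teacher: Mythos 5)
Your proof follows the same route as the paper's: both use $E+F = CD = 0$ from equation~(\ref{equ:ePlusF}) and $EF = 4$ substituted into equation~(\ref{equ:eTimesF}) to get $C^2 = D^2$, then combine to conclude $C = D = 0$, landing in tractable case~(4) of Theorem~\ref{thm:dichotomySymmetrized}. Your version is just slightly more explicit in spelling out the computation $EF = -E^2 = 4$; otherwise the arguments are identical.
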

\begin{proof}
 Since $0 = E + F = C D$, either $C$ or $D$ is zero. Simplifying equation~(\ref{equ:eTimesF}) gives $C^2 = D^2$, so both $C$ and $D$ are zero. This is covered by tractable case~(4) in Theorem~\ref{thm:dichotomySymmetrized}.
\end{proof}

\begin{lemma} \label{lem:symmetrizedTractable:EF:OneZero}
 If $A = -1 \myand ((E = \pm 2 i \myand F = 0) \myor (E = 0 \myand F = \pm 2 i))$, then $D^2 = 2 \myand C^2 = -2$, which is tractable.
\end{lemma}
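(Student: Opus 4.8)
The plan is to argue entirely in the symmetrized variables $A, B, C, D, E, F$, exactly in the style of Lemmas~\ref{lem:symmetrizedTractable:EF:BothZero}--\ref{lem:symmetrizedTractable:EF:NotEqualNonzero}, using the two identities $E + F = C D$ (equation~(\ref{equ:ePlusF})) and $E F = 4 + C^2 - D^2$ (equation~(\ref{equ:eTimesF})), both of which hold under the standing assumptions $A = -1$ and $B = 1$. No new gadgets are needed; the whole lemma is a short elimination computation from these two relations.

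First I would observe that in either sub-case --- $E = \pm 2 i \myand F = 0$, or $E = 0 \myand F = \pm 2 i$ --- one has $E F = 0$ and $E + F = \pm 2 i$. Substituting $E F = 0$ into equation~(\ref{equ:eTimesF}) yields $D^2 - C^2 = 4$, while substituting $E + F = \pm 2 i$ into equation~(\ref{equ:ePlusF}) gives $C D = \pm 2 i$, hence $C^2 D^2 = -4$. Writing $p = C^2$, the first relation gives $D^2 = p + 4$, and the second becomes $p(p + 4) = -4$, that is, $(p + 2)^2 = 0$. Therefore $C^2 = -2$ and $D^2 = 2$, as claimed.

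Finally, since $A = -1$ and $B = 1$, the conditions $C^2 = -2 = 2 A^3$ and $D^2 = 2 = 2 B$ (together with $A = -B$) are precisely the defining equations of tractable case~(5) of Theorem~\ref{thm:dichotomySymmetrized} --- the ``affine after holographic transformation'' class $w z = -x y \myand w^6 = -z^6 \myand x^2 = -y^2$ --- so $\holant{(w,x,y,z)}{{=}_3}$ is in $\P$. I expect no genuine obstacle here; the only point requiring slight care is that the computation must yield $C^2 = -2 \ne 0$, which is what places this situation in case~(5) rather than in the $E = F = 0$ case~(4) scenarios already treated in Lemmas~\ref{lem:symmetrizedTractable:EF:BothZero}--\ref{lem:symmetrizedTractable:EF:NotEqualNonzero}.
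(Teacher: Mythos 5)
Your proof is correct and follows essentially the same route as the paper: both use the identities $E+F=CD$ and $EF=4+C^2-D^2$, set $EF=0$ and $E+F=\pm 2i$, and eliminate to a perfect-square quadratic forcing $C^2=-2$ and $D^2=2$; you merely solve for $D^2$ first where the paper solves for $C^2$ first. The identification with tractable case~(5) of Theorem~\ref{thm:dichotomySymmetrized} is also the same.
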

\begin{proof}
 Since $\pm 2 i = E + F = C D$, neither $C$ or $D$ is zero. Squaring this equation and solving for $C^2$ gives $C^2 = -4 / D^2$. In equation~(\ref{equ:eTimesF}), first we substitute for $C^2$ to conclude that $D^2 = 2$ and then substitute for $D^2$ to conclude that $C^2 = -2$. This is tractable case~(5) in Theorem~\ref{thm:dichotomySymmetrized}.
\end{proof}

\begin{figure}[t]
 \centering
 \captionsetup[subfigure]{labelformat=empty}
 \gadgetSubfloat[Gadget \ref{gadget:symmetricGenerator1}]{
  \begin{tikzpicture}[scale=\scale,transform shape,>=\arrowType,node distance=\nodeDist,semithick]
   \node[external] (0)              {};
   \node[internal] (2) [right of=0] {};
   \node[internal] (1) [right of=2] {};
   \node[external] (3) [right of=1] {};
   \path (0) edge [->]             node[pos=\startPos] (e1) {} (2)
         (1) edge [<-, bend right] node[midway]        (e2) {} (2)
             edge [->, bend left]  node[midway]        (e3) {} (2)
             edge [<-]             node[pos=\endPos]   (e4) {} (3);
   \begin{pgfonlayer}{background}
    \node[draw=\borderColor,thick,rounded corners,fit = (1) (2) (e1) (e2) (e3) (e4)] {};
   \end{pgfonlayer}
  \end{tikzpicture}} \label{gadget:symmetricGenerator1}
 \gadgetSubfloat[Gadget \ref{gadget:symmetricGenerator2}]{
  \begin{tikzpicture}[scale=\scale,transform shape,>=\arrowType,node distance=\nodeDist,semithick]
   \node[external] (0)              {};
   \node[internal] (2) [right of=0] {};
   \node[internal] (1) [right of=2] {};
   \node[external] (3) [right of=1] {};
   \path (0) edge [<-]             node[pos=\startPos] (e1) {} (2)
         (1) edge [<-, bend right] node[midway]        (e2) {} (2)
             edge [->, bend left]  node[midway]        (e3) {} (2)
             edge [->]             node[pos=\endPos]   (e4) {} (3);
   \begin{pgfonlayer}{background}
    \node[draw=\borderColor,thick,rounded corners,fit = (1) (2) (e1) (e2) (e3) (e4)] {};
   \end{pgfonlayer}
  \end{tikzpicture}} \label{gadget:symmetricGenerator2}
 \caption{Gadgets with a symmetric generator signature}
\end{figure}
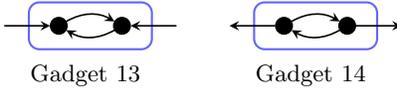

At this point, every setting of the variables has either been proven tractable over planar graphs or $\SHARPP$-hard. So far, all our hardness proofs originate from \#\textsc{VertexCover} on 3-regular graphs, which is $\holant{(0,1,1,1)}{{=_3}}$ (see the proof of Lemma~\ref{lem:hardByUnary} in section~\ref{sec:hardByUnaryProof}). Recall that \#\textsc{VertexCover} is $\SHARPP$-hard even for 3-regular planar graphs~\cite{XZZ07} and notice that all of our gadget constructions are planar, including our interpolation construction in the Group Lemma (see Figure~\ref{gadget:example:interpolationConstruction}). Therefore, all of the $\SHARPP$-hardness results proved so far still apply when the input is restricted to planar graphs. There are, however, some cases where the problem is $\SHARPP$-hard in general, yet is polynomial time computable when restricted to planar graphs. We analyze this case next using a lemma from~\cite{KC10} that can also be found in~\cite{MikeThesis}.
\begin{lemma}[Lemma 33 of \cite{MikeThesis}] \label{lem:planar:SymmetircHard}
 The problem $\holant{(w,1,1,w)}{{=}_3}$ is $\SHARPP$-hard unless $w \in \{0, \pm 1, \pm i\}$, in which case it is in $\P$.
\end{lemma}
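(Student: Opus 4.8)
The plan is to treat tractability and $\SHARPP$-hardness separately. Tractability is immediate from results already in hand: for each exceptional $w$ the signature $(w,1,1,w)$ falls into one of the tractable classes of Theorem~\ref{thm:dichotomy}. Indeed, $(1,1,1,1)=(1,1)^{\tensor 2}$ and $(-1,1,1,-1)=(1,-1)\tensor(-1,1)$ are degenerate (case~1); $(0,1,1,0)$ is generalized disequality (case~2); and for $w=\pm i$ one has $wz=w^2=-1=-xy$, $w^6=(\pm i)^6=-1=z^6$, and $x^2=1=y^2$, so $(w,1,1,w)$ is affine after a holographic transformation (case~4). Hence $\holant{(w,1,1,w)}{{=}_3}\in\P$ for $w\in\{0,\pm 1,\pm i\}$.

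For $\SHARPP$-hardness, assume $w\notin\{0,\pm 1,\pm i\}$. Because $x=y$ the signature is symmetric, so Theorem~\ref{thm:interpolation} does not apply verbatim (its hypothesis $x\ne y$ fails); instead I would invoke the symmetric interpolation machinery of~\cite{KC10}, which combines recursive gadgets with \emph{finisher} gadgets that map symmetric binary signatures to unary ones. The crucial reduction is this: if every unary generator $(X,Y)$ can be interpolated into $\holant{(w,1,1,w)}{{=}_3}$, then Lemma~\ref{lem:interpolation:symmetricSimulationOfVertexCover} applies — since $w^2\ne 1$ (so $(w,w)$ avoids the curve $ab=1$) and $(w,w)\ne(0,0)$ — and directly yields $\SHARPP$-hardness. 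So everything reduces to producing an infinite family of pairwise linearly independent signatures that can then be projected down to unary signatures, with enough of the projections pairwise independent to solve a Vandermonde system as in the Group Lemma.

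To produce such a family, I would build recursive gadgets (possibly using self-loops) from $(w,1,1,w)$ and $=_3$. For one natural such gadget the transition matrix $M$ is, after removing a scalar and using symmetry, a $2$-by-$2$ matrix of the form $\shrinkMatrixTwoRows{\begin{bmatrix} a & b \\ b & a \end{bmatrix}}$ whose eigenvalue ratio is an explicit rational function of $w$. If $M$ has infinite order modulo scalars, its powers give the family directly. If $M$ has finite order — say $M^{k}=\lambda I$ — then the anti-gadget $M^{-1}$, realized as the composition $M^{k-1}$, can be composed with a second, structurally different recursive gadget to yield a matrix of infinite order, exactly as in Section~\ref{sec:anti-gadgets:main}. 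Applying a finisher gadget to this infinite pairwise-independent family then produces the desired unary signatures, completing the interpolation and hence the reduction.

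The main obstacle is the root-of-unity analysis hidden in the previous paragraph: the eigenvalue ratio of a single recursive gadget genuinely does become a root of unity for some $w$ outside $\{0,\pm 1,\pm i\}$ (for instance $w=e^{\pm i\pi/3}$ makes it $-1$), so no one gadget can work for all $w$, and one must verify that for every remaining value some combination of gadgets and anti-gadgets escapes finite order — equivalently, that the group generated by the realizable transition matrices is infinite unless $w\in\{0,\pm 1,\pm i\}$. Carrying out that case analysis, which is exactly where the five exceptional values are pinned down, is the heart of the matter; it is done in~\cite{KC10, MikeThesis}, and the sketch above is its reading in terms of gadgets and anti-gadgets.
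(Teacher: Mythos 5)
This lemma is not proved in the paper at all; it is stated as a citation to Lemma~33 of~\cite{MikeThesis} (equivalently~\cite{KC10}), and the text preceding it explicitly says ``We analyze this case next using a lemma from~\cite{KC10}.'' So there is no in-paper proof to compare against. That said, your treatment is sound as far as it goes. The tractability check is correct and self-contained: $(1,1,1,1)$ and $(-1,1,1,-1)$ are degenerate ($wz=xy$), $(0,1,1,0)$ is generalized disequality, and for $w=\pm i$ you correctly verify $wz=-xy$, $w^6=z^6$, and $x^2=y^2$, placing it in case~(4) with $\varepsilon=1$; this fills in a detail the paper never bothers to spell out. For the hardness direction you correctly observe that Theorem~\ref{thm:interpolation} is inapplicable because its hypothesis $x\neq y$ fails, that the symmetric interpolation machinery of~\cite{KC10} (recursive gadgets plus finisher gadgets) is the right substitute, and that once unary signatures are interpolated, Lemma~\ref{lem:interpolation:symmetricSimulationOfVertexCover} closes the argument since $w^2\neq 1$ and $(w,w)\neq(0,0)$. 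But you do not exhibit the recursive gadgets or carry out the root-of-unity case analysis that pins down $\{0,\pm1,\pm i\}$; you explicitly defer this to~\cite{KC10, MikeThesis}. That is exactly what the paper does, so your proposal tracks the paper's handling of the lemma — it is an annotated citation, not a self-contained proof, and you are upfront about where the real work lives.
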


\begin{lemma} \label{lem:planar:AsymmetricHard}
 The problem $\holant{(w,1,-1,-w)}{{=}_3}$ is $\SHARPP$-hard, unless $w \in \{0, \pm 1, \pm i\}$, in which case it is in $\P$.
\end{lemma}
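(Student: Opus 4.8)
The plan is to reduce the \emph{symmetric} problem of Lemma~\ref{lem:planar:SymmetircHard} to this one, using the two symmetric‑generator gadgets drawn as Gadget~\ref{gadget:symmetricGenerator1} and Gadget~\ref{gadget:symmetricGenerator2}. For $f=(w,1,-1,-w)$ both gadgets consist of two $=_3$ vertices joined by a pair of oppositely oriented $f$‑edges, with one more $f$‑edge dangling at each $=_3$ vertex. The interior factor contributed by the two parallel edges is $f(p,q)f(q,p)$, which equals $w^2$ when $p=q$ and $-1$ when $p\ne q$, and in particular is symmetric in $p,q$. Carrying this through, a short computation shows that Gadget~\ref{gadget:symmetricGenerator1} realizes the symmetric binary generator $(p_1,q_1,q_1,p_1)$ with $p_1=w^4+w^2-2w=w(w-1)(w^2+w+2)$ and $q_1=-2w^3+w^2+1=-(w-1)(2w^2+w+1)$, while Gadget~\ref{gadget:symmetricGenerator2} realizes $(p_2,q_2,q_2,p_2)$ with $p_2,q_2$ obtained from $p_1,q_1$ by $w\mapsto -w$, i.e.\ $p_2=w(w+1)(w^2-w+2)$ and $q_2=(w+1)(2w^2-w+1)$. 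Each is a generator‑side $\GR$-gate, so it can be substituted for a generator vertex in a signature grid.

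First I would convert each gadget into a reduction. When $q_j\ne 0$, replacing each generator vertex of an instance of $\holant{(p_j/q_j,1,1,p_j/q_j)}{{=}_3}$ by a (rescaled) copy of Gadget $j$ yields $\holant{(p_j/q_j,1,1,p_j/q_j)}{{=}_3}\le_T^{\P}\holant{(w,1,-1,-w)}{{=}_3}$, so by Lemma~\ref{lem:planar:SymmetircHard} the target problem is $\SHARPP$-hard unless $p_j/q_j\in\{0,\pm1,\pm i\}$. Hence Gadget $j$ \emph{fails} to give hardness only on the finite set $B_j$ of those $w$ with $q_j(w)=0$ (the signature then collapses to a multiple of $(1,0,0,1)$ or to $0$) or with $p_j(w)/q_j(w)\in\{0,\pm1,\pm i\}$.

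Next I would show $B_1\cap B_2\subseteq\{0,\pm1,\pm i\}$, so that for every $w\notin\{0,\pm1,\pm i\}$ at least one gadget certifies hardness. Each defining condition of $B_j$ is the vanishing of an explicit polynomial of degree at most $4$: $q_j$ is $(w\mp 1)$ times a quadratic; $p_j$ is $w$ times a cubic; the equations $p_j=\pm q_j$ factor as $(w-1)(w+1)^3=0$ and $(w-1)^2(w^2+1)=0$ and so contribute nothing outside $\{1,-1,\pm i\}$; and $p_j=\pm i\,q_j$ reduces, after removing the factor $w-1$, to a cubic. The inclusion then follows from a finite resultant (equivalently polynomial‑GCD) check over this handful of low‑degree polynomials, one from the description of $B_1$ and one from that of $B_2$ at a time; for example $2w^2+w+1$ and $2w^2-w+1$ share only the root $w=0$, $2w^2+w+1$ and $w^2-w+2$ share only $w=1$, and similarly for the cubic factors.

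Finally, for $w\in\{0,\pm1,\pm i\}$ the tractability claim is routine: $(0,1,-1,0)$ is generalized disequality; $(\pm1,1,-1,\mp1)$ has $wz=-1=xy$, hence is degenerate; and $(\pm i,1,-1,\mp i)$ satisfies $wz=1=-xy$, $w^6=z^6$, and $x^2=y^2$, so it lies in class~(4) of Theorem~\ref{thm:dichotomy} with $\varepsilon=1$; all are in $\P$. (For completeness, $(w,1,-1,-w)$ always has $w^3+z^3=0$ and $x+y=0$, i.e.\ case~(7) of Theorem~\ref{thm:dichotomySymmetrized}, which is tractable on planar graphs via matchgates.) The main obstacle is the step $B_1\cap B_2\subseteq\{0,\pm1,\pm i\}$: it is a finite but somewhat tedious computation, and its real content is confirming that these two particular symmetric‑generator gadgets together suffice, so that no third gadget is needed.
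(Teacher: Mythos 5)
This is the same approach as the paper's proof: Gadgets~\ref{gadget:symmetricGenerator1} and~\ref{gadget:symmetricGenerator2} realize symmetric generators $(p_j,q_j,q_j,p_j)$, Lemma~\ref{lem:planar:SymmetircHard} then gives hardness whenever some $p_j/q_j \notin \{0,\pm 1,\pm i\}$, and the residual claim that both gadgets fail simultaneously only when $w \in \{0,\pm 1,\pm i\}$ is the finite algebraic verification that the paper carries out with Mathematica's \textsc{Reduce}. Your explicit values of $p_1,q_1$, the $w \mapsto -w$ relation giving $p_2,q_2$, and the factorizations of $p_j \pm q_j$ are all correct. One slip in your illustrative checks: $2w^2+w+1$ and $2w^2-w+1$ in fact have \emph{no} common root ($w=0$ is a root of neither), and likewise $2w^2+w+1$ and $w^2-w+2$ have no common root ($w=1$ is a root of neither); since an empty common-root set is exactly what you want, this slip does not affect the argument.
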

\begin{proof}
 If $w \in \{0, \pm 1, \pm i\}$, then the problem is in $\P$ by the tractability proof of Theorem~\ref{thm:dichotomy}. If $w \not\in \{0, \pm 1, \pm i\}$, then Gadgets~\ref{gadget:symmetricGenerator1} and~\ref{gadget:symmetricGenerator2} simulate two symmetric generators. Using \textsc{Reduce} in Mathematica\texttrademark, we conclude that at least one of the gadgets satisfies the hypothesis for $\SHARPP$-hardness from Lemma~\ref{lem:planar:SymmetircHard}.
\end{proof}

\begin{lemma}
 If $w^3 = \varepsilon z^3 \myand x = \varepsilon y$ where $\varepsilon = \pm 1$, then $\holant{(w,x,y,z)}{{=}_3}$ is $\SHARPP$-hard unless $x = 0 \myor w / x \in \{0, \pm 1, \pm i\}$, in which case it is in $\P$.
\end{lemma}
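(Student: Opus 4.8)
The plan is to reduce an arbitrary signature in class~(5) to one of the canonical symmetric signatures already settled by Lemmas~\ref{lem:planar:SymmetircHard} and~\ref{lem:planar:AsymmetricHard}, and to read off both the $\SHARPP$-hardness and the membership in $\P$ from those results together with the tractability part of Theorem~\ref{thm:dichotomy}.

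First I would dispose of the easy tractable sub-cases. If $x = 0$ then $y = \varepsilon x = 0$, so $(w,x,y,z)$ is a generalized equality signature and $\holant{(w,x,y,z)}{{=}_3}$ is in $\P$ by class~(3) of Theorem~\ref{thm:dichotomy}. If $w = 0$ then $z^{3} = \varepsilon^{-1}w^{3} = 0$ forces $z = 0$, so $(w,x,y,z)$ is a generalized disequality, the problem is in $\P$ by class~(2), and $w/x = 0$ lies in the stated exceptional set. Hence from now on I assume $wxyz \ne 0$, and after scaling the signature by $1/x$ (which preserves the complexity) I write it as $(a,1,\varepsilon,b)$ with $a = w/x$, $b = z/x$, and $b^{3} = \varepsilon a^{3}$.

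Next comes the core reduction. I apply the diagonal holographic transformation $T = \diag(1,\mu)$ with $\mu = \varepsilon b/a$; this is a cube root of unity since $\mu^{3} = \varepsilon^{3} b^{3}/a^{3} = \varepsilon\cdot\varepsilon = 1$, and consequently $(T^{-1})^{\tensor 3}$ fixes ${=}_3$, so the recognizer side is untouched. The generator becomes $(a,1,\varepsilon,b)\,T^{\tensor 2} = (a,\mu,\varepsilon\mu,\mu^{2}b)$, i.e.\ $(a/\mu,1,\varepsilon,\mu b)$ after renormalizing, and a short calculation using $b^{3} = \varepsilon a^{3}$ shows this equals $(v,1,1,v)$ when $\varepsilon = 1$ and $(v,1,-1,-v)$ when $\varepsilon = -1$, where $v = z^{2}/(wx)$. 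Lemma~\ref{lem:planar:SymmetircHard} (for $\varepsilon = 1$) and Lemma~\ref{lem:planar:AsymmetricHard} (for $\varepsilon = -1$) then say the resulting problem is $\SHARPP$-hard unless $v \in \{0,\pm 1,\pm i\}$ and is in $\P$ otherwise. To finish I translate the condition $v \in \{0,\pm 1,\pm i\}$ back to the statement: after normalization $v$ is the first-to-second coordinate ratio of the canonical signature, which is the role played by $w/x$ in the original once we have also arranged $z = \varepsilon w$ (e.g.\ by the preliminary ${=}_3$-preserving transformation with $\mu' = \varepsilon z/w$); and for each of the finitely many exceptional values one checks by inspection that $(w,x,y,z)$ in fact lies in class~(1), (2), or~(4) of Theorem~\ref{thm:dichotomy}, hence is in $\P$ on all graphs.

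I expect the main obstacle to be this final bookkeeping: keeping the cube-root-of-unity factors under control so that the symmetric parameter delivered by the reduction really is the ratio $w/x$ and not a cube-root-of-unity multiple of it, and confirming that $\{0,\pm 1,\pm i\}$ is exactly the set of values that return the signature to classes (1)--(4). A cleaner alternative, paralleling the proof of Lemma~\ref{lem:planar:AsymmetricHard}, is to exhibit a small family of gadgets in the spirit of Gadgets~\ref{gadget:symmetricGenerator1} and~\ref{gadget:symmetricGenerator2} that convert $(w,x,y,z)$ into symmetric generators $(v',1,1,v')$, and then certify with a symbolic computation such as \textsc{Reduce} that for at least one gadget $v' \notin \{0,\pm 1,\pm i\}$ whenever $w/x \notin \{0,\pm 1,\pm i\}$; along this route the case $\varepsilon = -1$ is best treated separately through Lemma~\ref{lem:planar:AsymmetricHard}.
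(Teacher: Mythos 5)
Your plan is the same one the paper uses: dispose of the $x=0$ sub-case, apply a diagonal holographic transformation by a cube root of unity to force $w' = \varepsilon z'$, and then appeal to Lemmas~\ref{lem:planar:SymmetircHard} and~\ref{lem:planar:AsymmetricHard}. Your computation of the transformed parameter $v = z^2/(wx)$ is correct.

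But the ``final bookkeeping'' you flag is not a formality; it is a real gap, and one that the paper's own proof quietly elides. The decisive ratio after the transformation is $v = z^2/(wx) = (z/w)^2\,(w/x)$, and $(z/w)^2$ is an arbitrary cube root of unity given only $w^3 = \varepsilon z^3$ --- it equals $1$ only when $z = \varepsilon w$ exactly. Since $\{0,\pm 1,\pm i\}$ is not closed under multiplication by primitive cube roots of unity, the conditions ``$w/x\in\{0,\pm1,\pm i\}$'' and ``$v\in\{0,\pm1,\pm i\}$'' are not equivalent, so the translation back to the stated exceptional set cannot be done by simple identification. Concretely, take $w = e^{2\pi i/3}$, $x=y=1$, $z=e^{4\pi i/3}$: the hypothesis holds with $\varepsilon=1$, $x\ne 0$, and $w/x=e^{2\pi i/3}\notin\{0,\pm1,\pm i\}$, so the lemma as literally stated would predict $\SHARPP$-hardness; but $wz = 1 = xy$, the signature is degenerate, and the problem is in $\P$. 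The reduction itself returns the right verdict, since $v = z^2/(wx) = 1\in\{0,\pm1,\pm i\}$.

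So the exceptional condition should read $x = 0 \myor z^2/(wx)\in\{0,\pm1,\pm i\}$ (equivalently: first normalize to $z = \varepsilon w$ by the cube-root-of-unity holographic transformation, then test $w/x\in\{0,\pm1,\pm i\}$ on the normalized signature). To finish your proof along this route you must either adopt the invariant formulation, or carry out the normalization explicitly and track the resulting ratio through to the statement; you cannot identify $v$ with $w/x$ on the original parameters. That you flagged this cube-root-of-unity discrepancy is a point in your favour: the paper's proof replaces $z$ with $\varepsilon w$ without remarking that this silently changes which ratio is being tested. Your alternative gadget-based route (mimicking Lemma~\ref{lem:planar:AsymmetricHard} with a \textsc{Reduce} certification) would run into the same discrepancy, and the certification as you phrase it --- ``$v'\notin\{0,\pm1,\pm i\}$ whenever $w/x\notin\{0,\pm1,\pm i\}$'' --- would in fact fail at the witness above.
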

\begin{proof}
 If $x = 0$, then also $y = 0$ and this case, generalized equality, is tractable in Theorem~\ref{thm:dichotomy}. Now assume $x \ne 0$. If $w z \ne 0$, then we apply the holographic transformation $\shrinkMatrixTwoRows{\begin{bmatrix} \alpha & 0\\ 0 & \alpha^2 \end{bmatrix}}$ with $\alpha = \varepsilon z / w$. Note that $\alpha^3 = \varepsilon w^3 / z^3 = 1$. The edge signature becomes $(w,x,y,z) \shrinkMatrixTwoRows{\begin{bmatrix} \alpha & 0\\ 0 & \alpha^2 \end{bmatrix}^{\tensor 2}} = (\alpha^2 w, x, y, \alpha z)$, while $=_3$ is unchanged since $\shrinkMatrixTwoRows{\left(\begin{bmatrix} \alpha & 0\\ 0 & \alpha^2 \end{bmatrix}^{-1}\right)^{\tensor 3}} = I$, the 8-by-8 identity matrix~\cite{Val08,CL11}. This reduces to the case $w = \varepsilon z \myand x = \varepsilon y$. We note that when $w z = 0$, this equivalence still holds. We then normalize $x = 1$ (since it is nonzero) and replace $z$ with $\varepsilon w$ to obtain the edge signature $(w / x, 1, \varepsilon, \varepsilon w / x)$. Depending on $\varepsilon$, this case is either covered in Lemma~\ref{lem:planar:SymmetircHard} or Lemma~\ref{lem:planar:AsymmetricHard}, so we are done.
\end{proof}

\section{Tractable Signatures and Quadratic Polynomials} \label{sec:affine}
Here we briefly discuss an alternative and more conceptual catalog of all the tractable functions $f$. In Theorem~\ref{thm:dichotomy}, Case~(1) is degenerate: $\det \begin{bmatrix} f(0,0) & f(0,1) \\ f(1,0) & f(1,1) \end{bmatrix} = 0$. Case~(2) is generalized disequality: $\begin{bmatrix} 0 & f(0,1) \\ f(1,0) & 0 \end{bmatrix}$. Case~(3) is generalized equality: $\begin{bmatrix} f(0,0) & 0 \\ 0 & f(1,1) \end{bmatrix}$. The tractable Case~(4) of Theorem~\ref{thm:dichotomy} is more interesting: for $f = \begin{bmatrix} w & x \\ y & z \end{bmatrix}$, we have $w z = - x y \myand w^6 = \varepsilon z^6 \myand x^2 = \varepsilon y^2$, where $\varepsilon = \pm 1$. If any of $w,x,y,z = 0$, then the above condition forces all $w = x = y = z = 0$. This constant 0 function is trivially tractable. Assume $w x y z \not = 0$. In the proof of tractability of
Theorem~\ref{thm:dichotomy}, it is shown that under a holographic transformation, this is equivalent to $w z = - x y \myand w^2 = \varepsilon z^2 \myand x^2 = \varepsilon y^2$, where $\varepsilon = \pm 1$. If we further normalize $f$ by setting $w = 1$, which corresponds to a global nonzero constant factor, then this tractable case is equivalent to $z = - x y \myand 1 = \varepsilon z^2 \myand x^2 = \varepsilon y^2$, where $\varepsilon = \pm 1$. This has a simple form as an exponential quadratic polynomial.

\begin{lemma}
 For $x_1, x_2 \in \{0,1\}$, let $f(x_1, x_2) = (1, x, y, z)$. Then $z = - x y \myand \varepsilon = z^2 \myand x^2 = \varepsilon y^2$ with $\varepsilon = \pm 1$ iff there exists $a, b \in \Z / 4 \Z$ such that
\[f(x_1, x_2) = i^{2 x_1 x_2 + b x_1 + c x_2}.\]
\end{lemma}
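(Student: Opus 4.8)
The statement is a direct characterization of the tractable Case~(4) functions (normalized so $f(0,0)=1$) as exponential quadratic polynomials of the form $i^{Q(x_1,x_2)}$ with the cross term $2x_1x_2$ having an even coefficient. Since both sides are explicit, the plan is simply to check both directions by elementary computation. Note first a small correction to the indexing: the lemma statement writes $b$ and $c$ in the hypothesis ``$a,b \in \Z/4\Z$'' but uses $b$ and $c$ in the formula $i^{2x_1x_2 + bx_1 + cx_2}$; I will treat the parameters as $b,c \in \Z/4\Z$, so the claim is $z = -xy \myand \varepsilon = z^2 \myand x^2 = \varepsilon y^2$ (for some $\varepsilon = \pm 1$) if and only if $f(x_1,x_2) = i^{2x_1x_2 + bx_1 + cx_2}$ for some $b,c \in \Z/4\Z$.

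For the ``if'' direction I would evaluate $f(x_1,x_2) = i^{2x_1x_2 + bx_1 + cx_2}$ at the four inputs: $f(0,0)=1$, $f(1,0)=i^b$, $f(0,1)=i^c$, $f(1,1)=i^{2+b+c} = -i^{b+c}$. So $x = i^b$, $y = i^c$, $z = -i^{b+c} = -xy$, which gives the relation $z = -xy$ immediately. Then $z^2 = x^2 y^2 = i^{2b+2c}$; since $x^2 = i^{2b} \in \{1,-1\}$ and $y^2 = i^{2c}\in\{1,-1\}$, set $\varepsilon = z^2 = x^2 y^2 \in \{1,-1\}$, and then $x^2 = \varepsilon y^2$ follows because $\varepsilon y^2 = x^2 y^4 = x^2$ (as $y^4 = 1$). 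This disposes of the forward inclusion with no obstacle.

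For the ``only if'' direction, assume $z = -xy$, $\varepsilon = z^2$, and $x^2 = \varepsilon y^2$ for some $\varepsilon = \pm1$. First handle degenerate possibilities: if any of $x,y,z$ vanishes, then $\varepsilon = z^2$ forces $z \neq 0$, and then $z = -xy$ forces $x,y \neq 0$ as well, so in fact $xyz \neq 0$ automatically and there is nothing further to check there. From $x^2 = \varepsilon y^2$ and $\varepsilon^2 = 1$ we get $x^4 = \varepsilon^2 y^4 = y^4$; combined with $z^2 = \varepsilon = \pm 1$ and $z = -xy$ we get $x^2 y^2 = z^2 = \varepsilon$, so $x^2 = \varepsilon / y^2 = \varepsilon y^2 / y^4$, hence $y^4 = 1$, i.e. $y \in \{1,-1,i,-i\}$, and symmetrically $x^4 = 1$. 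Thus $x = i^b$ and $y = i^c$ for unique $b,c \in \Z/4\Z$, and then $z = -xy = -i^{b+c} = i^{2+b+c}$, which is exactly $f(x_1,x_2) = i^{2x_1x_2+bx_1+cx_2}$ evaluated at all four points (using the table above). One should double-check that the remaining hypothesis $\varepsilon = z^2$, $x^2 = \varepsilon y^2$ is consistent rather than over-constraining: with $x = i^b$, $y = i^c$ we have $z^2 = i^{2b+2c} = x^2 y^2$ and $\varepsilon := x^2 y^2 \in \{1,-1\}$ makes both conditions hold, as in the ``if'' direction. So every such $(1,x,y,z)$ arises from some $(b,c)$.

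The only mild subtlety—and the one place to be careful—is the bookkeeping of the parameter $\varepsilon$: it is existentially quantified on one side but determined by $x,y$ on the other, so one must verify that the induced $\varepsilon = x^2 y^2$ is indeed the same sign satisfying both $z^2 = \varepsilon$ and $x^2 = \varepsilon y^2$ simultaneously. Once that is pinned down, both directions are a finite check on the four values of $f$ and the fourth-power constraints $x^4 = y^4 = 1$; there is no genuine obstacle. I would also remark at the end that this recovers the description promised in the abstract: the exponent $Q(x_1,x_2) = 2x_1x_2 + bx_1 + cx_2$ is a quadratic polynomial over $\Z/4\Z$ whose cross-term coefficient $2$ is even, which is precisely the form claimed there.
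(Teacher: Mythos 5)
Your proof is correct and takes essentially the same approach as the paper, which simply states that the result follows by direct verification on the 16 possible cases, noting $\varepsilon = 1$ corresponds to $a + b \equiv 0 \pmod 2$ and $\varepsilon = -1$ to $a + b \equiv 1 \pmod 2$. Your algebraic organization (deducing $x^4 = y^4 = 1$ and parametrizing $x = i^b$, $y = i^c$) is a slightly tidier way to carry out that same verification, and you correctly flag the $a,b$ versus $b,c$ typo in the statement; your swap of which of $x,y$ is $i^b$ versus $i^c$ relative to the paper's lexicographic indexing $(f(0,0),f(0,1),f(1,0),f(1,1))$ is harmless since the conditions are symmetric under $x \leftrightarrow y$.
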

\begin{proof}
 The proof is by a direct verification on the 16 possible cases, eight of which have $\varepsilon = 1$ which correspond to $a + b \equiv 0 \pmod{2}$ while the other eight have $\varepsilon = -1$ which correspond to $a + b \equiv 1 \pmod{2}$.
\end{proof}

Note that without normalizing $w = 1$, and including some degenerate cases, we can use \[f(x_1, x_2) = \lambda i^{2 a x_1 x_2 + b x_1 + c x_2},\] where $\lambda \in \C$ only contributes a constant factor, and the case $a = 0$ is degenerate.

\pagebreak
\section{Gadgets}
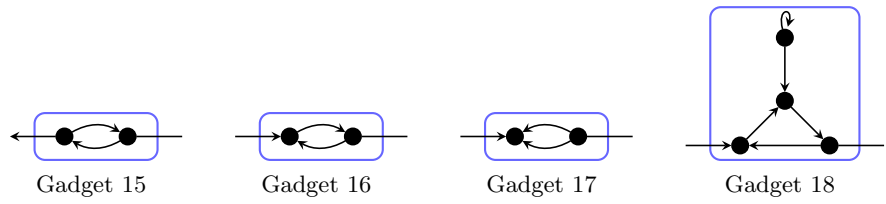
\begin{figure}[ht]
 \centering
 \captionsetup[subfigure]{labelformat=empty}
 \gadgetSubfloat[Gadget \ref{gadget:unary:0:001}]{
  \begin{tikzpicture}[scale=\scale,transform shape,>=\arrowType,node distance=\nodeDist,semithick]
   \node[external] (0)              {};
   \node[internal] (2) [right of=0] {};
   \node[internal] (1) [right of=2] {};
   \node[external] (3) [right of=1] {};
   \path (0) edge [<-]             node[pos=\startPos] (e1) {} (2)
         (1) edge [<-, bend right] node[midway]        (e2) {} (2)
             edge [->, bend left]  node[midway]        (e3) {} (2)
             edge [-]              node[pos=\endPos]   (e4) {} (3);
   \begin{pgfonlayer}{background}
    \node[draw=\borderColor,thick,rounded corners,fit = (1) (2) (e1) (e2) (e3) (e4)] {};
   \end{pgfonlayer}
  \end{tikzpicture}} \label{gadget:unary:0:001}
 \gadgetSubfloat[Gadget \ref{gadget:unary:0:101}]{
  \begin{tikzpicture}[scale=\scale,transform shape,>=\arrowType,node distance=\nodeDist,semithick]
   \node[external] (0)              {};
   \node[internal] (2) [right of=0] {};
   \node[internal] (1) [right of=2] {};
   \node[external] (3) [right of=1] {};
   \path (0) edge [->]             node[pos=\startPos] (e1) {} (2)
         (1) edge [<-, bend right] node[midway]        (e2) {} (2)
             edge [->, bend left]  node[midway]        (e3) {} (2)
             edge [-]              node[pos=\endPos]   (e4) {} (3);
   \begin{pgfonlayer}{background}
    \node[draw=\borderColor,thick,rounded corners,fit = (1) (2) (e1) (e2) (e3) (e4)] {};
   \end{pgfonlayer}
  \end{tikzpicture}} \label{gadget:unary:0:101}
 \gadgetSubfloat[Gadget \ref{gadget:unary:0:111}]{
  \begin{tikzpicture}[scale=\scale,transform shape,>=\arrowType,node distance=\nodeDist,semithick]
   \node[external] (0)              {};
   \node[internal] (2) [right of=0] {};
   \node[internal] (1) [right of=2] {};
   \node[external] (3) [right of=1] {};
   \path (0) edge [->]             node[pos=\startPos] (e1) {} (2)
         (1) edge [->, bend right] node[midway]        (e2) {} (2)
             edge [->, bend left]  node[midway]        (e3) {} (2)
             edge [-]              node[pos=\endPos]   (e4) {} (3);
   \begin{pgfonlayer}{background}
    \node[draw=\borderColor,thick,rounded corners,fit = (1) (2) (e1) (e2) (e3) (e4)] {};
   \end{pgfonlayer}
  \end{tikzpicture}} \label{gadget:unary:0:111}
 \gadgetSubfloat[Gadget \ref{gadget:unary:4:101010}]{
  \begin{tikzpicture}[scale=\scale,transform shape,>=\arrowType,node distance=\nodeDist,semithick]
   \node[external] (0)                    {};
   \node[internal] (4) [right of=0]       {};
   \node[internal] (3) [above right of=4] {};
   \node[internal] (1) [below right of=3] {};
   \node[internal] (2) [above of=3]       {};
   \node[external] (5) [right of=1]       {};
   \path (0) edge [->]             node[pos=\startPos] (e1) {} (4)
         (1) edge [<-]                                         (3)
             edge [->]                                         (4)
         (2) edge [<-, loop above] coordinate          (c1)    (2)
             edge [->]                                         (3)
         (3) edge [<-]                                         (4)
         (1) edge [-]              node[pos=\endPos]   (e2) {} (5);
   \begin{pgfonlayer}{background}
    \node[draw=\borderColor,thick,rounded corners,fit = (1) (2) (3) (4) (e1) (e2) (c1)] {};
   \end{pgfonlayer}
  \end{tikzpicture}} \label{gadget:unary:4:101010}
 \caption{Unary recursive gadgets}
\end{figure}

\begin{figure}[ht]
 \centering
 \captionsetup[subfigure]{labelformat=empty}
 \gadgetSubfloat[Gadget \ref{gadget:binary:0:110}]{
  \begin{tikzpicture}[scale=\scale,transform shape,>=\arrowType,node distance=\nodeDist,semithick]
   \node[external] (0)              {};
   \node[external] (1) [below of=0] {};
   \node[internal] (2) [right of=1] {};
   \node[internal] (3) [right of=0] {};
   \node[external] (4) [right of=2] {};
   \node[external] (5) [right of=3] {};
   \path (0) edge [->] node[pos=\startPos] (e1) {} (3)
         (1) edge [->] node[pos=\startPos] (e2) {} (2)
         (2) edge [<-]                             (3)
             edge [-]  node[pos=\endPos]   (e3) {} (4)
         (3) edge [-]  node[pos=\endPos]   (e4) {} (5);
   \begin{pgfonlayer}{background}
    \node[draw=\borderColor,thick,rounded corners,fit = (2) (3) (e1) (e2) (e3) (e4)] {};
   \end{pgfonlayer}
  \end{tikzpicture}} \label{gadget:binary:0:110}
 \gadgetSubfloat[Gadget \ref{gadget:binary:0:111}]{
  \begin{tikzpicture}[scale=\scale,transform shape,>=\arrowType,node distance=\nodeDist,semithick]
   \node[external] (0)              {};
   \node[external] (1) [below of=0] {};
   \node[internal] (2) [right of=1] {};
   \node[internal] (3) [right of=0] {};
   \node[external] (4) [right of=2] {};
   \node[external] (5) [right of=3] {};
   \path (0) edge [->] node[pos=\startPos] (e1) {} (3)
         (1) edge [->] node[pos=\startPos] (e2) {} (2)
         (2) edge [->]                             (3)
             edge [-]  node[pos=\endPos]   (e3) {} (4)
         (3) edge [-]  node[pos=\endPos]   (e4) {} (5);
   \begin{pgfonlayer}{background}
    \node[draw=\borderColor,thick,rounded corners,fit = (2) (3) (e1) (e2) (e3) (e4)] {};
   \end{pgfonlayer}
  \end{tikzpicture}} \label{gadget:binary:0:111}\\
 \gadgetSubfloat[Gadget \ref{gadget:binary:4:110000}]{
  \begin{tikzpicture}[scale=\scale,transform shape,>=\arrowType,node distance=\nodeDist,semithick]
   \node[external] (0)              {};
   \node[external] (1) [below of=0] {};
   \node[internal] (4) [right of=1] {};
   \node[internal] (5) [right of=0] {};
   \node[internal] (2) [right of=5] {};
   \node[internal] (3) [right of=4] {};
   \node[external] (6) [right of=2] {};
   \node[external] (7) [right of=3] {};
   \path (0) edge [->] node[pos=\startPos] (e1) {} (5)
         (1) edge [->] node[pos=\startPos] (e2) {} (4)
         (2) edge [<-]                             (3)
             edge [<-]                             (5)
         (3) edge [<-]                             (4)
         (4) edge [<-]                             (5)
         (2) edge [-]  node[pos=\endPos]   (e3) {} (6)
         (3) edge [-]  node[pos=\endPos]   (e4) {} (7);
   \begin{pgfonlayer}{background}
    \node[draw=\borderColor,thick,rounded corners,fit = (2) (3) (4) (5) (e1) (e2) (e3) (e4)] {};
   \end{pgfonlayer}
  \end{tikzpicture}} \label{gadget:binary:4:110000}
 \gadgetSubfloat[Gadget \ref{gadget:binary:7:110010}]{
  \begin{tikzpicture}[scale=\scale,transform shape,>=\arrowType,node distance=\nodeDist,semithick]
   \node[external] (1) [below of=0] {};
   \node[internal] (5) [right of=1] {};
   \node[internal] (4) [above right of=5] {};
   \node[internal] (2) [above of=4] {};
   \node[internal] (3) [below right of=4] {};
   \node[external] (7) [right of=3] {};
   \path let
          \p1 = (1),
          \p2 = (2)
         in
          node[external] (0) at (\x1,\y2) {};
   \path let
          \p1 = (7),
          \p2 = (2)
         in
          node[external] (6) at (\x1,\y2) {};
   \path (0) edge [->] node[pos=\startPos] (e1) {} (2)
         (1) edge [->] node[pos=\startPos] (e2) {} (5)
         (2) edge [<-]                             (4)
         (3) edge [<-]                             (4)
             edge [->]                             (5)
         (4) edge [<-]                             (5)
         (2) edge [-]  node[pos=\endPos]   (e3) {} (6)
         (3) edge [-]  node[pos=\endPos]   (e4) {} (7);
   \begin{pgfonlayer}{background}
    \node[draw=\borderColor,thick,rounded corners,fit = (2) (3) (4) (5) (e1) (e2) (e3) (e4)] {};
   \end{pgfonlayer}
  \end{tikzpicture}} \label{gadget:binary:7:110010}
 \gadgetSubfloat[Gadget \ref{gadget:binary:7:111010}]{
  \begin{tikzpicture}[scale=\scale,transform shape,>=\arrowType,node distance=\nodeDist,semithick]
   \node[external] (1) [below of=0] {};
   \node[internal] (5) [right of=1] {};
   \node[internal] (4) [above right of=5] {};
   \node[internal] (2) [above of=4] {};
   \node[internal] (3) [below right of=4] {};
   \node[external] (7) [right of=3] {};
   \path let
          \p1 = (1),
          \p2 = (2)
         in
          node[external] (0) at (\x1,\y2) {};
   \path let
          \p1 = (7),
          \p2 = (2)
         in
          node[external] (6) at (\x1,\y2) {};
   \path (0) edge [->] node[pos=\startPos] (e1) {} (2)
         (1) edge [->] node[pos=\startPos] (e2) {} (5)
         (2) edge [->]                             (4)
         (3) edge [<-]                             (4)
             edge [->]                             (5)
         (4) edge [<-]                             (5)
         (2) edge [-]  node[pos=\endPos]   (e3) {} (6)
         (3) edge [-]  node[pos=\endPos]   (e4) {} (7);
   \begin{pgfonlayer}{background}
    \node[draw=\borderColor,thick,rounded corners,fit = (2) (3) (4) (5) (e1) (e2) (e3) (e4)] {};
   \end{pgfonlayer}
  \end{tikzpicture}} \label{gadget:binary:7:111010}
 \gadgetSubfloat[Gadget \ref{gadget:binary:5:110010}]{
  \begin{tikzpicture}[scale=\scale,transform shape,>=\arrowType,node distance=\nodeDist,semithick]
   \node[external] (0)              {};
   \node[external] (1) [below of=0] {};
   \node[internal] (2) [right of=0] {};
   \node[internal] (5) [right of=1] {};
   \node[internal] (3) [right of=5] {};
   \node[internal] (4) [below of=3] {};
   \node[external] (7) [right of=3] {};
   \node[external] (6) [above of=7] {};
   \path (0) edge [->]             node[pos=\startPos] (e1) {} (2)
         (1) edge [->]             node[pos=\startPos] (e2) {} (5)
         (2) edge [<-]                                         (5)
         (3) edge [<-]                                         (4)
             edge [->]                                         (5)
         (4) edge [<-, loop below] node[pos=\startPos] (e1) {} (4)
         (2) edge [-]              node[pos=\endPos]   (e3) {} (6)
         (3) edge [-]              node[pos=\endPos]   (e4) {} (7);
   \begin{pgfonlayer}{background}
    \node[draw=\borderColor,thick,rounded corners,fit = (2) (3) (4) (5) (e1) (e2) (e3) (e4) (e1)] {};
   \end{pgfonlayer}
  \end{tikzpicture}} \label{gadget:binary:5:110010}
 \gadgetSubfloat[Gadget \ref{gadget:binary:5:111100}]{
  \begin{tikzpicture}[scale=\scale,transform shape,>=\arrowType,node distance=\nodeDist,semithick]
   \node[external] (0)              {};
   \node[external] (1) [below of=0] {};
   \node[internal] (2) [right of=0] {};
   \node[internal] (5) [right of=1] {};
   \node[internal] (3) [right of=5] {};
   \node[internal] (4) [below of=3] {};
   \node[external] (7) [right of=3] {};
   \node[external] (6) [above of=7] {};
   \path (0) edge [->]             node[pos=\startPos] (e1) {} (2)
         (1) edge [->]             node[pos=\startPos] (e2) {} (5)
         (2) edge [->]                                         (5)
         (3) edge [->]                                         (4)
             edge [<-]                                         (5)
         (4) edge [<-, loop below] node[pos=\startPos] (e1) {} (4)
         (2) edge [-]              node[pos=\endPos]   (e3) {} (6)
         (3) edge [-]              node[pos=\endPos]   (e4) {} (7);
   \begin{pgfonlayer}{background}
    \node[draw=\borderColor,thick,rounded corners,fit = (2) (3) (4) (5) (e1) (e2) (e3) (e4) (e1)] {};
   \end{pgfonlayer}
  \end{tikzpicture}} \label{gadget:binary:5:111100}
 \caption{Binary recursive gadgets}
\end{figure}
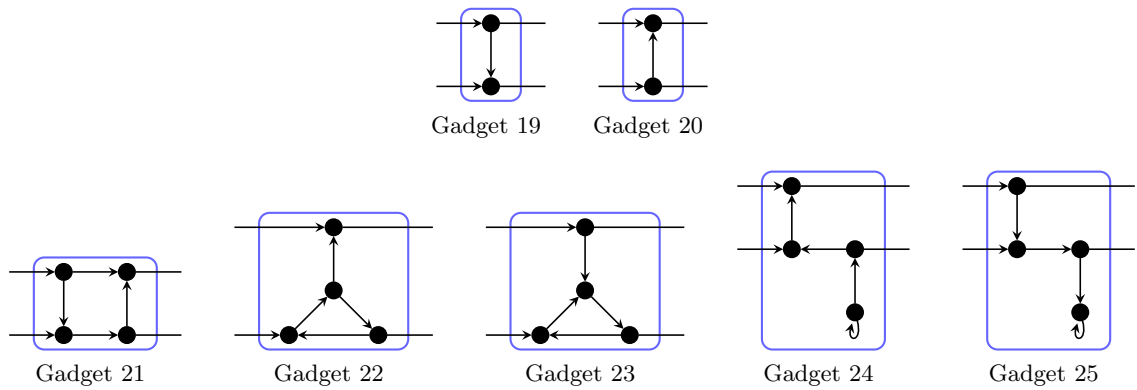

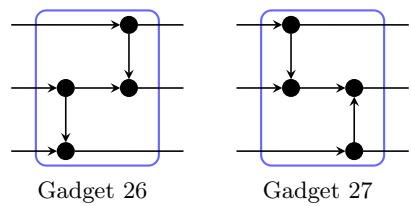
\begin{figure}[ht]
 \centering
 \captionsetup[subfigure]{labelformat=empty}
 \gadgetSubfloat[Gadget \ref{gadget:ternary:0:000}]{
  \begin{tikzpicture}[scale=\scale,transform shape,>=\arrowType,node distance=\nodeDist,semithick]
   \node[external] (0)              {};
   \node[external] (1) [below of=0] {};
   \node[external] (2) [below of=1] {};
   \node[internal] (5) [right of=2] {};
   \node[internal] (6) [above of=5] {};
   \node[internal] (4) [right of=6] {};
   \node[internal] (3) [above of=4] {};
   \node[external] (7) [right of=3] {};
   \node[external] (8) [below of=7] {};
   \node[external] (9) [below of=8] {};
   \path (0) edge [->] node[pos=\startPos] (e1) {} (3)
         (1) edge [->] node[pos=\startPos] (e2) {} (6)
         (2) edge [->] node[pos=\startPos] (e3) {} (5)
         (6) edge [->]                             (5)
             edge [->]                             (4)
         (3) edge [->]                             (4)
         (3) edge [-]  node[pos=\endPos]   (e4) {} (7)
         (4) edge [-]  node[pos=\endPos]   (e5) {} (8)
         (5) edge [-]  node[pos=\endPos]   (e6) {} (9);
   \begin{pgfonlayer}{background}
    \node[draw=\borderColor,thick,rounded corners,fit = (3) (5) (e1) (e2) (e3) (e4) (e5) (e6)] {};
   \end{pgfonlayer}
  \end{tikzpicture}} \label{gadget:ternary:0:000}
 \gadgetSubfloat[Gadget \ref{gadget:ternary:1:001}]{
  \begin{tikzpicture}[scale=\scale,transform shape,>=\arrowType,node distance=\nodeDist,semithick]
   \node[external] (0)              {};
   \node[external] (1) [below of=0] {};
   \node[external] (2) [below of=1] {};
   \node[internal] (3) [right of=0] {};
   \node[internal] (6) [below of=3] {};
   \node[internal] (4) [right of=6] {};
   \node[internal] (5) [below of=4] {};
   \node[external] (9) [right of=5] {};
   \node[external] (8) [above of=9] {};
   \node[external] (7) [above of=8] {};
   \path (0) edge [->] node[pos=\startPos] (e1) {} (3)
         (1) edge [->] node[pos=\startPos] (e2) {} (6)
         (2) edge [->] node[pos=\startPos] (e3) {} (5)
         (3) edge [->]                             (6)
         (6) edge [->]                             (4)
         (5) edge [->]                             (4)
         (3) edge [-]  node[pos=\endPos]   (e4) {} (7)
         (4) edge [-]  node[pos=\endPos]   (e5) {} (8)
         (5) edge [-]  node[pos=\endPos]   (e6) {} (9);
   \begin{pgfonlayer}{background}
    \node[draw=\borderColor,thick,rounded corners,fit = (3) (5) (e1) (e2) (e3) (e4) (e5) (e6)] {};
   \end{pgfonlayer}
  \end{tikzpicture}} \label{gadget:ternary:1:001}
 \caption{Ternary recursive gadgets}
\end{figure}

\begin{figure}[ht]
 \centering
 \captionsetup[subfigure]{labelformat=empty}
 \gadgetSubfloat[Gadget \ref{gadget:finish:0:1}]{
  \begin{tikzpicture}[scale=\scale,transform shape,>=\arrowType,node distance=\nodeDist,semithick]
   \node[external] (0)              {};
   \node[internal] (1) [right of=0] {};
   \node[external] (2) [above right of=1,yshift=-13] {};
   \node[external] (3) [below right of=1,yshift=13]  {};
   \path (0) edge [->]                 node[pos=\startPos] (e1) {} (1)
         (1) edge [-, out=90, in=180]  node[pos=0.4]       (e2) {} (2)
             edge [-, out=-90, in=180] node[pos=0.4]       (e3) {} (3);
   \begin{pgfonlayer}{background}
    \node[draw=\borderColor,thick,rounded corners,fit = (1) (e1) (e2) (e3)] {};
   \end{pgfonlayer}
  \end{tikzpicture}} \label{gadget:finish:0:1}
 \gadgetSubfloat[Gadget \ref{gadget:finish:2:1000}]{
  \begin{tikzpicture}[scale=\scale,transform shape,>=\arrowType,node distance=\nodeDist,semithick]
   \node[external] (0)              {};
   \node[internal] (3) [right of=0] {};
   \node[internal] (1) [above right of=3] {};
   \node[internal] (2) [below right of=3] {};
   \node[external] (4) [right of=1] {};
   \node[external] (5) [right of=2] {};
   \path (0) edge [->] node[pos=\startPos] (e1) {} (3)
         (1) edge [<-]                             (2)
             edge [<-]                             (3)
         (2) edge [<-]                             (3)
         (1) edge [-]  node[pos=\endPos]   (e2) {} (4)
         (2) edge [-]  node[pos=\endPos]   (e3) {} (5);
   \begin{pgfonlayer}{background}
    \node[draw=\borderColor,thick,rounded corners,fit = (1) (2) (3) (e1) (e2) (e3)] {};
   \end{pgfonlayer}
  \end{tikzpicture}} \label{gadget:finish:2:1000}
 \gadgetSubfloat[Gadget \ref{gadget:finish:2:1010}]{
  \begin{tikzpicture}[scale=\scale,transform shape,>=\arrowType,node distance=\nodeDist,semithick]
   \node[external] (0)              {};
   \node[internal] (3) [right of=0] {};
   \node[internal] (1) [above right of=3] {};
   \node[internal] (2) [below right of=3] {};
   \node[external] (4) [right of=1] {};
   \node[external] (5) [right of=2] {};
   \path (0) edge [->] node[pos=\startPos] (e1) {} (3)
         (1) edge [<-]                             (2)
             edge [->]                             (3)
         (2) edge [<-]                             (3)
         (1) edge [-]  node[pos=\endPos]   (e2) {} (4)
         (2) edge [-]  node[pos=\endPos]   (e3) {} (5);
   \begin{pgfonlayer}{background}
    \node[draw=\borderColor,thick,rounded corners,fit = (1) (2) (3) (e1) (e2) (e3)] {};
   \end{pgfonlayer}
  \end{tikzpicture}} \label{gadget:finish:2:1010}
 \gadgetSubfloat[Gadget \ref{gadget:finish:2:1011}]{
  \begin{tikzpicture}[scale=\scale,transform shape,>=\arrowType,node distance=\nodeDist,semithick]
   \node[external] (0)              {};
   \node[internal] (3) [right of=0] {};
   \node[internal] (1) [above right of=3] {};
   \node[internal] (2) [below right of=3] {};
   \node[external] (4) [right of=1] {};
   \node[external] (5) [right of=2] {};
   \path (0) edge [->] node[pos=\startPos] (e1) {} (3)
         (1) edge [<-]                             (2)
             edge [->]                             (3)
         (2) edge [->]                             (3)
         (1) edge [-]  node[pos=\endPos]   (e2) {} (4)
         (2) edge [-]  node[pos=\endPos]   (e3) {} (5);
   \begin{pgfonlayer}{background}
    \node[draw=\borderColor,thick,rounded corners,fit = (1) (2) (3) (e1) (e2) (e3)] {};
   \end{pgfonlayer}
  \end{tikzpicture}} \label{gadget:finish:2:1011}
 \gadgetSubfloat[Gadget \ref{gadget:finish:3:1000}]{
  \begin{tikzpicture}[scale=\scale,transform shape,>=\arrowType,node distance=\nodeDist,semithick]
   \node[external] (0)              {};
   \node[internal] (1) [right of=0] {};
   \node[internal] (3) [below right of=1] {};
   \node[internal] (2) [right of=3] {};
   \node[external] (5) [right of=2] {};
   \node[external] (a) [above right of=1] {};
   \node[external] (b) [right of=a] {};
   \node[external] (4) [right of=b] {};
   \path (0) edge [->]                               node[pos=\startPos] (e1) {} (1)
         (1) edge [<-, bend right]                                               (3)
         (2) edge [<-, bend left]                    node[midway]        (e4) {} (3)
             edge [<-, bend right]                                               (3)
         (1) edge [-, out=90, in=180, looseness=0.6] node[pos=\endPos]   (e2) {} (4)
         (2) edge [-]                                node[pos=\endPos]   (e3) {} (5);
   \begin{pgfonlayer}{background}
    \node[draw=\borderColor,thick,rounded corners,fit = (1) (2) (3) (a) (b) (e1) (e2) (e3) (e4)] {};
   \end{pgfonlayer}
  \end{tikzpicture}} \label{gadget:finish:3:1000}\\
 \gadgetSubfloat[Gadget \ref{gadget:finish:3:1010}]{
  \begin{tikzpicture}[scale=\scale,transform shape,>=\arrowType,node distance=\nodeDist,semithick]
   \node[external] (0)              {};
   \node[internal] (1) [right of=0] {};
   \node[internal] (3) [below right of=1] {};
   \node[internal] (2) [right of=3] {};
   \node[external] (5) [right of=2] {};
   \node[external] (a) [above right of=1] {};
   \node[external] (b) [right of=a] {};
   \node[external] (4) [right of=b] {};
   \path (0) edge [->]                               node[pos=\startPos] (e1) {} (1)
         (1) edge [<-, bend right]                                               (3)
         (2) edge [->, bend left]                    node[midway]        (e4) {} (3)
             edge [<-, bend right]                                               (3)
         (1) edge [-, out=90, in=180, looseness=0.6] node[pos=\endPos]   (e2) {} (4)
         (2) edge [-]                                node[pos=\endPos]   (e3) {} (5);
   \begin{pgfonlayer}{background}
    \node[draw=\borderColor,thick,rounded corners,fit = (1) (2) (3) (a) (b) (e1) (e2) (e3) (e4)] {};
   \end{pgfonlayer}
  \end{tikzpicture}} \label{gadget:finish:3:1010}
 \gadgetSubfloat[Gadget \ref{gadget:finish:14:1001010}]{
  \begin{tikzpicture}[scale=\scale,transform shape,>=\arrowType,node distance=\nodeDist,semithick]
   \node[external] (0)              {};
   \node[internal] (5) [right of=0] {};
   \node[internal] (3) [below right of=5] {};
   \node[internal] (4) [above right of=3] {};
   \node[internal] (2) [below right of=4] {};
   \node[internal] (1) [above of=4] {};
   \node[external] (7) [right of=2] {};
   \path let
          \p1 = (7),
          \p2 = (1)
         in
          node[external] (6) at (\x1,\y2) {};
   \path (0) edge [->]                 node[pos=\startPos] (e1) {} (5)
         (1) edge [<-]                                             (4)
             edge [<-, out=180, in=90]                             (5)
         (2) edge [->]                                             (3)
             edge [<-]                                             (4)
         (3) edge [->]                                             (4)
             edge [<-, out=180, in=-90]                            (5)
         (1) edge [-]                   node[pos=\endPos]  (e2) {} (6)
         (2) edge [-]                   node[pos=\endPos]  (e3) {} (7);
   \begin{pgfonlayer}{background}
    \node[draw=\borderColor,thick,rounded corners,fit = (1) (2) (3) (4) (5) (e1) (e2) (e3)] {};
   \end{pgfonlayer}
  \end{tikzpicture}} \label{gadget:finish:14:1001010}
 \gadgetSubfloat[Gadget \ref{gadget:finish:13:1000000}]{
  \begin{tikzpicture}[scale=\scale,transform shape,>=\arrowType,node distance=\nodeDist,semithick]
   \node[external] (0)              {};
   \node[internal] (5) [right of=0] {};
   \node[internal] (1) [above right of=5] {};
   \node[internal] (2) [below right of=5] {};
   \node[internal] (4) [above of=1] {};
   \node[internal] (3) [below of=2] {};
   \node[external] (6) [right of=1] {};
   \node[external] (7) [right of=2] {};
   \path (0) edge [->]             node[pos=\startPos] (e1) {} (5)
         (1) edge [<-]                                         (4)
             edge [<-, bend right]                             (5)
         (2) edge [<-]                                         (3)
             edge [<-, bend left]                              (5)
         (3) edge [<-, loop below] coordinate          (c1)    (3)
         (4) edge [<-, loop above] coordinate          (c2)    (4)
         (1) edge [-]              node[pos=\endPos]   (e2) {} (6)
         (2) edge [-]              node[pos=\endPos]   (e3) {} (7);
   \begin{pgfonlayer}{background}
    \node[draw=\borderColor,thick,rounded corners,fit = (1) (2) (3) (4) (5) (e1) (e2) (e3) (c1) (c2)] {};
   \end{pgfonlayer}
  \end{tikzpicture}} \label{gadget:finish:13:1000000}
 \gadgetSubfloat[Gadget \ref{gadget:finish:13:1000100}]{
  \begin{tikzpicture}[scale=\scale,transform shape,>=\arrowType,node distance=\nodeDist,semithick]
   \node[external] (0)              {};
   \node[internal] (5) [right of=0] {};
   \node[internal] (1) [above right of=5] {};
   \node[internal] (2) [below right of=5] {};
   \node[internal] (4) [above of=1] {};
   \node[internal] (3) [below of=2] {};
   \node[external] (6) [right of=1] {};
   \node[external] (7) [right of=2] {};
   \path (0) edge [->]             node[pos=\startPos] (e1) {} (5)
         (1) edge [<-]                                         (4)
             edge [<-, bend right]                             (5)
         (2) edge [<-]                                         (3)
             edge [->, bend left]                              (5)
         (3) edge [<-, loop below] coordinate          (c1)    (3)
         (4) edge [<-, loop above] coordinate          (c2)    (4)
         (1) edge [-]              node[pos=\endPos]   (e2) {} (6)
         (2) edge [-]              node[pos=\endPos]   (e3) {} (7);
   \begin{pgfonlayer}{background}
    \node[draw=\borderColor,thick,rounded corners,fit = (1) (2) (3) (4) (5) (e1) (e2) (e3) (c1) (c2)] {};
   \end{pgfonlayer}
  \end{tikzpicture}} \label{gadget:finish:13:1000100}
 \gadgetSubfloat[Gadget \ref{gadget:finish:13:1010100}]{
  \begin{tikzpicture}[scale=\scale,transform shape,>=\arrowType,node distance=\nodeDist,semithick]
   \node[external] (0)              {};
   \node[internal] (5) [right of=0] {};
   \node[internal] (1) [above right of=5] {};
   \node[internal] (2) [below right of=5] {};
   \node[internal] (4) [above of=1] {};
   \node[internal] (3) [below of=2] {};
   \node[external] (6) [right of=1] {};
   \node[external] (7) [right of=2] {};
   \path (0) edge [->]             node[pos=\startPos] (e1) {} (5)
         (1) edge [<-]                                         (4)
             edge [->, bend right]                             (5)
         (2) edge [<-]                                         (3)
             edge [->, bend left]                              (5)
         (3) edge [<-, loop below] coordinate          (c1)    (3)
         (4) edge [<-, loop above] coordinate          (c2)    (4)
         (1) edge [-]              node[pos=\endPos]   (e2) {} (6)
         (2) edge [-]              node[pos=\endPos]   (e3) {} (7);
   \begin{pgfonlayer}{background}
    \node[draw=\borderColor,thick,rounded corners,fit = (1) (2) (3) (4) (5) (e1) (e2) (e3) (c1) (c2)] {};
   \end{pgfonlayer}
  \end{tikzpicture}} \label{gadget:finish:13:1010100}\\
 \gadgetSubfloat[Gadget \ref{gadget:finish:19:1000110}]{
  \begin{tikzpicture}[scale=\scale,transform shape,>=\arrowType,node distance=\nodeDist,semithick]
   \node[external] (0)              {};
   \node[internal] (5) [right of=0] {};
   \node[internal] (4) [above right of=5] {};
   \node[internal] (3) [below right of=5] {};
   \node[internal] (1) [right of=4] {};
   \node[internal] (2) [right of=3] {};
   \node[external] (6) [right of=1] {};
   \node[external] (7) [right of=2] {};
   \path (0) edge [->]             node[pos=\startPos] (e1) {} (5)
         (1) edge [<-, bend right] node[midway]        (e4) {} (4)
             edge [<-, bend left]                              (4)
         (2) edge [<-, bend right]                             (3)
             edge [->, bend left]  node[midway]        (e5) {} (3)
         (3) edge [->, bend left]                              (5)
         (4) edge [<-, bend right]                             (5)
         (1) edge [-]              node[pos=\endPos]   (e2) {} (6)
         (2) edge [-]              node[pos=\endPos]   (e3) {} (7);
   \begin{pgfonlayer}{background}
    \node[draw=\borderColor,thick,rounded corners,fit = (1) (2) (3) (4) (5) (e1) (e2) (e3) (e4) (e5)] {};
   \end{pgfonlayer}
  \end{tikzpicture}} \label{gadget:finish:19:1000110}
 \gadgetSubfloat[Gadget \ref{gadget:finish:19:1010100}]{
  \begin{tikzpicture}[scale=\scale,transform shape,>=\arrowType,node distance=\nodeDist,semithick]
   \node[external] (0)              {};
   \node[internal] (5) [right of=0] {};
   \node[internal] (4) [above right of=5] {};
   \node[internal] (3) [below right of=5] {};
   \node[internal] (1) [right of=4] {};
   \node[internal] (2) [right of=3] {};
   \node[external] (6) [right of=1] {};
   \node[external] (7) [right of=2] {};
   \path (0) edge [->]             node[pos=\startPos] (e1) {} (5)
         (1) edge [<-, bend right] node[midway]        (e4) {} (4)
             edge [->, bend left]                              (4)
         (2) edge [<-, bend right]                             (3)
             edge [->, bend left]  node[midway]        (e5) {} (3)
         (3) edge [<-, bend left]                              (5)
         (4) edge [<-, bend right]                             (5)
         (1) edge [-]              node[pos=\endPos]   (e2) {} (6)
         (2) edge [-]              node[pos=\endPos]   (e3) {} (7);
   \begin{pgfonlayer}{background}
    \node[draw=\borderColor,thick,rounded corners,fit = (1) (2) (3) (4) (5) (e1) (e2) (e3) (e4) (e5)] {};
   \end{pgfonlayer}
  \end{tikzpicture}} \label{gadget:finish:19:1010100}
 \gadgetSubfloat[Gadget \ref{gadget:finish:19:1010111}]{
  \begin{tikzpicture}[scale=\scale,transform shape,>=\arrowType,node distance=\nodeDist,semithick]
   \node[external] (0)              {};
   \node[internal] (5) [right of=0] {};
   \node[internal] (4) [above right of=5] {};
   \node[internal] (3) [below right of=5] {};
   \node[internal] (1) [right of=4] {};
   \node[internal] (2) [right of=3] {};
   \node[external] (6) [right of=1] {};
   \node[external] (7) [right of=2] {};
   \path (0) edge [->]             node[pos=\startPos] (e1) {} (5)
         (1) edge [<-, bend right] node[midway]        (e4) {} (4)
             edge [->, bend left]                              (4)
         (2) edge [<-, bend right]                             (3)
             edge [->, bend left]  node[midway]        (e5) {} (3)
         (3) edge [->, bend left]                              (5)
         (4) edge [->, bend right]                             (5)
         (1) edge [-]              node[pos=\endPos]   (e2) {} (6)
         (2) edge [-]              node[pos=\endPos]   (e3) {} (7);
   \begin{pgfonlayer}{background}
    \node[draw=\borderColor,thick,rounded corners,fit = (1) (2) (3) (4) (5) (e1) (e2) (e3) (e4) (e5)] {};
   \end{pgfonlayer}
  \end{tikzpicture}} \label{gadget:finish:19:1010111}
 \gadgetSubfloat[Gadget \ref{gadget:finish:27:1000010}]{
  \begin{tikzpicture}[scale=\scale,transform shape,>=\arrowType,node distance=\nodeDist,semithick]
   \node[external] (0)              {};
   \node[internal] (5) [right of=0] {};
   \node[internal] (4) [above right of=5] {};
   \node[internal] (3) [below right of=5] {};
   \node[internal] (1) [right of=4] {};
   \node[internal] (2) [right of=3] {};
   \node[external] (6) [right of=1] {};
   \node[external] (7) [right of=2] {};
   \path (0) edge [->]             node[pos=\startPos] (e1) {} (5)
         (1) edge [<-]                                         (2)
             edge [<-]                                         (4)
         (2) edge [<-]                                         (3)
         (3) edge [<-]                                         (4)
             edge [->, bend left]                              (5)
         (4) edge [<-, bend right]                             (5)
         (1) edge [-]              node[pos=\endPos]   (e2) {} (6)
         (2) edge [-]              node[pos=\endPos]   (e3) {} (7);
   \begin{pgfonlayer}{background}
    \node[draw=\borderColor,thick,rounded corners,fit = (1) (2) (3) (4) (5) (e1) (e2) (e3)] {};
   \end{pgfonlayer}
  \end{tikzpicture}} \label{gadget:finish:27:1000010}\\
 \gadgetSubfloat[Gadget \ref{gadget:finish:27:1010011}]{
  \begin{tikzpicture}[scale=\scale,transform shape,>=\arrowType,node distance=\nodeDist,semithick]
   \node[external] (0)              {};
   \node[internal] (5) [right of=0] {};
   \node[internal] (4) [above right of=5] {};
   \node[internal] (3) [below right of=5] {};
   \node[internal] (1) [right of=4] {};
   \node[internal] (2) [right of=3] {};
   \node[external] (6) [right of=1] {};
   \node[external] (7) [right of=2] {};
   \path (0) edge [->]             node[pos=\startPos] (e1) {} (5)
         (1) edge [<-]                                         (2)
             edge [->]                                         (4)
         (2) edge [<-]                                         (3)
         (3) edge [<-]                                         (4)
             edge [->, bend left]                              (5)
         (4) edge [->, bend right]                             (5)
         (1) edge [-]              node[pos=\endPos]   (e2) {} (6)
         (2) edge [-]              node[pos=\endPos]   (e3) {} (7);
   \begin{pgfonlayer}{background}
    \node[draw=\borderColor,thick,rounded corners,fit = (1) (2) (3) (4) (5) (e1) (e2) (e3)] {};
   \end{pgfonlayer}
  \end{tikzpicture}} \label{gadget:finish:27:1010011}
 \caption{Projector gadgets from arity 2 to 1}
\end{figure}
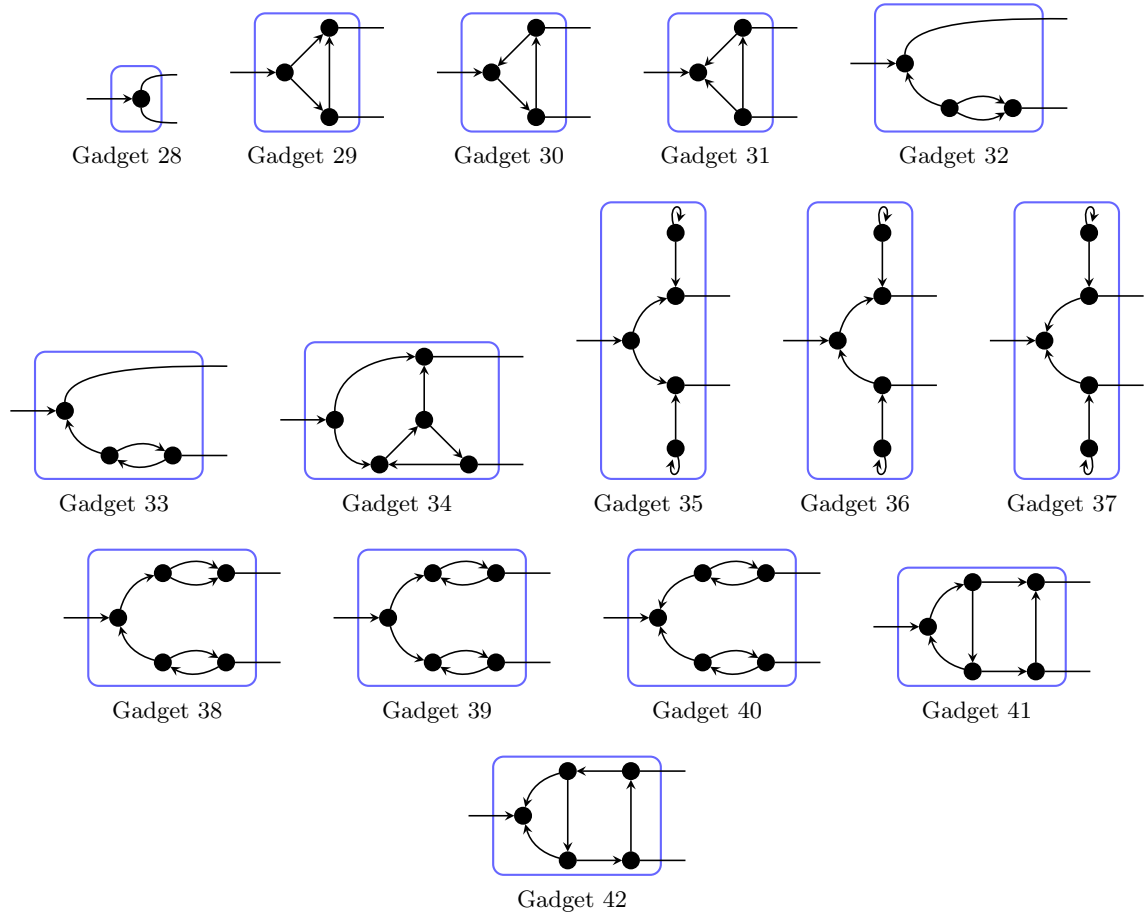
\end{document}